%% Modified from bare_conf.tex on 06-12-04
%% V1.2
%% 2002/11/18
%% by Michael Shell
%% mshell@ece.gatech.edu
%%
%
%\documentclass[10pt, conference]{IEEEtran}
%\documentclass[12pt, draftcls, conference,onecolumn]{IEEEtran}
\documentclass[letterpaper,11pt]{article}

%\IEEEoverridecommandlockouts
% If the IEEEtran.cls has not been installed into the LaTeX system files,
% manually specify the path to it:
% \documentclass[conference]{IEEEtran}
%

\usepackage[margin=1in]{geometry}  % set the margins to 1in on all sides
\usepackage{xspace,exscale,relsize}
\usepackage{fancybox,shadow}
\usepackage{graphicx}
\usepackage{color}
% this is for \mathfrak & better working \mathbb
\usepackage{amsthm,amsfonts}
% this is for \gtrsim, \lesssim symbols (redefined as \simleq, \simgeq in my_style.sty)
\usepackage{amssymb}
\usepackage{authblk}
\usepackage[title]{appendix}

%automatic sorting/compression of citations
% noadjust -- to prevent automatic adding of ~ before \cite{}
\usepackage[noadjust]{cite}

% amsmath is for align, multline env's.
%  -- cmex10 option directs TeX to use rescaled cmex10 for cmex7-9 fonts.
%  	This option is highly recommended by IEEEtran because older systems
%  	did not include Type1 font definitions for cmex7-9 fonts. However, on my
%  	systems it seems that DVIPS is instructed by 
%  	/usr/share/texmf-texlive/fonts/map/dvips/cmex/ttcmex.map:
%  		cmex7 TeX-cmex7 <fmex7.pfb
%		cmex8 TeX-cmex8 <fmex8.pfb
%		cmex9 TeX-cmex9 <fmex9.pfb
%	Thus, I _DO NOT_ need cmex10 option
%\usepackage[cmex10]{amsmath}
\usepackage{amsmath}
	% AMSMath was so ample as to prohibit usage of \over, so we undo it
	\makeatletter
	\let\over=\@@over \let\overwithdelims=\@@overwithdelims
	\let\atop=\@@atop \let\atopwithdelims=\@@atopwithdelims
  	\let\above=\@@above \let\abovewithdelims=\@@abovewithdelims
  	\makeatother
% Also, note that the amsmath package sets \interdisplaylinepenalty to 10000
% thus preventing page breaks from occurring within multiline equations. Use:
\interdisplaylinepenalty=10000
% after loading amsmath to restore such page breaks as IEEEtran.cls normally
% does. 

\usepackage{fixltx2e}
% fixltx2e, the successor to the earlier fix2col.sty, was written by
% Frank Mittelbach and David Carlisle. This package corrects a few problems
% in the LaTeX2e kernel, the most notable of which is that in current
% LaTeX2e releases, the ordering of single and double column floats is not
% guaranteed to be preserved. Thus, an unpatched LaTeX2e can allow a
% single column figure to be placed prior to an earlier double column
% figure. The latest version and documentation can be found at:
% http://www.ctan.org/tex-archive/macros/latex/base/

\usepackage{rotating}

%\usepackage{stfloats}
% stfloats.sty was written by Sigitas Tolusis. This package gives LaTeX2e
% the ability to do double column floats at the bottom of the page as well
% as the top. (e.g., "\begin{figure*}[!b]" is not normally possible in
% LaTeX2e). It also provides a command:
%\fnbelowfloat
% to enable the placement of footnotes below bottom floats (the standard
% LaTeX2e kernel puts them above bottom floats). This is an invasive package
% which rewrites many portions of the LaTeX2e float routines. It may not work
% with other packages that modify the LaTeX2e float routines. The latest
% version and documentation can be obtained at:
% http://www.ctan.org/tex-archive/macros/latex/contrib/sttools/
% Documentation is contained in the stfloats.sty comments as well as in the
% presfull.pdf file. Do not use the stfloats baselinefloat ability as IEEE
% does not allow \baselineskip to stretch. Authors submitting work to the
% IEEE should note that IEEE rarely uses double column equations and
% that authors should try to avoid such use. Do not be tempted to use the
% cuted.sty or midfloat.sty packages (also by Sigitas Tolusis) as IEEE does
% not format its papers in such ways.

% correct bad hyphenation here
%\hyphenation{op-tical net-works semi-conduc-tor}

%
\usepackage{ifpdf}

\usepackage{subfigure}
\usepackage{psfrag}

\usepackage{prettyref,enumerate}

\usepackage{tikz}
\usetikzlibrary{arrows}
\tikzstyle{int}=[draw, fill=blue!20, minimum size=2em]
\tikzstyle{dot}=[circle, draw, fill=blue!20, minimum size=2em]
\tikzstyle{init} = [pin edge={to-,thin,black}]

\usepackage[%dvips,
            CJKbookmarks=true,
            bookmarksnumbered=true,
            bookmarksopen=true,
%						bookmarks=false,
            colorlinks=true,
            citecolor=red,
            linkcolor=blue,
            anchorcolor=red,
            urlcolor=blue,
            pdfauthor={Polyanskiy and Wu}
            ]{hyperref}

\usepackage[all]{xy}

%%%%%%%%%%%%%%%%%%%%%%%%%%%%%%%%%%%%%%%%%%%%%%%%%%%%%%%%%%%%%%%%%%
%%%%%%%%%%%%%%%%%%%%%%%%%%%%%%%%%%%%%%%%%%%%%%%%%%%%%%%%%%%%%%%%%%
%%%%                                                          %%%%
%%%%                      Local defines                       %%%%
%%%%                                                          %%%%
%%%%%%%%%%%%%%%%%%%%%%%%%%%%%%%%%%%%%%%%%%%%%%%%%%%%%%%%%%%%%%%%%%
%%%%%%%%%%%%%%%%%%%%%%%%%%%%%%%%%%%%%%%%%%%%%%%%%%%%%%%%%%%%%%%%%%

\newcommand{\matx}{\ensuremath{\mathcal{X}}}

\newcommand{\mate}{\ensuremath{\mathcal{E}}}

\newcommand{\maty}{\ensuremath{\mathcal{Y}}}
\newcommand{\matn}{\ensuremath{\mathcal{N}}}

\newcommand{\matg}{\ensuremath{\mathcal{G}}}

 %%SV-apr26
 %%SV-apr26

\newcommand{\mreals}{\ensuremath{\mathbb{R}}}

\ifx\eqref\undefined
	\newcommand{\eqref}[1]{~(\ref{#1})}
\fi
\ifx\mod\undefined
	\def\mod{\mathop{\rm mod}}
\fi

\newcommand{\vect}[1]{{\bf #1}}

\def\esssup{\mathop{\rm esssup}}
\def\argmin{\mathop{\rm argmin}}

\def\exp{\mathop{\rm exp}}

\def\EE{\Expect}
\DeclareMathOperator\sign{\rm sign}

\def\Var{\mathrm{Var}}

\def\PP{\mathbb{P}}
\def\QQ{\mathbb{Q}}

\def\eqdef{\triangleq}

\def\upto{\nearrow}
\def\downto{\searrow}

%%%%%%%%%%%%%%%%%%%%%%%%%%%% by Wu %%%%%%%%%%%%%%%%%%%%%%%%%%%%
\newcommand{\thetalb}{\theta_{lb}}
\newcommand{\thetac}{\theta_{c}}
\newcommand{\etaKL}{\eta_{\rm KL}}
\newcommand{\etachi}{\eta_{\chi^2}}
\newcommand{\etaTV}{\eta_{\rm TV}}

\newcommand{\FTV}{{F_{\TV}}}
\newcommand{\maxcor}{\mathrm{maxcor}}
\newcommand{\Unif}{\mathrm{Uniform}}

\newcommand{\hW}{\hat{W}}
\newcommand{\cost}{\sfM}
\newcommand{\costi}{\cost^{-1}}
\newcommand{\floor}[1]{{\left\lfloor {#1} \right \rfloor}}

\newcommand{\reals}{\mathbb{R}}
\newcommand{\naturals}{\mathbb{N}}
\newcommand{\integers}{\mathbb{Z}}
\newcommand{\Expect}{\mathbb{E}}
\newcommand{\expect}[1]{\mathbb{E}\left[#1\right]}
\newcommand{\Prob}{\mathbb{P}}
\newcommand{\prob}[1]{\mathbb{P}\left[#1\right]}
\newcommand{\pprob}[1]{\mathbb{P}[#1]}

\newcommand{\TV}{{\sf TV}}

\newcommand{\eexp}{{\rm e}}

\newcommand{\diff}{{\rm d}}
\newcommand{\eg}{e.g.\xspace}
\newcommand{\ie}{i.e.\xspace}
\newcommand{\iid}{i.i.d.\xspace}

\newcommand{\fracd}[2]{\frac{\diff #1}{\diff #2}}

%% parenthesis
\newcommand{\pth}[1]{\left( #1 \right)}
\newcommand{\qth}[1]{\left[ #1 \right]}
\newcommand{\sth}[1]{\left\{ #1 \right\}}
\newcommand{\bpth}[1]{\Bigg( #1 \Bigg)}

\newcommand{\iiddistr}{{\stackrel{\text{\iid}}{\sim}}}
\newcommand{\var}{\mathsf{var}}
\newcommand\indep{\protect\mathpalette{\protect\independenT}{\perp}}
\def\independenT#1#2{\mathrel{\rlap{$#1#2$}\mkern2mu{#1#2}}}
\newcommand{\Bern}{\text{Bern}}

\newcommand{\indc}[1]{{\mathbf{1}_{\left\{{#1}\right\}}}}
\newcommand{\Indc}{\mathbf{1}}

\definecolor{myblue}{rgb}{.8, .8, 1}
\definecolor{mathblue}{rgb}{0.2472, 0.24, 0.6} % mathematica's Color[1, 1--3]
\definecolor{mathred}{rgb}{0.6, 0.24, 0.442893}
\definecolor{mathyellow}{rgb}{0.6, 0.547014, 0.24}

\newcommand{\tX}{{\tilde{X}}}

\newcommand{\sfM}{{\mathsf{M}}}

\newcommand{\sfQ}{{\mathsf{Q}}}

\newcommand{\calF}{{\mathcal{F}}}
\newcommand{\calG}{{\mathcal{G}}}

\newcommand{\calN}{{\mathcal{N}}}

\newcommand{\calW}{{\mathcal{W}}}
\newcommand{\calX}{{\mathcal{X}}}

\newcommand{\hX}{\hat{X}}

\newcommand{\Th}{{^{\rm th}}}
\newcommand{\diverge}{\to \infty}

% for prettyref
\newrefformat{eq}{(\ref{#1})}
\newrefformat{thm}{Theorem~\ref{#1}}
\newrefformat{th}{Theorem~\ref{#1}}
\newrefformat{chap}{Chapter~\ref{#1}}
\newrefformat{sec}{Section~\ref{#1}}
\newrefformat{algo}{Algorithm~\ref{#1}}
\newrefformat{fig}{Fig.~\ref{#1}}
\newrefformat{tab}{Table~\ref{#1}}
\newrefformat{rmk}{Remark~\ref{#1}}
\newrefformat{clm}{Claim~\ref{#1}}
\newrefformat{def}{Definition~\ref{#1}}
\newrefformat{cor}{Corollary~\ref{#1}}
\newrefformat{lmm}{Lemma~\ref{#1}}
\newrefformat{prop}{Proposition~\ref{#1}}
\newrefformat{pr}{Proposition~\ref{#1}}
\newrefformat{app}{Appendix~\ref{#1}}
\newrefformat{apx}{Appendix~\ref{#1}}
\newrefformat{ex}{Example~\ref{#1}}
\newrefformat{exer}{Exercise~\ref{#1}}
\newrefformat{soln}{Solution~\ref{#1}}
%%%%%%%%%%%%%%%%%%%%%%%%%%%% by Wu %%%%%%%%%%%%%%%%%%%%%%%%%%%%

\def\unifto{\mathop{{\mskip 3mu plus 2mu minus 1mu%
	\setbox0=\hbox{$\mathchar"3221$}%
	\raise.6ex\copy0\kern-\wd0%
	\lower0.5ex\hbox{$\mathchar"3221$}}\mskip 3mu plus 2mu minus 1mu}}

%% These two are defined in amssymb package
\ifx\lesssim\undefined
\def\simleq{{{\mskip 3mu plus 2mu minus 1mu%
	\setbox0=\hbox{$\mathchar"013C$}%
	\raise.2ex\copy0\kern-\wd0%
	\lower0.9ex\hbox{$\mathchar"0218$}}\mskip 3mu plus 2mu minus 1mu}}
\else
\def\simleq{\lesssim}
\fi

\ifx\gtrsim\undefined
\def\simgeq{{{\mskip 3mu plus 2mu minus 1mu%
	\setbox0=\hbox{$\mathchar"013E$}%
	\raise.2ex\copy0\kern-\wd0%
	\lower0.9ex\hbox{$\mathchar"0218$}}\mskip 3mu plus 2mu minus 1mu}}
\else
\def\simgeq{\gtrsim}
\fi

% beautiful system of 2 alternatives preceded by {

\def\dperp{\perp\!\!\!\perp}
% Macro to show off newstuff to authors

\newtheorem{theorem}{Theorem}
\newtheorem{lemma}[theorem]{Lemma}
\newtheorem{corollary}[theorem]{Corollary}

\newtheorem{proposition}[theorem]{Proposition}
\newtheorem{prop}[theorem]{Proposition}

\theoremstyle{definition}

\newtheorem{remark}{Remark}

%%%%%%%%%%%%%%%%%%%%%%%%%%%%%%%%%%%%%%%%%%%%%%%%%%%%%%%%%%%%%%%%
%%%%%%%%%%%% CONDITIONAL COMPILATION TRICK %%%%%%%%%%%%%%%%%%%%%
%%%%%%%%%%%% Usage: \ifmapx TEXT \fi 
%%%%%%%%%%%% The "TEXT" will only appear if *_apx.tex is compiled   
%%%%%%%%%%%%%%%%%%%%%%%%%%%%%%%%%%%%%%%%%%%%%%%%%%%%%%%%%%%%%%%%
%
% Ok, all this pain is to produce the name "xxx_apx" with characters
% of category 12, since this is what jobname expands to. What I used is the
% trick that \string\abc produces a sequence of tokens (\,a,b,c) all of category 12, 
% so I only need to kill the front \_12 which is what mkillslash macro doing.
%
%%%%%%%%%%%%
\newif\ifmapx
{\catcode`/=0 \catcode`\\=12/gdef/mkillslash\#1{#1}}
\edef\jobnametmp{\expandafter\string\csname contraction_apx\endcsname}
\edef\jobnameapx{\expandafter\mkillslash\jobnametmp}
\edef\jobnameexpand{\jobname}
\ifx\jobnameexpand\jobnameapx
\mapxtrue
\else
\mapxfalse
\fi

\long\def\apxonly#1{\ifmapx{\color{blue}#1}\fi}

%%%%%%%%%%%%%%%%%%%%%%%%%%%%%%%%%%%%%%%%%%%%%%%%%%%%%%%%%%%%%%%%%%
%%%%%%%%%%%%%%%%%%%%%%%%%%%%%%%%%%%%%%%%%%%%%%%%%%%%%%%%%%%%%%%%%%
%%%%                                                          %%%%
%%%%                Document begins here                      %%%%
%%%%                                                          %%%%
%%%%%%%%%%%%%%%%%%%%%%%%%%%%%%%%%%%%%%%%%%%%%%%%%%%%%%%%%%%%%%%%%%
%%%%%%%%%%%%%%%%%%%%%%%%%%%%%%%%%%%%%%%%%%%%%%%%%%%%%%%%%%%%%%%%%%

\begin{document}
\ifpdf
\DeclareGraphicsExtensions{.pgf}
\graphicspath{{figures/}{plots/}}
\fi

% paper title
%\title{Propagation of information measures in Gaussian channels\\
%Or: Asymptotic decoupling in cascaded Gaussian channels\\
%Or: Data processing curve of Gaussian channels, with applications to optimal control\\
%Or: Information lost in a cascade of noisy channels}
\title{Dissipation of information in channels with input constraints}

\author{Yury Polyanskiy and Yihong Wu\thanks{Y.P. is with the Department of EECS, MIT, Cambridge, MA, \url{yp@mit.edu}. Y.W. is with
the Department of ECE, University of Illinois at Urbana-Champaign, Urbana, IL, \url{yihongwu@illinois.edu}. The research of Y.P. has been supported by the Center for Science of Information (CSoI),
an NSF Science and Technology Center, under grant agreement CCF-09-39370 and by the NSF CAREER award under grant
agreement CCF-12-53205. 
The research of Y.W. has been supported in part by NSF grants IIS-1447879 and CCF-1423088.
}}

%\affil{MIT, UIUC}

% make the title area
\maketitle

\begin{abstract}
One of the basic tenets in information theory, the data processing inequality states that output divergence does not
exceed the input divergence for any channel.  For channels without input constraints, various estimates on the amount of
such contraction are known, Dobrushin's coefficient for the total variation being perhaps the most well-known. This work
investigates channels with average input cost constraint. It is found that while the contraction coefficient typically
equals one (no contraction), the information nevertheless dissipates. A certain non-linear function, the \emph{Dobrushin
curve} of the channel, is proposed to quantify the amount of dissipation. Tools for evaluating the Dobrushin curve of
additive-noise channels are developed based on coupling arguments. Some basic applications in 
stochastic control, uniqueness of Gibbs measures and fundamental limits of noisy circuits are discussed.

As an application, it shown that in the chain of $n$ power-constrained relays and Gaussian channels the end-to-end mutual information and maximal squared correlation decay as $\Theta(\frac{\log\log n}{\log n})$, which is in stark contrast with the exponential decay in chains of discrete channels. Similarly, the behavior of noisy circuits (composed of gates with bounded fan-in) and broadcasting of information on trees (of bounded degree)
does not experience threshold behavior in the signal-to-noise ratio (SNR). Namely, unlike the case of discrete channels, the
probability of bit error stays bounded away from $1\over 2$ regardless of the SNR.
\end{abstract}

\tableofcontents

% Note that keywords are not normally used for peerreview papers.
%\begin{IEEEkeywords}
%Keywords here
%\end{IEEEkeywords}

% For peer review papers, you can put extra information on the cover
% page as needed:
% \ifCLASSOPTIONpeerreview
% \begin{center} \bfseries EDICS Category: 3-BBND \end{center}
% \fi
%
% For peerreview papers, this IEEEtran command inserts a page break and
% creates the second title. It will be ignored for other modes.
%\IEEEpeerreviewmaketitle

\section{Introduction}

Consider the following Markov chain
\begin{equation}\label{eq:mc1}
		W \to X_1 \to Y_1 \to X_2 \to Y_2 \to \cdots \to X_n \to Y_n\,,
\end{equation}
where the random variable $W$ is the original message (which is to be estimated on the basis of $Y_n$ only), each $P_{Y_j|X_j}$ is a
standard vector-Gaussian channel of dimension $d$:
\begin{equation}\label{eq:mc2}
		P_{Y_j|X_j=\vect x} = \matn(\vect x, \vect I_d) 
\end{equation}	
	and each input $X_j$ satisfies a power constraint:
\begin{equation}\label{eq:mc3}
			\EE[\|X_j\|^2] \le d E\,.
\end{equation}	
The goal is to design the transition kernels $P_{X_{j+1}|Y_j}$, which we refer to as processors or encoders, to
facilitate the estimation of $W$ at the end of the chain. See Fig.~\ref{fig:control} for an illustration.

Intuitively, at each stage some information about the original message $W$ is lost due to the external noise. Furthermore, each processor cannot de-noise completely due to the finite power constraint. Therefore it is reasonable to expect that for very large $n$ we should have
	$$ P_{W, Y_n} \approx P_W P_{Y_n}, $$
that is, $W$ and $Y_n$ become almost independent. We quantify this intuition in terms of the total variation,
Kullback-Leibler (KL) divergence and correlation, namely
\begin{align} \TV(P, Q) &\eqdef \sup_E |P[E] - Q[E]| = {1\over 2} \int |\diff P - \diff Q|, \label{eq:tv}\\
	   D(P\|Q) &\eqdef \int \log {\diff P\over \diff Q}\, \diff P,\\
	   \rho(A, B) &\eqdef {\EE[AB] - \EE[A] \EE[B]\over \sqrt{ \Var[A] \Var[B]}},\\
	   I(A;B) &\eqdef D(P_{A,B} \| P_A P_{B}).
   \end{align}
Our main result is the following theorem, which shows that the information about the original message is eventually lost in both an information-theoretic and an estimation-theoretic sense.
\begin{theorem}\label{th:main}
	Let $W, X_j, Y_j$ for a Markov chain as in~\eqref{eq:mc1} -- \eqref{eq:mc3}. Then
	\begin{align} \TV(P_{W Y_n}, P_W P_{Y_n}) \leq \frac{CdE}{\log n}  &\to 0,\label{eq:main1} \\
		I(W; Y_n) \leq C'd^2 E\cdot \frac{\log \log n}{\log n} &\to 0, \label{eq:main3}\\
%		\rho(W, g(Y_n)) &\to 0  \qquad \mbox{uniformly in $g\in L_2(P_{Y_n})$} \label{eq:main2}\\
		\sup_{g\in L_2(P_{Y_n})} \rho(W, g(Y_n))  &\to 0,   \label{eq:main2}
   \end{align}	   
   where $C,C'>0$ are some universal constants.   
   Moreover, the right-hand side of~\eqref{eq:main2} is $O(\frac{1}{\sqrt{\log n}})$ if $W$ is finitely valued and 
   $O(\sqrt{\frac{\log \log   n}{\log n}})$ if $W$ is sub-Gaussian, respectively.
\end{theorem}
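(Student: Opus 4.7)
The plan is to first prove the total variation bound \eqref{eq:main1} by iterating a Dobrushin-type contraction applicable to the power-constrained Gaussian channel, and then derive \eqref{eq:main3} and \eqref{eq:main2} from it by truncation arguments.

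The key lemma is a bound on the Dobrushin curve $F_{\TV}$ of the channel $Y=X+Z$ with $Z\sim N(0,I_d)$ under the input constraint $\EE[\|X\|^2]\leq dE$. Given $\mu,\nu$ with $\TV(\mu,\nu) = t$, write the coupling decomposition $\mu = (1-t)\rho + t\mu_+$ and $\nu = (1-t)\rho + t\nu_+$ with $\mu_+ \perp \nu_+$; the power constraint forces $\EE_{\mu_+}[\|X\|^2],\EE_{\nu_+}[\|X\|^2] \leq dE/t$, and $\TV(\mu K,\nu K) = t\,\TV(\mu_+ K,\nu_+ K)$ reduces the problem to singular probabilities with second moment at most $dE/t$. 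The two-point antipodal configuration $\mu_+=\delta_{x},\nu_+=\delta_{-x}$ with $\|x\|=\sqrt{dE/t}$ is extremal, giving
\[
F_{\TV}(t) \leq t\bigl(2\Phi(\sqrt{dE/t})-1\bigr) \leq t - c\,t^{3/2}(dE)^{-1/2} e^{-dE/(2t)}.
\]
For the iteration, set $t_n = \TV(P_{W,Y_n}, P_W P_{Y_n}) = \EE_W[\TV(P_{Y_n|W},P_{Y_n})]$. Data processing through the encoder $K_n = P_{X_n|Y_{n-1}}$ gives $\EE_W[\TV(P_{X_n|W},P_{X_n})] \leq t_{n-1}$, while the marginal constraint $\EE\|X_n\|^2 \leq dE$ gives $\EE_W\EE[\|X_n\|^2|W] \leq dE$. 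Applying the Dobrushin bound pointwise in $w$ and averaging by Jensen (using concavity of $F_{\TV}$) yields $t_n\leq F_{\TV}(t_{n-1})$, which integrates to $t_n\leq CdE/\log n$, proving \eqref{eq:main1}.

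To pass from the $\TV$ bound to \eqref{eq:main3}, I would establish the auxiliary comparison $D(P\|Q) \leq C_1 dE\cdot \TV(P,Q)\log(e/\TV(P,Q))$ for $P,Q$ both Gaussian-smoothed with inputs of second moment at most $dE$. The ingredients are the pointwise density-ratio bound $|\log(p/q)(y)|\leq C_2(\|y\|^2+dE)$ coming from the convolution with $N(0,I_d)$, combined with truncation on $\{\|y\|\leq R\}$ at $R\sim\sqrt{\log(1/\TV)}$ and a Chebyshev tail bound via $\EE\|Y\|^2\leq d(E+1)$. Inserting into $I(W;Y_n) = \EE_W[D(P_{Y_n|W}\|P_{Y_n})]$ and applying Jensen to the concave map $x\mapsto x\log(e/x)$ yields $I(W;Y_n)\lesssim d^2 E\cdot t_n \log(1/t_n) \lesssim d^2 E \log\log n/\log n$. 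For \eqref{eq:main2}, observe $\sup_g \rho(W,g(Y_n))^2 = \Var\EE[W|Y_n]/\Var W$; then for $|W|\leq M$ use $|\EE[W|Y_n=y]-\EE W|\leq 2M\,\TV(P_{W|Y_n=y},P_W)$ to obtain $\Var\EE[W|Y_n]\leq 4M^2 \EE_Y[\TV^2] \leq 4M^2 t_n$. Finitely valued $W$ has bounded $M$, giving $O(1/\sqrt{\log n})$; sub-Gaussian $W$ truncated at $M\sim\sqrt{\log(1/t_n)}$ gives $O(\sqrt{\log\log n/\log n})$, the tail $\EE[W^2\mathbf{1}_{|W|>M}]$ being exponentially small in $M^2$.

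The hardest step is the Dobrushin curve lemma: the extremality of the two-point configuration among singular second-moment-bounded pairs requires careful justification, and one must verify that $F_{\TV}$ is concave (or at least monotone and sub-linear enough) for the Jensen step in the iteration to go through. The KL--TV comparison for Gaussian-smoothed measures is the other technically delicate piece, and it is precisely what produces the extra $\log\log n$ factor in \eqref{eq:main3} and (through the $W$-truncation argument) in the sub-Gaussian case of \eqref{eq:main2}.
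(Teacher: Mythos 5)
Your treatment of \eqref{eq:main1} and \eqref{eq:main2} is essentially the paper's own route: the coupling/two-point analysis of the Dobrushin curve is Theorem~\ref{th:coupling} and Corollary~\ref{cor:Ftv-g}, the averaged iteration $t_n\le \FTV(t_{n-1})$ is Theorem~\ref{thm:TVproc} and Proposition~\ref{prop:tv-lost} (one caveat: you need \emph{joint} concavity of $(t,a)\mapsto t\,\theta_c(2\sqrt{a/t})$, not just concavity in $t$, because the conditional power $\EE[\|X_n\|^2\mid W=w]$ varies with $w$; this is the perspective-function observation of Remark~\ref{rmk:tau}), and your truncation argument for the correlation is in substance Proposition~\ref{prop:rconv-ng}.

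The route you propose for \eqref{eq:main3}, however, has a fatal gap: the auxiliary comparison $D(P\|Q)\le C_1\, dE\cdot\TV(P,Q)\log(e/\TV(P,Q))$ is \emph{false} for Gaussian-smoothed measures with second-moment-bounded inputs. Take $d=1$ and inputs $\mu=(1-p)\delta_0+p\delta_b$, $\nu=(1-q)\delta_0+q\delta_b$ with $b=\sqrt{E/p}$, $p=t/\log(1/q)$ and $q\downarrow 0$: both inputs have second moment at most $E$, $\TV(\mu*\calN,\nu*\calN)\le \TV(\mu,\nu)\le p\to 0$, yet $D(\mu*\calN\|\nu*\calN)\ge t+o(1)$ for any fixed $t<E/8$. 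This is exactly the construction in part $2^\circ$ of the proof of Theorem~\ref{thm:Falpha} (KL does not contract), and the same phenomenon underlies the example around \eqref{eq:ttoi}. Concretely, your truncation fails at the tail term: after cutting at $\|y\|\le R$ you are left with $\int_{\|y\|>R}|\diff P-\diff Q|\cdot|\log(\diff P/\diff Q)|$, which is controlled by $\int_{\|y\|>R}(\|y\|^2+dE)(\diff P+\diff Q)$, and a second-moment bound on $Y$ gives no uniform rate for this quantity --- input mass $p$ placed at distance $\sqrt{E/p}$ contributes $\approx E$ no matter how large $R$ is. The paper circumvents this by never comparing KL to TV at the level of the output marginals: in Proposition~\ref{prop:iconv} one quantizes $X_1$ (not $W$), covering the ball of radius $u\sim\sqrt{\log n}$ by $\sim(u/\epsilon)^d$ cells of radius $\epsilon$, bounds $I(X_1;Y_n\mid q(X_1))$ by the AWGN capacity with per-cell power $\epsilon^2$ plus a Chebyshev term for the unbounded cell, and bounds $I(q(X_1);Y_n)$ by $T(W;Y_n)\log(\#\text{cells})+h(T(W;Y_n))$ via Proposition~\ref{prop:ITV}; the $\frac{\log\log n}{\log n}$ rate arises from $h(C/\log n)$ together with the $\log(u/\epsilon)$ factor. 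Any repair of your argument must similarly exploit the Markov structure $W\to X_1\to Y_n$ rather than a pointwise KL--TV comparison.
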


When $W$ is scalar Gaussian, all estimates of the convergence rates in \prettyref{th:main} are sharp, in the sense that there exists a sequence of power-constrained relay functions such that $\TV(P_{W Y_n}, P_W P_{Y_n})) = \Omega(\frac{1}{\log n})$, $I(W; Y_n) = \Omega(\frac{\log \log n}{\log n})$ and $\sup_{g\in L_2(P_{Y_n})} \rho(W, g(Y_n)) = \Omega(\sqrt{\frac{\log\log n}{\log n}})$.

Our interest in the problem has been mainly motivated by the fact that the moment constraint~\eqref{eq:mc3} renders the
standard tools for estimating convergence rates of information measures inapplicable. Thus a few new ideas are developed
in this paper.  In order to explain this subtlety, it is perhaps easiest to contrast Theorem~\ref{th:main} and
especially~\eqref{eq:main3} with the recent results of Subramanian et al.~\cite{Subramanian12,SVL13} on cascades of AWGN
channels. Other applications of our techniques are deferred till Section~\ref{sec:appl}.

In~\cite{Subramanian12,SVL13} an upper estimate on $I(W; Y_n)$ is derived under extra constraints on relay functions.
Among these constraints, the most important one is that the average constraint~\eqref{eq:mc3} is
replaced with a seemingly similar one:
\begin{equation}\label{eq:mc3sub}
	\|X_j\|^2 \le d E \qquad \text{a.s.}
\end{equation}
It turns out, however, that for the analysis of~\eqref{eq:mc3sub} the standard tools (in particular the Dobrushin
contraction coefficient) not only recover all the results of~\cite{Subramanian12,SVL13} but in fact simplify and
strengthen them. Thus, we start with describing those classical methods in the next section, and describe how to
analyze~\eqref{eq:mc3sub} in Section~\ref{sec:subram} to follow. 

%Note that the setup in \prettyref{fig:control} in the scalar case has been studied  by Lipsa and Martins \cite{LM11} from the perspective of optimal memoryless control. We defer the discussion on the optimal control policy for this problem till \prettyref{sec:control}.

\textit{Added in print:} A completely different method (without recoursing to the total variation) for showing~\eqref{eq:main3} has been developed in~\cite{yuryITA,CPW15} based on strong data processing inequalities for mutual information in Gaussian noise.

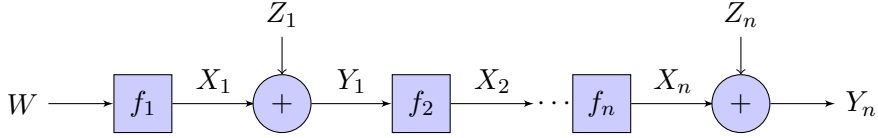
\begin{figure}[t]
	\centering
	\ifpdf
\begin{tikzpicture}[scale=1,transform shape,node distance=1.8cm,auto,>=latex']
    \node [int] (f1) {$f_1$}; 
%    \node [dot, pin={[init]above:{$Z_0$}}] (Z0) [left of=f1] {$+$};
    \node  (Z0) [left of=f1, node distance=1.6cm] {$W$};
    \node (X0) [left of=Z0,node distance=1.5cm, coordinate] {};
    \node [coordinate] (X1) [right of=f1, node distance=1.5cm]{};
%    \path[->] (X0) edge node {$X_0$} (Z0);
    \path[->] (Z0) edge node {} (f1);
    \draw[->] (f1) edge node {$X_1$} (X1);
    
    \node [dot, pin={[init]above:{$Z_1$}}] (Z1) [right of=f1,node distance=1.85cm] {$+$};
    \node [int] (f2) [right of=Z1,node distance=1.85cm] {$f_2$};    
    \path[->] (Z1) edge node {$Y_1$} (f2);
    \node [coordinate] (X2) [right of=f2, node distance=1.5cm]{};
    \draw[->] (f2) edge node {$X_2$} (X2);
    
    \node (dots) [right of=X2,node distance=0.3cm] {$\cdots$};
    \node [int] (fn) [right of=dots,node distance=0.6cm] {$f_n$};  
    \node [coordinate] (Xn) [right of=fn, node distance=1.5cm]{};
    \draw[->] (fn) edge node {$X_n$} (Xn);
    \node [dot, pin={[init]above:{$Z_n$}}] (Zn) [right of=fn,node distance=1.85cm] {$+$};
    \node (Yn) [right of=Zn,node distance=1.6cm] {$Y_n$};    
    \path[->] (Zn) edge node {} (Yn);
\end{tikzpicture}
	\else
	\vbox to 40pt{\hsize=.8\columnwidth\Huge Tikz diagram}
	\fi
	\caption{Cascade of AWGN channels with power-constrained relays $f_j$.}
	\label{fig:control}
\end{figure}

\subsection{Contraction properties of Markov kernels}
Fix a transition probability kernel (channel) $P_{Y|X}: \matx\to\maty$ acting between two measurable spaces. We denote
by $P_{Y|X}\circ P$ the distribution on $\maty$ induced by the push-forward of the distribution $P$, which is the distribution of the output $Y$ when the input $X$ is distributed according to $P$,
 and by $P \times P_{Y|X}$ the joint distribution $P_{XY}$ if $P_X=P$.
We also denote by $P_{Z|Y} \circ P_{Y|X}$ the serial composition of channels. Let $f:\mreals^+\to\mreals$
be a convex function with $f(1) = 0$ and let $D_f(P||Q) \triangleq \Expect_Q[f(\fracd{P}{Q})]$ denote the corresponding $f$-divergence,
cf.~\cite{IC67}. For example taking $f(x)=(x-1)^2$ we obtain the $\chi^2$-divergence:
\begin{equation} \chi^2(P\|Q) \eqdef \int \left({\diff P\over \diff Q}\right)^2 \diff Q - 1\label{eq:chi2def}\,. 
\end{equation}
For any $Q$ that is not a point mass, define:
\begin{align}
\eta_f(Q) &\eqdef  \sup_{P: 0<D_f(P\|Q)<\infty} {D_f(P_{Y|X} \circ P \| P_{Y|X} \circ Q)\over
			D_f(P\| Q)}, \label{eq:eta_fq}\\
			\eta_f &\eqdef \sup_Q \eta_f(Q)\,.\label{eq:eta_f} 
   \end{align}
   For $f(x)=|x-1|$, $f(x)=(x-1)^2$ and $f(x)=x \log x$ we will write
   $\etaTV(Q), \etachi(Q)$ and $\etaKL(Q)$, respectively. 
   In particular, $\etaTV$ is known as the \emph{Dobrushin's coefficient} of the kernel $P_{Y|X}$, 
   which is one of the main tools for studying ergodicity property of Markov chains as well as Gibbs measures.

   \paragraph{General alphabets} Dobrushin~\cite{RLD56} showed that supremum in the definition of $\etaTV$ can be
   restricted to single-point distributions
   $P$ and $Q$, thus providing a simple criterion for strong ergodicity of Markov processes. 
   It is
   well-known, e.g.~Sarmanov \cite{OS58}, that $\eta_{\chi^2}(Q)$ is the squared \emph{maximal correlation}
   coefficient of the joint distribution $P_{XY}=Q\times P_{Y|X}$:
   \begin{equation}\label{eq:chi_maxcor0}
	   S(X;Y) \eqdef \sup_{f, g} \rho(f(X), g(Y)) = \sqrt{\eta_{\chi^2}(Q)}\,.
\end{equation}	
   Later~\cite[Proposition II.4.10]{CKZ98} (see also \cite[Theorem 4.1]{CIR93} for finite alphabets)
   demonstrated that all other contraction coefficients
   are upper-bounded by the Dobrushin's coefficient $\etaTV$:
   \begin{align}\label{eq:eta_ub}
	   	\eta_f \le \etaTV,
\end{align}	
and this inequality is typically strict.\footnote{E.g. for the binary symmetric channel with crossover probability
$\delta$ we have $\eta_{\chi^2}=\etaKL=(1-2\delta)^2<\etaTV=|1-2\delta|$.}
In the opposite direction it can be shown, cf.~\cite[Proposition II.6.15]{CKZ98},
	\begin{equation}\label{eq:eta_lb}
		\eta_{\chi^2} \le \eta_f\,,
\end{equation} 
whenever $f$ is thrice differentiable with $f''(1) > 0$. Moreover, \prettyref{eq:eta_lb} holds with equality for all
nonlinear and operator convex $f$, \eg, for KL divergence and for squared Hellinger distance; see~\cite[Theorem
1]{CRS94} and \cite[Proposition II.6.13 and Corollary II.6.16]{CKZ98}. In particular,
   \begin{equation}\label{eq:chi2KL}
	   	\eta_{\chi^2} = \etaKL, 
   \end{equation}   	
   which was first obtained in \cite{AG76} using different methods.
%   Extensions of~\eqref{eq:chi2KL} to general channels (and to its counterpart for the
%   Hellinger distance) appeared in~\cite[Corollary II.6.16]{CKZ98}.
	\apxonly{More generally, the range of 
	$f\mapsto \eta_f$ is included (\textit{and equal} for discrete channels) in
	$$ \sup_{x \neq x'} \beta_1(P_x, P_{x'}) \le \eta_f \le \etaTV $$
	see~\cite[Propositions II.6.3 and II.6.4]{CKZ98}.}
	Rather naturally, we also have~\cite[Proposition II.4.12]{CKZ98}:
	$$ \eta_f = 1 \quad \iff \quad \etaTV = 1 $$
	for any non-linear $f$.

The fixed-input contraction coefficient $\etaKL(Q)$ is closely related to the (modified) log-Sobolev inequalities. Indeed, 
when $Q$ is invariant under $P_{Y|X}$ (i.e. $P_{Y|X}\circ Q=Q$) any initial distribution $P$
	converges to $Q$ exponentially fast in terms of $D(P_{Y|X}^n \circ
   P || Q)$ with exponent upper-bounded by $\etaKL(Q)$, which in turn can be
   estimated from log-Sobolev inequalities, e.g.~\cite{MLxx}. When $Q$ is not invariant,
   it was shown~\cite{MLM03} that
   \begin{equation}\label{eq:disc_miclo}
	   	1-\alpha(Q) \le \etaKL(Q) \le 1-C\alpha(Q)
   \end{equation}   
   holds for some universal constant $C$, where $\alpha(Q)$ is a modified log-Sobolev (also known as $1$-log-Sobolev) constant:%
\apxonly{\footnote{
When $Q$ is invariant measure, and in continuous time (i.e. for a Markov process) we in fact have
$\etaKL(Q)=1-\alpha(Q)$, where $\etaKL$ this time is the smallest constant satisfying
$$ \left.{d\over dt}\right|_{t=0} D(\pi_t || Q)\le -(1-\etaKL(Q)) D(\pi_0 || Q)\,.$$
   Indeed, if $P_t=e^{-tL}$ is a semigroup, and $\mate(f, g) = \EE[f(X) Lg(X)]$ its
   Dirichlet form, then 
   \begin{equation}\label{eq:dif_ent}
	   	\left.{d\over dt}\right|_{t=0} D(\pi_t || Q) = -\mate\left({d\pi_t\over \diff Q}, \log {dP_t Q\over
   \diff Q}\right) 
   \end{equation}   
   and the smallest ratio $\mate(f, \log f)\over \EE_Q[f\log f]$ is precisely the
   1-log-Sobolev constant.~\eqref{eq:disc_miclo} extends this in two ways: first a usual
   Miclo-extension to discrete time is made, cf.~\cite{LM97}, and then one also needs to
   find ${d\over dt} D(\pi_t ||Q_t)$ since $Q_t = P_t Q_0$ also changes now.}}%
   $$ \alpha(Q) = \inf_{f \perp 1, \|f\|_2=1} {\EE\left[ f^2(X) \log {f^2(X)\over
   f^2(X')}\right] \over \EE[f^2(X) \log f^2(X)] }, \qquad P_{X X'} = Q \times
   (P_{X|Y} \circ P_{Y|X}) .$$

%   \smallskip
%   \textbf{Finite alphabets:}
	\paragraph{Finite alphabets}  
   Ahlswede and G\'acs~\cite{AG76} have shown
   $$ \eta_{\chi^2}(Q) < 1 \iff \etaKL(Q) < 1 \iff \text{graph~} \{(x,y): Q(x)>0,
   P_{Y|X}(y|x) > 0\}\mbox{~is connected}. $$
   As a criterion for $\eta_f(Q)<1$, this is an improvement of~\eqref{eq:eta_ub} only for channels with
   $\etaTV=1$. Furthermore,~\cite{AG76} shows
   \begin{equation}\label{eq:chi2KL_bound}
	   \eta_{\chi^2}(Q) \le \etaKL(Q), 
   \end{equation}
   with inequality frequently being strict.\footnote{See~\cite[Theorem 9]{AG76} and 
   \cite{AGKN13} for examples.} We note that the main result of~\cite{AG76} characterizes $\etaKL(Q)$
   as the maximal ratio of hyper-contractivity of the conditional expectation operator
   $\EE[\cdot|X]$. 
   For finite alphabets, \prettyref{eq:eta_lb} can be strengthened to the following fixed-input version under the same conditions on $f$ (c.f.~\cite[Theorem 3.3]{Raginsky14}):
	\begin{equation}\label{eq:eta_lb-Q}
		\eta_{\chi^2}(Q) \le \eta_f(Q)\,.
\end{equation}
   For connections between $\etaKL$ and log-Sobolev inequalities on finite alphabets see~\cite{Raginsky13}.

   \apxonly{
   \textbf{TODO}: Example for $\eta_f(Q) > \etaTV(Q)$. ????
   }

\subsection{Exponential decay of information when $\etaTV<1$}\label{sec:subram}

First, it can be shown that (See \prettyref{app:etaKL} for a proof in the general case. The finite alphabet case has been shown in \cite{AGKN13})
%\footnote{A simple proof can be given as follows: The $\le$ follows by
%upper-bounding the ratio of averages by the maximal ratio. The $\ge$ follows by taking $U=\Bern(\epsilon)$ using the fact that $D((1-\epsilon) P_X +\epsilon P_X'\|P_X)=o(\epsilon)$. Also see~\cite[Exercise III.2.12]{CK}.}
\begin{equation}
	\sup {I(U; Y)\over I(U;X)} = \etaKL(P_X)\,,
	\label{eq:etaKL}
\end{equation}
where the supremum is taken over all Markov chains $U\to X \to Y$ with fixed $P_{XY}$ such that $0<I(U;X)<\infty$. Thus, for an arbitrary Markov chain 
$$ W\to X_1 \to Y_1 \to X_2 \to Y_2 \to \cdots \to Y_n $$
with
equal channels $P_{Y_j|X_j} = P_{Y|X}$ for all $j$, 
we have 
\begin{equation}\label{eq:expo1}
	I(W; Y_n) \le \prod_{j=1}^n \etaKL(P_{X_j}) \cdot I(W; X_1) \le (\etaKL)^n \cdot H(W)\,.
\end{equation}
A similar argument leads to
\begin{align} \TV(P_{W Y_n}, P_W P_{Y_n}) &\le \prod_{j=1}^n \etaTV(P_{X_j}) \le (\etaTV)^n ,\\
	\rho^2(W; Y_n) \le S(W; Y_n) &\le \prod_{j=1}^n S(X_j; Y_j) \le (\eta_{\chi^2})^{n}\,. \label{eq:rho_contr_simple}
\end{align}   
Thus, in the simple case when $\etaTV<1$ we have from~\eqref{eq:eta_ub} that when $n\diverge$, all three information quantities
converge to zero exponentially as fast as $\etaTV^n$. 

Let us now consider the case of~\cite{Subramanian12,SVL13}, namely the AWGN channel $P_{Y|X}$ with maximal power
constraint~\eqref{eq:mc3sub}. First recall that
\begin{equation}\label{eq:tvnorm}
		\TV(\matn(\mu_1, \vect I_d), \matn(\mu_2, \vect I_d)) = 1-2\sfQ(|\mu_1 - \mu_2|/2)\,,
\end{equation}
where $\sfQ(x) = \int_x^\infty {1\over \sqrt{2\pi}} e^{-t^2/2} dt$ is the Gaussian complimentary CDF and $|\cdot|$ denotes the Euclidean norm. 
Then by Dobrushin's characterization of $\etaTV$ we get that for any $P_{X_j}$ satisfying~\eqref{eq:mc3sub} we have
$$ \etaTV = \sup_{x_1, x_2} 1-2\sfQ(|x_1 - x_2|/2) = 1-2\sfQ(\sqrt{d E})\,.$$
From~\eqref{eq:expo1} this implies
\begin{equation}\label{eq:fuck_sub}
	I(W; Y_n) \le (1-2\sfQ(\sqrt{d E}))^n \cdot H(W)\,. 
\end{equation}

It turns out~\eqref{eq:fuck_sub} is stronger than the main result of~\cite{SVL13} and independent of the cardinality of
$W$. Indeed, although~\cite{SVL13} did not point
this out, the analysis there corresponds to the following upper-bound on $\etaTV$
\begin{equation}\label{eq:tvbd}
		\etaTV \le 1 - \sum_{y \in \maty} \inf_{x\in\matx} P_{Y|X}(y|x) 
\end{equation}
(here we assumed finite alphabet $\maty$ for simplicity). This bound is clearly tight for the case of $|\matx| = 2$ but
 rather loose for larger $|\matx|$. Since we calculated $\etaTV$ exactly,~\eqref{eq:fuck_sub} must yield a better bound
 than that of~\cite{SVL13}. However, the estimate \prettyref{eq:fuck_sub} relies on the Dobrushin coefficient, which, as will be shown below, breaks down if the power constraints is imposed on average instead of almost surely. To remedy this problem requires developing new tools to complement the Dobrushin coefficient. For the generalization to average power constraint as well as discussions for multi-hop communication, see \prettyref{prop:iconv} and \prettyref{rmk:rate-pos} in \prettyref{sec:I}.

\apxonly{I also checked that exponential approximation given in~\cite[(18)]{SVL13} is indeed worse for all $R$ then our
$$ I(W;Y_n) \leq (1-e^{-d E/2 + o(d)})^n\,. $$
The way to prove~\eqref{eq:tvbd} is by noticing that RHS=LHS for $|\matx|=2$. Thus we first include those
$x_1,x_2$ that attain $\etaTV$ and then start adding other symbols -- LHS does not change, while RHS grows.
}%

The main part of this paper handles convergence of $I(W; Y_n)\to0$ in the case~\eqref{eq:mc3}, for which unfortunately
$\etaTV=\etaKL=\eta_{\chi^2}=1$. Indeed, by taking
\begin{align} P &= (1-t) \delta_0 + t \delta_a\,,\\
	Q &= (1-t) \delta_0 + t \delta_{-a}\,,
\end{align}
and performing a straightforward calculation, we find
\begin{equation}\label{eq:rt1}
	{\TV(P*\matn(0,1), Q*\matn(0,1))\over \TV(P,Q)} = 1-2\sfQ(a) \xrightarrow{a\diverge} 1.
\end{equation}
Therefore, even if one restricts the supremum in~\eqref{eq:eta_f} to $P$ and $Q$ satisfying the moment
constraint~\eqref{eq:mc3} (in fact, any constraint on the tails for that matter), choosing
$a \diverge$ and $t\to 0$ accordingly drives the ratio in \prettyref{eq:rt1} to one, thus proving $\etaTV=1$. 
This example is instructive: The ratio~\eqref{eq:rt1} approaches $1$ only when the $\TV(P,Q)\to0$. Our idea is to 
get \emph{non-multiplicative} contraction inequalities that still guarantee strict decrease of total variation after convolution.

Similarly, there is no moment condition which can guarantee the strict contraction of the KL divergence or mutual information. For example, 
it can be shown that
$$	\sup {I(U; X+Z)\over I(U;X)} = 1\,,$$
where the supremum is over all Markov chains $U\to X \to X+Z$ with $\Expect[|X|^2] \leq 1$.
%for otherwise the KL divergence or mutual information will decay exponentially which we know is not always the case.
This suggests that the exponential decay of mutual information in \prettyref{eq:expo1} obtained under peak power constraint might fail.
Indeed, we will show that under average power constraint, the decay speed of mutual information can be much slower than exponential (see \prettyref{sec:ach}).

\subsection{Organization}
The rest of the paper is organized as follows.
Section~\ref{sec:curve} proves results on reduction of total variation over additive-noise channels; we call the
resulting relation the \textit{Dobrushin curve} of a channel. Section~\ref{sec:mate} shows how to convert knowledge about
total variation to other $f$-divergences, extending~\eqref{eq:eta_ub}.
Section~\ref{sec:proof} shows how to use Dobrushin curve to prove Theorem~\ref{th:main}. Finally, Section~\ref{sec:appl}
concludes with applications (other than Theorem~\ref{th:main}).

In particular, in \prettyref{sec:control} we show that the optimal correlation achieved by non-linear control in the $n$-stage Gaussian quadratic control problem studied by Lipsa and Martins \cite{LM11} is 
$\Theta(\sqrt{\frac{\log \log n}{\log n}})$; in contrast, the best linear controller only achieves exponentially small correlation. 
The inferiority of linear control can be explained from the viewpoint of dissipation of information and contraction of KL divergence.
In \prettyref{sec:gibbs} we extend Dobrushin's strategy for proving uniqueness of Gibbs measures to unbounded systems with
moment constraints on marginal distributions. And in \prettyref{sec:circuits} we apply our technique to proving a lower
bound on the probability of error in circuits of noisy gates. 

Finally, in \prettyref{sec:trees} we show that in the 
question of broadcasting a single bit on a tree of 
Gaussian channels there is no phase transition. Namely, for arbitrarily low SNR it is possible to build relays
satisfying the average power constraint so that given the received values on all leaves at depth $d$ 
the probability of error of estimating the original bit is bounded away from $1/2$. This is in contrast to the case of
trees of binary symmetric channels, studied by Evans-Kenyon-Peres-Schulman~\cite{EKPS00}, who showed that there there
is a phase transition in terms of the strength of the channel noise.

\section{Dobrushin curve of additive-noise channels} \label{sec:curve}
\subsection{Definitions and examples}

Let $P_{Y|X}:\matx \to \maty$ be a probability transition kernel. 
Then, we define the \emph{Dobrushin curve} of $P_{Y|X}$ as follows: 
\begin{equation}	
	\FTV(t) = \sup\{ \TV(P_{Y|X} \circ P, P_{Y|X} \circ Q): \TV(P, Q) \le t, (P,Q) \in  \calG\},  \qquad t\in[0,1]
	\label{eq:Ft}
\end{equation}
where $\calG$ is some (convex) set of pairs of probability measures. The curve $t \mapsto \FTV(t)$ defines the upper boundary of the region
\begin{equation}
%	\calF_{\TV} = \left\{\bigg(\TV(P_{Y|X} \circ P, P_{Y|X} \circ Q), \TV(P, Q)\bigg): (P,Q) \in  \calG\right\} 
\calF_{\TV} = \left\{\big(\TV(P_{Y|X} \circ P, P_{Y|X} \circ Q), \TV(P, Q)\big): (P,Q) \in  \calG\right\}
	\subset	[0,1]^2,
	\label{eq:calF}
\end{equation}
which is the joint range of the input and output total variations.

We notice the following ``data-processing'' property of Dobrushin curves: if $\FTV_1$  and $\FTV_2$ are the Dobrushin
curves of channels $P_{Y_1|X_1}$ and $P_{Y_2|X_2}$ (and the respective feasible sets $\calG_1$ and $\calG_2$), then for any $P_{X_2|Y_1}$
that connects them:
$$ X_1 \stackrel{P_{Y_1|X_1}}{\longrightarrow} Y_1 \longrightarrow X_2 \stackrel{P_{Y_2|X_2}}{\longrightarrow}
Y_2 $$
we naturally have for the combined channel
$$ \FTV(t) \le \FTV_2(\FTV_1(t))$$
(the constraint set $\calG$ corresponding to $\FTV(t)$ is defined so that $(P,Q)\in \calG_1$ and
$(P_{X_2|Y_1}\circ P_{Y_1|X_1} \circ P, P_{X_2|Y_1}\circ P_{Y_1|X_1} \circ Q)\in\calG_2$). 
This observation will be central for the analysis of the Markov chain~\eqref{eq:mc1}. We proceed to 
computing $\FTV$.

For simplicity, in the sequel we focus our presentation on the following:
\begin{enumerate}
	\item Consider $\matx=\maty=\mreals^d$ with Borel $\sigma$-algebra and $d\in\naturals\cup\{+\infty\}$.
	\item There is a norm $|\cdot|$ on $\mreals^d$.
	\item The constraint set $\calG$ is defined by some average cost constraint:
%$p\Th$-moment condition as
\begin{equation}
%	\calG_a \triangleq \{(P,Q): \Expect_P [\|X\|^p + \Expect_Q[\|X\|^p] \leq 2 a \},
\calG_a \triangleq \{(P,Q): \Expect_P [\cost(|X|)] + \Expect_Q[\cost(|X|)] \leq 2 a \},
	\label{eq:Ga}
\end{equation}
where $\cost: \reals_+ \to \reals_+$ is a strictly increasing convex cost function\footnote{Our motivating examples are $\cost(x) = x^p$ with $p\geq 1$, $\cost(x) = \exp(\alpha x)-1$ and
$\cost(x)=\exp(\alpha x^2)-1$ with $\alpha>0$, which we call $p\Th$-moment, sub-exponential and sub-Gaussian
constraints, respectively.} with
$\cost(0)=0$ and $a\geq 0$.
	\item The random transformation $P_{Y|X}$ acts by convolution (on $\mreals^d$) with noise $P_Z$:
		$$	P_{Y|X=x} = P_{Z+x}\qquad x,Y,Z \in \mreals^d .$$ 
\end{enumerate}

%a) 
%However, it will be quite clear that our methods are able to handle more general cases as well.

\begin{remark}\label{rmk:FTVconcave}
For any point $(\TV(P,Q),\TV(P*P_Z,Q*P_Z))$ in the region $\calF_{\TV}$ and $\lambda \in [0,1]$, we can achieve the point  $(\lambda
\TV(P,Q), \lambda \TV(P*P_Z,Q*P_Z))$ by setting $P_{\lambda} = \lambda P + (1-\lambda) \delta_0$ and $Q_{\lambda} =
\lambda Q + (1-\lambda) \delta_0$. This implies that $t \mapsto {\FTV(t)\over t}$ is non-increasing. However, 
this does not imply that $\calF_{\TV}$ is convex or that $\FTV$ is concave. Shortly, we will demonstrate that for many
noise distribution $P_Z$ the Dobrushin curve $\FTV$ is in fact concave. 
\end{remark}

Expanding on the previous remark, we can further show relations between $\FTV$ computed for different cost values of $a$
in~\eqref{eq:Ga}.
\begin{proposition}\label{prop:yw} 
	Let $\FTV(t,a)$ be the Dobrushin curve for some channel $P_{Y|X}$ and constraint~\eqref{eq:Ga}, where
	$\cost(0)=0$. Then for all $\alpha\ge0$ such that $\alpha t\le 1$ we have
	\begin{equation}\label{eq:yw0}
			\FTV(\alpha t, \alpha a) = \alpha \FTV(t, a)\,.
	\end{equation}	
	In particular, $\FTV(t,a)=t g(a/t)$, where $g(a)\eqdef \FTV(1, a)$ and in the unconstrained case Dobrushin curve
	is a straight line: $\FTV(t, \infty)=\etaTV t$.
\end{proposition}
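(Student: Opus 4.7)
The plan is to establish the two-sided identity via a single symmetric rescaling construction on feasible pairs, exploiting linearity of total variation, cost, and convolution in the measure, together with the cost-free status of $\delta_0$ (since $\cost(0) = 0$).

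\textbf{Key construction.} Given any pair $(P,Q)$ with $\TV(P,Q) = s$, I would perform the Hahn-type decomposition $P = R + P_\perp$ and $Q = R + Q_\perp$, where $R = P \wedge Q$ is the common mass and $P_\perp \perp Q_\perp$ each have mass $s$. For any $\beta \ge 0$ with $\beta s \le 1$, define the rescaled pair
\[
P' = (1 - \beta s)\delta_0 + \beta P_\perp, \qquad Q' = (1 - \beta s)\delta_0 + \beta Q_\perp.
\]
A direct check shows: $P', Q'$ are probability measures with $\TV(P',Q') = \beta s$; the cost is rescaled, $\Expect_{P'}[\cost(|X|)] = \beta \int \cost(|x|)\, \diff P_\perp \le \beta \Expect_P[\cost(|X|)]$ (and similarly for $Q'$), because the overlap $R$ is replaced by the cost-free $\delta_0$; and by linearity of convolution, $(P' - Q')*P_Z = \beta (P - Q) * P_Z$, so $\TV(P' * P_Z, Q' * P_Z) = \beta\, \TV(P * P_Z, Q * P_Z)$.

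\textbf{Both directions and consequences.} Taking $\beta = \alpha$ and starting from $(P, Q) \in \calG_a$ with $\TV(P, Q) \le t$, the image lies in $\calG_{\alpha a}$ with input TV at most $\alpha t$ and output TV exactly $\alpha$ times the original; taking supremum yields $\FTV(\alpha t, \alpha a) \ge \alpha \FTV(t, a)$. Symmetrically, taking $\beta = 1/\alpha$ and starting from $(P, Q) \in \calG_{\alpha a}$ with $\TV(P, Q) \le \alpha t$ (so $\beta s \le t \le 1$ satisfies the admissibility condition), the image lies in $\calG_a$ with input TV at most $t$, producing the reverse inequality. Setting $\alpha = 1/t$ then yields $\FTV(t,a) = t g(a/t)$. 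In the unconstrained limit $a = \infty$ the cost constraint is vacuous, so the identity degenerates to pure $t$-homogeneity, forcing $\FTV(\cdot, \infty)$ to be linear with slope $\FTV(1, \infty) = \sup_{P,Q} \TV(P*P_Z, Q*P_Z) = \etaTV$ by Dobrushin's classical characterization via pairs of Dirac inputs.

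The only delicate point is the cost bookkeeping in the construction: one must isolate the overlap $R$ (which carries cost for $P$ and $Q$ but drops out of all TV computations, so it can be freely replaced by $\delta_0$) from the singular parts $P_\perp, Q_\perp$, on which all the cost of the rescaled pair is concentrated and to which the factor $\beta$ is applied. This is precisely where the hypothesis $\cost(0) = 0$ enters.
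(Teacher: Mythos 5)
Your proof is correct and follows essentially the same route as the paper's: both establish the two inequalities by rescaling feasible pairs and padding with the cost-free atom $\delta_0$ (using $\cost(0)=0$), and the paper's second construction $P_2 = \tfrac{s}{t}(P-Q)^+ + (1-s)\delta_0$ is exactly your rescaled Jordan-part construction. The only difference is cosmetic: the paper uses two separate constructions for the two directions, whereas you observe that the Jordan-decomposition version alone, run with $\beta=\alpha$ and $\beta=1/\alpha$, suffices for both.
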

\begin{proof} Without loss of generality, we may assume $\alpha \le 1$ (otherwise, apply to $t'=\alpha t$ and
	$\alpha'=1/\alpha$). For all $s\in[0,1]$ we have two inequalities
	\begin{align} \FTV(st, sa) &\ge s \FTV(t,a)\label{eq:yw1},\\
		\FTV(s, sa/t) &\ge s/t \FTV(t,a) \label{eq:yw2}.
	\end{align}
	To show the first start with arbitrary $(P,Q)\in\calG_a$ such that $\TV(P,Q)=t$ and $\TV(P_{Y|X}\circ P,
	P_{Y|X}\circ Q)=f$. Then we can construct distributions
	$$ P_1 = sP + (1-s) \delta_0\,,\quad  Q_1 = sQ + (1-s)\delta_0\,,$$
	for which 
	$$ \int \cost(x) (dP_1 + dQ_1) \le s a,\quad \TV(P_1, Q_1) = st,\quad \TV(P_{Y|X}\circ P_1, P_{Y|X} \circ Q_1) = sf $$
	and thus~\eqref{eq:yw1} follows after optimizing over $(P,Q)$. The second inequality follows by
	considering\footnote{Measures $(P-Q)^+$ and $(P-Q)^-$ denote the two pieces of Jordan decomposition of measure
	$(P-Q)$.}
	$$ P_2 = {s\over t}(P-Q)^+ + (1-s) \delta_0, \quad Q_2 = {s\over t} (P-Q)^- + (1-s)\delta_0 $$
	and a similar argument.
	Finally,~\eqref{eq:yw0} follows from~\eqref{eq:yw1} (with $s=\alpha$) and~\eqref{eq:yw2} (with $s=t/\alpha$).
\end{proof}

\apxonly{\textbf{Note:} If one could show that $g(a) \triangleq \sup\{\TV(P*P_Z,Q*P_Z): P\perp Q, (P,Q)\in\calG_a\}$ is
concave in $a$, then by the Proposition $\FTV$ is always concave. But this is not clear, since $P_i\perp Q_i,i=1,2 \not \Rightarrow (P_1+P_2)/2 \perp (Q_1+Q_2)/2$.}

\subsection{Criterion for $\FTV(t)<t$}

Similar to how Dobrushin's results~\cite{RLD56} reduce the computation of $\etaTV$ to considering the two-point quantity $\TV(P_{Y|X=x},
P_{Y|X=x'})$, our main tool will be the following function $\theta: \reals^d \to [0,1]$ defined by
\begin{equation}
	\theta(x) \eqdef \TV(P_Z, P_{Z+x})\,, \qquad x\in\mreals^d.
	\label{eq:theta}
\end{equation} 
Some simple properties of $\theta$ (general case) are as follows:
\begin{itemize}
	\item $\theta(0)=0$, $\lim_{x\to\infty}\theta(x)=1$.
	\item $\theta(x)=\theta(-x)$.
	\item If $P_Z$ is compactly supported then $\theta(x)=1$ when $|x|$ is sufficiently large.
	\item $\theta$ is lower-semicontinuous (since total variation is weakly
			lower-semicontinuous).
	\item If $P_Z$ has a density $f_Z$, then
	$$ \theta(x) = \int_{\mreals^d} |f_Z(z-x) -f_Z(z)| \diff z. $$
		and $\theta$ is continuous on $\mreals$, which follows from the denseness of compactly-supported continuous functions 
		in $L_1(\reals^d)$. 
		\apxonly{Proof: approximate $f_Z$ by compactly supported
		bounded pips. Find $g \in C_c$ such that $\|f_Z-g\|_1 \leq \epsilon$. Then $|\theta(x')-\theta(x)| \leq 4 \epsilon + \int_{\mreals^d} |g(z-x') -g(z)| \diff z - \int_{\mreals^d} |g(z-x) -g(z)| \diff z$. Send $x'\to x$ and then $\epsilon \to 0$.}
\end{itemize}
Further properties of $\theta$ in dimension $d=1$ include:
\begin{itemize}
	\item $\theta$ is continuous at 0 if and only if $Z$ has a density with respect to the Lebesgue measure. 
	To see this, decompose $P_Z = \mu_a + \mu_s$ 
	into absolutely continuous and singular parts (with respect
	to the Lebesgue measure). By~\cite[Theorem 10]{singular.translation}, 
	$\liminf_{h\to 0}\TV(P_Z, P_{Z+h})= 0$ if and only if $P_Z$ is absolutely continuous. By the previous remark we have
%	\begin{equation}\label{eq:conc2}
\[
			\limsup_{x\to 0} \theta(x) = \mu_s(\mreals).
%	\end{equation}	
\]

	\item If $P_Z$ has a non-increasing density supported on $\mreals_+$, then $\theta(x)$ is a concave,
		non-decreasing function on $\mreals_+$ given by
		\begin{equation}\label{eq:conc0}
				\theta(x) = \PP\left[ Z \le x \right]\,, \qquad x\ge0\,.
		\end{equation}			
	\item If $P_Z$ has a symmetric density which is non-increasing on $\mreals_+$, then $\theta(x)$ is a concave,
	non-decreasing function on $\mreals_+$ given by
	\begin{equation}\label{eq:conc1}
			\theta(x) = \PP\left[|Z| \le x / 2\right]\,, \qquad x \ge 0 
	\end{equation}		
	\item In general, $\theta$ need not be monotonic on $\mreals^+$ (e.g. $P_Z$ is discrete or has a multimodal
	density such as a Gaussian mixture).
\end{itemize}

\smallskip
The following result gives a necessary and sufficient condition for the total variation to strictly contract on an additive-noise channel, 
which essentially means that the noise distribution is almost mutually singular to a translate of itself. 
Intuitively, it means that if the noise is too weak (\eg, 
when the noise has a compact support or has a singular distribution), then
one can send one bit error-free if the signal magnitude is sufficiently large. 
\begin{theorem}\label{th:ftv_contracts} 
 Define 
	\[
	\eta(A) = \sup_{x:|x|\le A}\theta(x).
	\] 
The following are equivalent
	\begin{enumerate}
		\item $\eta(A)=1$ for some $A>0$.
%		\item $P_Z$ and $P_{Z+x}$ are mutually singular for some $x$.
		\item $\FTV(t)=t$ in some neighborhood of $0$.
		\item $\FTV(t)=t$ for some $t>0$.
	\end{enumerate}
\end{theorem}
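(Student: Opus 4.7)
I would prove the cycle $(2)\Rightarrow(3)\Rightarrow(1)\Rightarrow(2)$. The implication $(2)\Rightarrow(3)$ is immediate from the definitions. For $(1)\Rightarrow(2)$, I will use the two-point construction: given $\eta(A)=1$, for any $\epsilon>0$ pick $x_\epsilon$ with $|x_\epsilon|\le A$ and $\theta(x_\epsilon)\ge 1-\epsilon$, and set $P=\delta_0$, $Q_t=(1-t)\delta_0+t\delta_{x_\epsilon}$. A direct computation gives $\TV(P,Q_t)=t$ and $\TV(P*P_Z,Q_t*P_Z)=t\,\theta(x_\epsilon)$, while the cost constraint $(P,Q_t)\in\calG_a$ holds as soon as $t\le 2a/\cost(A)$. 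Sending $\epsilon\to 0$ and combining with the trivial upper bound $\FTV(t)\le t$ (data processing for total variation) gives $\FTV(t)=t$ on $[0,2a/\cost(A)]$.

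The substantive direction is $(3)\Rightarrow(1)$. I will extract $(P_n,Q_n)\in\calG_a$ with $\TV(P_n,Q_n)\le t_0$ and $\TV(P_n*P_Z,Q_n*P_Z)\to t_0$; by data processing $\TV(P_n,Q_n)\to t_0$ as well. Forming the Jordan decomposition $\mu_n^\pm=(P_n-Q_n)^\pm$ and the normalized probability measures $\hat\mu_n^\pm=\mu_n^\pm/\mu_n^\pm(\reals^d)$, the key observation is that $P_n\ge\mu_n^+$ and $Q_n\ge\mu_n^-$, so the cost constraint transfers to
\[
\Expect_{\hat\mu_n^\pm}[\cost(|X|)]\le \frac{2a}{\TV(P_n,Q_n)}\longrightarrow \frac{2a}{t_0}.
\]
Since $\cost$ is strictly increasing with $\cost(\infty)=\infty$, Markov's inequality yields a radius $R$ (independent of $n$) with $\hat\mu_n^\pm(B(0,R))\ge 1/2$ for all large $n$. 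Meanwhile the assumption forces $\TV(\hat\mu_n^+*P_Z,\hat\mu_n^-*P_Z)\to 1$, so there exist measurable sets $A_n$ with $(\hat\mu_n^+*P_Z)(A_n)\ge 1-\epsilon_n$ and $(\hat\mu_n^-*P_Z)(A_n)\le\epsilon_n$ for some $\epsilon_n\to 0$. By a second Markov argument, the sets $\{x:P_Z(A_n-x)\ge 1-\sqrt{\epsilon_n}\}$ and $\{y:P_Z(A_n-y)\le\sqrt{\epsilon_n}\}$ each have $\hat\mu_n^\pm$-mass at least $1-\sqrt{\epsilon_n}$, so they intersect $B(0,R)$ nontrivially and I can pick $x_n,y_n\in B(0,R)$ with these properties. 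Then
\[
\theta(y_n-x_n)\ge P_Z(A_n-x_n)-P_Z(A_n-y_n)\ge 1-2\sqrt{\epsilon_n},
\]
which, combined with $|y_n-x_n|\le 2R$, gives $\eta(2R)=1$.

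The main obstacle is precisely this last step: passing from the asymptotic near-singularity of $\hat\mu_n^+*P_Z$ and $\hat\mu_n^-*P_Z$ to a \emph{single} pair of points $(x,y)$ in a bounded ball whose difference witnesses $\theta\approx 1$. A tempting alternative is to take weak subsequential limits $\hat\mu_n^\pm\to\nu^\pm$ and argue that $\nu^+*P_Z\perp\nu^-*P_Z$; however, total variation is only lower semi-continuous under weak convergence, so the singularity need not be preserved in the limit. The route above sidesteps this by working directly with the separating sets $A_n$ and the tightness provided by the moment constraint, which is exactly what allows the crucial selection of $x_n,y_n$ in a fixed bounded region.
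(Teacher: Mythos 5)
Your proposal is correct. The easy directions coincide with the paper's: $(1)\Rightarrow(2)$ via the same two-point construction $Q_t=(1-t)\delta_0+t\delta_{x}$ with $|x|\le A$, and the trivial implication between $(2)$ and $(3)$. Where you genuinely diverge is in the key direction $(3)\Rightarrow(1)$ and in how the equivalence $(2)\Leftrightarrow(3)$ is closed. The paper gets $(3)\Leftrightarrow(2)$ from the scaling remark ($t\mapsto\FTV(t)/t$ is non-increasing), and proves $(3)\Rightarrow(1)$ contrapositively by establishing the quantitative bound $\FTV(t)\le \eta(A)\,t+(1-\eta(A))\frac{2a}{\cost(A)}$ for every $A$: it truncates $P,Q$ to the ball $\{|x|\le A\}$, applies the refinement of Dobrushin's contraction from \cite{CIR93} (namely $\TV(P_{Y|X}\circ\nu,0)\le\etaTV\TV(\nu,0)+\frac{1-\etaTV}{2}|\int \diff\nu|$ for signed $\nu$) to the truncated parts, and controls the tails by Markov's inequality; setting $\FTV(t)=t$ then forces $\eta(A)=1$ for all $A$ with $\cost(A)>2a/t$. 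You instead run a direct extraction on a near-optimal sequence $(P_n,Q_n)$: normalize the Jordan decomposition, observe that the cost constraint transfers to $\hat\mu_n^\pm$ with constant $2a/\TV(P_n,Q_n)$, localize both measures in a fixed ball by Markov, and then use the separating sets $A_n$ together with a second Markov argument to produce explicit points $x_n,y_n$ in that ball with $\theta(y_n-x_n)\to1$. Both arguments are sound and both ultimately rest on the same tightness mechanism; the paper's version is shorter and yields the reusable quantitative inequality on $\FTV$, while yours is more self-contained (it bypasses the \cite{CIR93} lemma and Remark~\ref{rmk:FTVconcave} entirely, closing the logic through the cycle $(2)\Rightarrow(3)\Rightarrow(1)\Rightarrow(2)$) and exhibits the witnessing translation explicitly. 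Your remark about why a naive weak-limit argument fails (lower semicontinuity of $\TV$ goes the wrong way) correctly identifies the pitfall that your separating-set construction avoids.
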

\begin{remark}
	It is possible to have $\eta(A)=1$ with $\theta(x)<1$ on $[-A,A]$. For example, let 
	\[
	P_Z = \frac{1}{2} \sum_{k \geq 1} 2^{-k}\delta_k + \frac{1}{2} \sum_{k\geq 1} 2^{-k} U(2k-1,2k).
	\]
	where $U(a,b)$ denotes the uniform distribution on $(a,b)$. 
\end{remark}
\apxonly{\textbf{TODO:} Since characteristic function $\phi_Z$ in the vicinity of zero contains information about the tails of
	$P_Z$, can we deduce something about $\eta(A)$ from regularity of $\phi_Z$?}
\begin{proof} 
The	equivalence of $2$ and $3$ follows from Remark~\ref{rmk:FTVconcave}.

	For $1\Rightarrow 2$, choosing $P=(1-t)\delta_0 + t\delta_x$ and $Q=\delta_0$, we have $\TV(P*P_Z, Q*P_Z) = t \theta(x)$.
	Optimizing over $x \in [-A,A]$ yields $\FTV(t)=t$, provided that $t \cost(A)\le a$.

	Before proceeding further, we notice that for any 
	channel $P_{Y|X}$ with Dobrushin coefficient
	$\etaTV$ and any measure $\nu$ on $\mathcal{X}$ such that $\int \diff \nu = 0$ we have
	$$ \TV(P_{Y|X} \circ \nu, 0) \le \etaTV \TV(\nu, 0)\,, $$
	where here and below the total variation distance defined in 
	\prettyref{eq:tv} naturally extended to non-probability measures as follows: 
		$$ \TV(\nu, \mu)={1\over2} \int |\diff \nu - d\mu|\,.$$
	Next, by representing $\nu = \nu^+ - \nu^-$ and playing with
	scaling $\nu^+$ or $\nu^-$ we get the result of~\cite[Lemma 3.2]{CIR93}:
	$$\TV(P_{Y|X} \circ \nu, 0) \le \etaTV \TV(\nu, 0) + {1-\etaTV \over 2}\left|\int \diff \nu\right| $$ 
	
	Now we prove $3\Rightarrow 1$. Fix arbitrary $(P,Q)\in\matg_a$ and choose large $A>0$. 
	Let $P_1, Q_1$ be
	restrictions of $P$ and $Q$ to the closed ball
		$$ B \eqdef \{x: |x| \le A\} $$
	and $P_2 = P-P_1, Q_2=Q-Q_1$. 	
	By~\cite[Lemma 3.2]{CIR93} we have then:
	$$ \TV(P_1*P_Z, Q_1*P_Z) \le \eta \TV(P_1, Q_1) + {1-\eta \over 2} \left|P(B) - Q(B)\right|\,, \qquad
		\eta \eqdef \eta(A)\,.$$
		Since $(P,Q)\in\matg_a$, applying Markov's inequality yields $P(B^c) + Q(B^c) \le {2a\over \cost(A)}$
		and thus
		$$ \TV(P_2, Q_2) \le {a \over \cost(A)}. $$
	Also, since $P(\matx)-Q(\matx) = 0$, we have
	$$ \left|P(B) - Q(B)\right| = |P(B^c) - Q(B^c)| \le {2a \over \cost(A)}.  $$
	Putting it all together and using triangle inequality, we have
	\begin{align*} \TV(P*P_Z, Q*P_Z) &\le \TV(P_1*P_Z, Q_1*P_Z) + \TV(P_2 * P_Z, Q_2 * P_Z) \\
			&\le \TV(P_1*P_Z, Q_1*P_Z) + \TV(P_2, Q_2)\\
			&\le \eta \TV(P_1, Q_1) + {1-\eta\over2} \left|P(B) - Q(B)\right| + \TV(P_2, Q_2) \\
			&= \eta \TV(P, Q) + (1-\eta)\left({1\over2}\left|P(B) - Q(B)\right| + \TV(P_2,
			Q_2)\right)\label{eq:ttx1}\\
			&\le \eta \TV(P,Q) + (1-\eta) {2a\over \cost(A)}\,,
		\end{align*}			
	where the equality step follows from 
	the crucial fact that $\TV(P, Q) = \TV(P_1,Q_1)+\TV(P_2,Q_2)$, due to the disjointedness of supports.

	By the arbitrariness of $(P,Q)$, we have shown that for every $A>0$ and $t$,
	$$  \FTV(t) \le \eta(A) t  + (1-\eta(A)) {2a\over \cost(A)}\,.$$
	Thus if $\FTV(t)=t$ for some $t>0$, then $ (1-\eta(A)) t \le (1-\eta(A)) {2a\over \cost(A)} $ for all $A>0$.
	Therefore we must have $\eta(A)=1$ whenever $\cost(A)>{2a\over t}$.
	\end{proof}

\subsection{Bounds on $\FTV$ via coupling}
\label{sec:coupling}

\begin{theorem}\label{th:coupling} 
Define $\thetalb(s) = \sup_{x:|x| \leq s} \theta(x)$ and let $\theta_c: \reals_+ \to [0,1]$ be the concave envelope (\ie, the smallest concave majorant)
	of $\thetalb$ on $\mreals_+$, 
%	namely the smallest concave function such that 
%$$ \theta(x) \le \theta_c(|x|) \qquad x\in \mreals^d\,.$$
		Then
\begin{equation}\label{eq:coupling}
%			t \theta\left(2a\over t\right) \le \FTV(t) \le t \theta_c\left({2a\over t}\right) 
%			t \theta\left(2 \pth{a\over t}^{\frac{1}{p}}\right) \le \FTV(t) \le t \theta_c\left(2 \pth{a\over t}^{\frac{1}{p}}\right) 
t \thetalb\left(2 \cost^{-1}\pth{a\over t}\right)  \le \FTV(t) \le t \theta_c\left(2 \cost^{-1}\pth{a\over t}\right) 
\end{equation}	
\end{theorem}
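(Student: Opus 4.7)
The plan is to prove the lower bound by an explicit two-point construction and the upper bound by a coupling-and-Jensen estimate combined with the scaling identity from \prettyref{prop:yw}.

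For the lower bound, I would test with the perturbation $P = (1-t)\delta_0 + t\delta_x$ and $Q = (1-t)\delta_0 + t\delta_{-x}$. Then $\TV(P, Q) = t$, and the cost budget $\EE_P[\cost(|X|)] + \EE_Q[\cost(|X|)] = 2t\,\cost(|x|) \le 2a$ forces $|x| \le \costi(a/t)$. A direct computation of the convolutions gives
\[
\TV(P \ast P_Z, Q \ast P_Z) \;=\; t\,\TV(P_{Z+x}, P_{Z-x}) \;=\; t\,\theta(2x),
\]
and optimizing over $|x| \le \costi(a/t)$ yields $\FTV(t) \ge t\,\thetalb(2\costi(a/t))$.

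For the upper bound I would begin from the convexity estimate
\[
\TV(P \ast P_Z, Q \ast P_Z) \;\le\; \EE_\pi[\theta(X - Y)],
\]
which holds for any coupling $\pi$ of $X \sim P$, $Y \sim Q$ (by writing each convolution as a mixture and using convexity of $\TV$), and I would first handle the base case $t = 1$. Take the independent coupling $\pi = P \otimes Q$, use $\theta(v) \le \thetac(|v|)$, apply Jensen to the concave function $\thetac$, and then the triangle inequality $|X - Y| \le |X| + |Y|$ together with the fact that $\thetac$ is nondecreasing (inherited from $\thetalb$), to obtain $\TV(P \ast P_Z, Q \ast P_Z) \le \thetac(\EE|X| + \EE|Y|)$. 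A further Jensen applied to the convex $\cost$ gives $\EE|X| \le \costi(\EE\cost(|X|))$, and concavity of $\costi$ (with $\costi(0) = 0$) combined with the budget $\EE_P\cost(|X|) + \EE_Q\cost(|X|) \le 2a$ yields $\EE|X| + \EE|Y| \le 2\costi(a)$. Consequently $\FTV(1, a) \le \thetac(2\costi(a))$.

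To obtain the result for general $t \in (0,1]$, I would invoke the scaling identity $\FTV(t, a) = t\,\FTV(1, a/t)$ of \prettyref{prop:yw} and substitute the base-case bound with $a$ replaced by $a/t$. The main obstacle is arranging the argument so that the cost budget $2a$ is converted into a bound on $\EE|X-Y|$ with the correct $t$-dependence: a naive maximal-coupling attack aimed directly at general $t$ (rescaling the Jordan pieces $P^+/t^\ast$, $Q^+/t^\ast$ to probability measures) naturally produces the form $t^\ast\,\thetac(2\costi(a/t^\ast))$ with $t^\ast = \TV(P,Q) \le t$, and turning this into $t\,\thetac(2\costi(a/t))$ effectively amounts to re-deriving \prettyref{prop:yw}. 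Separating the two ingredients---an independent-coupling estimate at $t = 1$, where both marginal budgets are fully used, followed by the scaling---is what makes the argument clean.
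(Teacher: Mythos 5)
Your proof is correct, and the upper-bound half takes a genuinely different route from the paper. The lower bound is the same two-point construction (and your computation $\TV(P*P_Z,Q*P_Z)=t\,\theta(2x)$ is the right one for matching the claimed bound $t\,\thetalb(2\costi(a/t))$). For the upper bound, the paper works directly at general $t$: starting from $\TV(P*P_Z,Q*P_Z)\le \EE_\pi[\theta(A-B)]$ it conditions on the event $\{A\neq B\}$, applies Jensen to $\thetac$ and to $\cost$ \emph{conditionally} to get $\EE[\theta(|A-B|)]\le \PP[A\neq B]\,\thetac\bigl(2\costi(a/\PP[A\neq B])\bigr)$, checks that $s\mapsto s\,\thetac(2\costi(a/s))$ is increasing, and then invokes Strassen's theorem to choose the maximal coupling with $\PP[A\neq B]=\TV(P,Q)$. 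You instead prove only the $t=1$ case — where the independent coupling and unconditional Jensen suffice because the prefactor is $1$ and the full budget $2a$ is available — and then transport to general $t$ via the homogeneity $\FTV(t,a)=t\,\FTV(1,a/t)$ of \prettyref{prop:yw}. Both are sound; your version avoids the conditional-expectation bookkeeping and Strassen's theorem, at the cost of leaning on the Jordan-decomposition rescaling hidden inside \prettyref{prop:yw}. What the paper's route buys is the coupling-level intermediate estimate $\EE[\theta(|A-B|)]\le \PP[A\neq B]\,\thetac\bigl(2\costi(a/\PP[A\neq B])\bigr)$ for an \emph{arbitrary} coupling, which is reused verbatim later (in the Gibbs-measure and noisy-circuit arguments), whereas your argument only delivers the endpoint inequality on $\FTV$ itself.
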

\begin{remark} Note that for the upper bound~\eqref{eq:coupling} to be non-trivial, \ie, better than $\FTV(t)\le t$, for all
	$t>0$, it is necessary and sufficient to have $\theta_c(|x|)<1$ for all $x$. This is consistent with Theorem~\ref{th:ftv_contracts}. 
\end{remark}
\begin{proof} 
Recall that $\TV(P_{Z+a},P_{Z+b})=\theta(a-b)$, by definition of the function $\theta$ in \prettyref{eq:theta}.
Fix any $(P,Q)\in\matg_a$.
The map $(P,Q)\mapsto \TV(P,Q)$ is convex (as is any Wasserstein distance), thus for any coupling $P_{AB}$ with $P_A=P$ and $P_B=Q$ we have
	\begin{equation}\label{eq:tc1}
			\TV(P*P_Z, Q*P_Z) \le \EE[\theta(A-B)]
\end{equation}	
	Furthermore, $\theta_c$ is necessarily continuous on $(0,\infty)$, strictly increasing on $\{x: \theta_c(x) <
	1\}$ and concave. Thus,
		\begin{align} \EE[\theta(|A-B|)] =& ~\PP[A\neq B] \EE[\theta(A-B) \,|\, A\neq B] \\
			     \le & ~\PP[A\neq B] \EE[\theta_c(|A-B|) \,|\, A\neq B]\\
			     \le & ~\PP[A\neq B] \theta_c\left(\EE[|A-B| \,|\, A\neq B]\right)\label{eq:tc3}
		     \end{align}
%	\begin{align} \EE[\theta(|A-B|)] &=\PP[A\neq B] \EE[\theta(|A-B|) \,|\, A\neq B] \\
%			     &\le \PP[A\neq B] \EE[\theta_c(|A-B|) \,|\, A\neq B]\\
%			     &\le \PP[A\neq B] \theta_c\left(\EE[|A-B| \,|\, A\neq B]\right)\label{eq:tc3}\\
%			     &\le \PP[A\neq B] \theta_c\left(\EE[|A-B|^p \,|\, A\neq B]^{1/p}\right)\label{eq:tc8}\\
%			     &= \PP[A\neq B] \theta_c\left(\left(\EE[|A-B|^p] \over \PP[A\neq B]\right)^{1/p}\right)\\
%			     &\le \PP[A\neq B] \theta_c\left(\left(2^{p-1}\EE[|A|^p+|B|^p] \over \PP[A\neq B]\right)^{1/p}\right)\\
%			     &\le \PP[A\neq B] \theta_c\left(2a^{1/p} \over \PP[A\neq B]^{1/p}\right) \label{eq:tc2}
%		     \end{align}
%\prettyref{eq:tc8} is by H\"older's inequality, 
% and \prettyref{eq:tc2} is by the constraint that $\Expect |A|^p + \Expect |B|^p \leq 2 a$.
where \prettyref{eq:tc3} is by Jensen's inequality and the concavity of $\theta_c$. Then
%Since $\cost$ is an increasing and convex cost function with $\cost(0)=0$, we have
\begin{align}
\cost\pth{\frac{\Expect[|A-B||A\neq B]}{2}}
\leq & ~ \EE\qth{\cost\pth{\frac{|A-B|}{2}} \Big| A\neq B} \label{eq:M1}\\
= &~ \frac{1}{\prob{A \neq B}} \EE\qth{\cost\pth{\frac{|A-B|}{2}}}	\label{eq:M2}\\
\leq & ~ \frac{1}{\prob{A \neq B}} \EE\qth{\cost\pth{\frac{|A|+|B|}{2}}}	\label{eq:M3}\\
\leq & ~ \frac{\EE[\cost(|A|)]+\EE[\cost(|B|)]}{2\prob{A \neq B}} \label{eq:M4}\\
\leq & ~ \frac{a}{\prob{A \neq B}} \label{eq:M5}
\end{align} 
where \prettyref{eq:M1} and \prettyref{eq:M4} are by Jensen's inequality and the convexity of $\cost$, \prettyref{eq:M2} is by $\cost(0)=0$, \prettyref{eq:M3} is by the monotonicity of $\cost$, and 
\prettyref{eq:M5} is by the constraint $(P,Q)\in \calG_a$.
Applying $\cost^{-1}$ to both sides of \prettyref{eq:M5} and plugging into \prettyref{eq:tc3}, we obtain
		\begin{align} 
		\EE[\theta(|A-B|)] \le \PP[A\neq B] \theta_c\left(2 \cost^{-1} \pth{\frac{a}{\prob{A \neq B}}}\right). \label{eq:tc2}
		     \end{align}
Note that both $\cost^{-1}$ and $\theta_c$ are increasing concave functions. Thus their composition
$\theta_c\circ2 \cost^{-1}$ is concave and increasing too. Furthermore it is easy to show that 
\begin{equation}\label{eq:tpersp}
		t \mapsto t \theta_c\left(2 \cost^{-1} \pth{\frac{a}{t}}\right)  
\end{equation}
is increasing. Hence the upper bound~\eqref{eq:tc2} is tightest for the coupling minimizing
$\PP[A\neq B]$.  Recall that by Strassen's characterization~\cite{strassen.marginal} we have
     \begin{equation}\label{eq:tc4}
	     	\inf_{P_{AB}} \PP[A\neq B] = \TV(P,Q),
\end{equation}	
where the infimum is over all couplings $P_{AB}$ of $P$ and $Q$ such that $P_A=P$ and $P_B=Q$. Then~\eqref{eq:tc1} and~\eqref{eq:tc2} and the continuity of
	$\theta_c$ imply the upper bound in~\eqref{eq:coupling}.

For the lower bound, we choose 
\begin{align} P &= (1-t) \delta_0 + t \delta_{x} \label{eq:tc5a}\\
	Q &= (1-t) \delta_0 + t \delta_{-x} \label{eq:tc5b} 
\end{align}	
with $|x| \leq \cost^{-1}(a/t)$, which ensures that $(P,Q)\in\calG_a$.
It is straightforward to show that $\TV(P,Q)=t$ and $\TV(P*P_Z,Q*P_Z) = t \theta(x)$.
Taking the supremum over $x$ yields the left inequality of \prettyref{eq:coupling}.
\end{proof}

\apxonly{\textbf{Remark:} The proof of Theorem~\ref{th:coupling} shows something more than stated. This extension is in
the spirit of~\cite{RLD70}. Let $W_\rho(P,Q)$ be Wasserstein distance with respect to some translation invariant metric
$\rho(\cdot, \cdot)$. Let
$$ \theta(x) = W_\rho(P_Z, P_{Z+x}) = W_\rho(P_Z, P_{Z-x})$$
and $\theta_c$ be its concave envelope on $\mreals_+$. Now consider two Markov chains as in~\eqref{eq:mc1}, started at
$P_{X_1}= P$ and $Q_{X_1}=Q$:
\begin{align}
	X_1 &\to Y_1 \to X_2 \to Y_2 \to \cdots \to X_n \to Y_n \to \cdots \\
	X'_1 &\to Y'_1 \to X'_2 \to Y'_2 \to \cdots \to X'_n \to Y'_n \to \cdots
\end{align}
And suppose that there already exist a coupling between the two chains satisfying (for each $k\le n$):
$$ \EE[\rho(X_k, X_k')] \le \alpha_k $$
Then the coupling can be extended to $k=n+1$ so that
\begin{equation}\label{eq:mcx}
		\EE[\rho(Y_{n+1}, Y'_{n+1})] \le f(\alpha_{n+1}), \quad f(t) = t \theta_c\left(2 \pth{a\over t}^{\frac{1}{p}}\right)
\,.
\end{equation}
When $\rho(a,b) = {1\over 2}1\{a\neq b\}$ this yields total variation, so that the estimate~\eqref{eq:mcx} implies
$$ \EE[\rho(X_{n+1}, X'_{n+1})] \le f(\alpha_{n+1}) $$
and hence the induction can be continued to produce coupling between $(X^\infty, Y^\infty, (X')^\infty, (Y')^\infty)$.
}

\begin{corollary} If the dimension $d=1$ and $\thetalb$ is concave on $\mreals_+$ then 
\begin{equation}
	\FTV(t) = t \thetalb\left(2 \costi\left(a\over t\right)\right)\,.  \label{eq:Ftv-Z}
\end{equation}
	\label{cor:Ftv-Z}
\end{corollary}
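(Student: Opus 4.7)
The plan is to observe that this corollary is essentially a direct squeeze from Theorem~\ref{th:coupling}, once one verifies that the concave envelope $\theta_c$ coincides with $\thetalb$ under the stated hypothesis. Theorem~\ref{th:coupling} already provides the two-sided inequality
\[
t\,\thetalb\!\left(2\costi(a/t)\right)\;\le\;\FTV(t)\;\le\;t\,\theta_c\!\left(2\costi(a/t)\right),
\]
so it suffices to establish that $\theta_c=\thetalb$ on $\reals_+$ whenever $\thetalb$ is concave there.

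To do so, I would first note that by definition $\thetalb(s)=\sup_{|x|\le s}\theta(x)$ is automatically nondecreasing in $s\ge 0$. Since $\theta_c$ is defined as the smallest concave majorant of $\thetalb$ on $\reals_+$, and $\thetalb$ itself is assumed concave and majorizes itself, we immediately obtain $\theta_c\equiv \thetalb$ on $\reals_+$. Plugging this identity into the upper bound of Theorem~\ref{th:coupling} matches the lower bound, yielding~\eqref{eq:Ftv-Z}.

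The only subtlety worth flagging is that in the proof of Theorem~\ref{th:coupling} the function $\theta_c$ is used through Jensen's inequality (step~\eqref{eq:tc3}) and through the monotonicity of the map in~\eqref{eq:tpersp}. Both of these properties are already guaranteed by $\thetalb$ itself under the present hypotheses: concavity gives Jensen, and nondecreasingness plus concavity of $\costi$ gives the monotonicity of $t\mapsto t\,\thetalb(2\costi(a/t))$. So no new estimates are needed; the work is entirely a bookkeeping step identifying the envelope with the original function. In this sense there is no real obstacle — the corollary is a clean consequence of Theorem~\ref{th:coupling}, and all the nontrivial content (the coupling and Strassen arguments) has already been carried out there.
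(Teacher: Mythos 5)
Your proof is correct and is exactly the argument the paper intends: the corollary is stated without a separate proof precisely because, when $\thetalb$ is concave on $\reals_+$, the smallest concave majorant $\theta_c$ coincides with $\thetalb$, so the two bounds of Theorem~\ref{th:coupling} pinch together. Your additional check that the concavity/monotonicity properties used in the proof of Theorem~\ref{th:coupling} are still available is sound but not needed, since the theorem is invoked as a black box.
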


\begin{remark} 
	Examples of the noise distributions satisfying assumptions of \prettyref{cor:Ftv-Z} are given
	by~\eqref{eq:conc0} and~\eqref{eq:conc1}. Note that from concavity of $\theta$ the map
	\begin{equation}\label{eq:mt1}
		u \mapsto \thetalb(2\costi(u))
	\end{equation}		
	is also concave. Therefore, the map
	$$ (a,t) \mapsto t \, \thetalb\pth{2 \costi\left(a\over t\right)} $$ 
	is the \emph{perspective} of the concave
	function~\eqref{eq:mt1}, and hence is concave on $\mreals_+^2$~\cite[p. 161]{HUL96}. Consequently, for fixed $a>0$,
	$\FTV$ is concave, which, as we mentioned, does not immediately follow from the definition of $\FTV$.  \label{rmk:tau}\end{remark}

For the purpose of showing Theorem~\ref{th:main} we next point out the particularization of \prettyref{cor:Ftv-Z} to the AWGN channel. 
A representative plot of the $\FTV$ for the AWGN channel and average power constraint (second-order moment) is given in Fig.~\ref{fig:tvawgn}, which turns out to be dimension-independent.

\begin{corollary}[Vector Gaussian]
	Let $d \in \naturals \cup \{\infty\}$, $P_Z=\calN(0,\mathbf{I}_d)$ and $|x|=(\sum_{i=1}^d
	x_i^2)^{1/2}$ be the Euclidean norm. Then
\begin{equation}
	\FTV(t)	=  t \pth{1 - 2 \sfQ \pth{ \costi\pth{\frac{a}{t}}}}.
	\label{eq:Ftv-g}
\end{equation}
	\label{cor:Ftv-g}
\end{corollary}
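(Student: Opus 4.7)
The plan is to derive this as a direct specialization of Theorem~\ref{th:coupling}, showing that for the standard Gaussian the lower and upper bounds in~\eqref{eq:coupling} coincide. The two ingredients are an explicit formula for $\theta(x)$ and the concavity of $\thetalb$, which together force $\theta_c = \thetalb$ and reduce the two bounds to a single identity.

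First I would compute $\theta(x) = \TV(\calN(0,\mathbf{I}_d), \calN(x,\mathbf{I}_d))$ by reducing to one dimension. By rotation invariance of the isotropic Gaussian, $\theta(x)$ depends only on $r = |x|$, so one may assume $x = r e_1$ where $e_1$ is the first standard basis vector. In this frame the two measures share identical conditional distributions on the $(d-1)$-dimensional subspace orthogonal to $e_1$, so their total variation equals that of the first-coordinate marginals; by the already-stated identity~\eqref{eq:tvnorm} this gives $\theta(x) = 1 - 2\sfQ(|x|/2)$. The same reduction handles $d = \infty$ upon projection onto the finite-dimensional subspace spanned by $x$. Since $\theta$ is a monotone function of $|x|$, it follows immediately that $\thetalb(s) = \sup_{|x|\le s}\theta(x) = 1 - 2\sfQ(s/2)$.

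Next I would check that $\thetalb$ is concave on $\mreals_+$, either by noting $\thetalb'(s) = \frac{1}{2}\phi(s/2)$ is non-increasing (where $\phi$ is the standard Gaussian density), or, more structurally, by appealing to~\eqref{eq:conc1}, which gives concavity for any symmetric unimodal noise density via $\theta(x) = \Prob[|Z|\le x/2]$. With $\thetalb$ concave, the concave envelope in Theorem~\ref{th:coupling} satisfies $\theta_c = \thetalb$, and the upper and lower bounds in~\eqref{eq:coupling} both evaluate to $t\thetalb(2\costi(a/t)) = t\bigl(1 - 2\sfQ(\costi(a/t))\bigr)$, which is the claimed formula.

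There is really no major obstacle here; the only slightly delicate point is the dimension-free reduction via rotation invariance, which is what makes the answer independent of $d$ (including the $d=\infty$ case). Everything else is a bookkeeping application of Theorem~\ref{th:coupling} to a specific noise distribution.
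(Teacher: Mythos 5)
Your proposal is correct and follows essentially the same route as the paper: the paper's proof simply invokes~\eqref{eq:tvnorm} to conclude $\thetalb(u)=\theta_c(u)=1-2\sfQ(u/2)$ in every dimension and then applies Theorem~\ref{th:coupling}, which is exactly your argument with the rotation-invariance reduction (already implicit in~\eqref{eq:tvnorm}) and the concavity check spelled out explicitly.
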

\begin{proof} From~\eqref{eq:tvnorm} we have that $\thetalb(u)=\thetac(u) = 1-2\sfQ(u/2)$ regardless of dimension and thus
	Theorem~\ref{th:coupling} yields~\eqref{eq:Ftv-g}.
\end{proof}

\begin{figure}[t]
	\centering
\ifpdf
	\includegraphics[width=.4\textwidth]{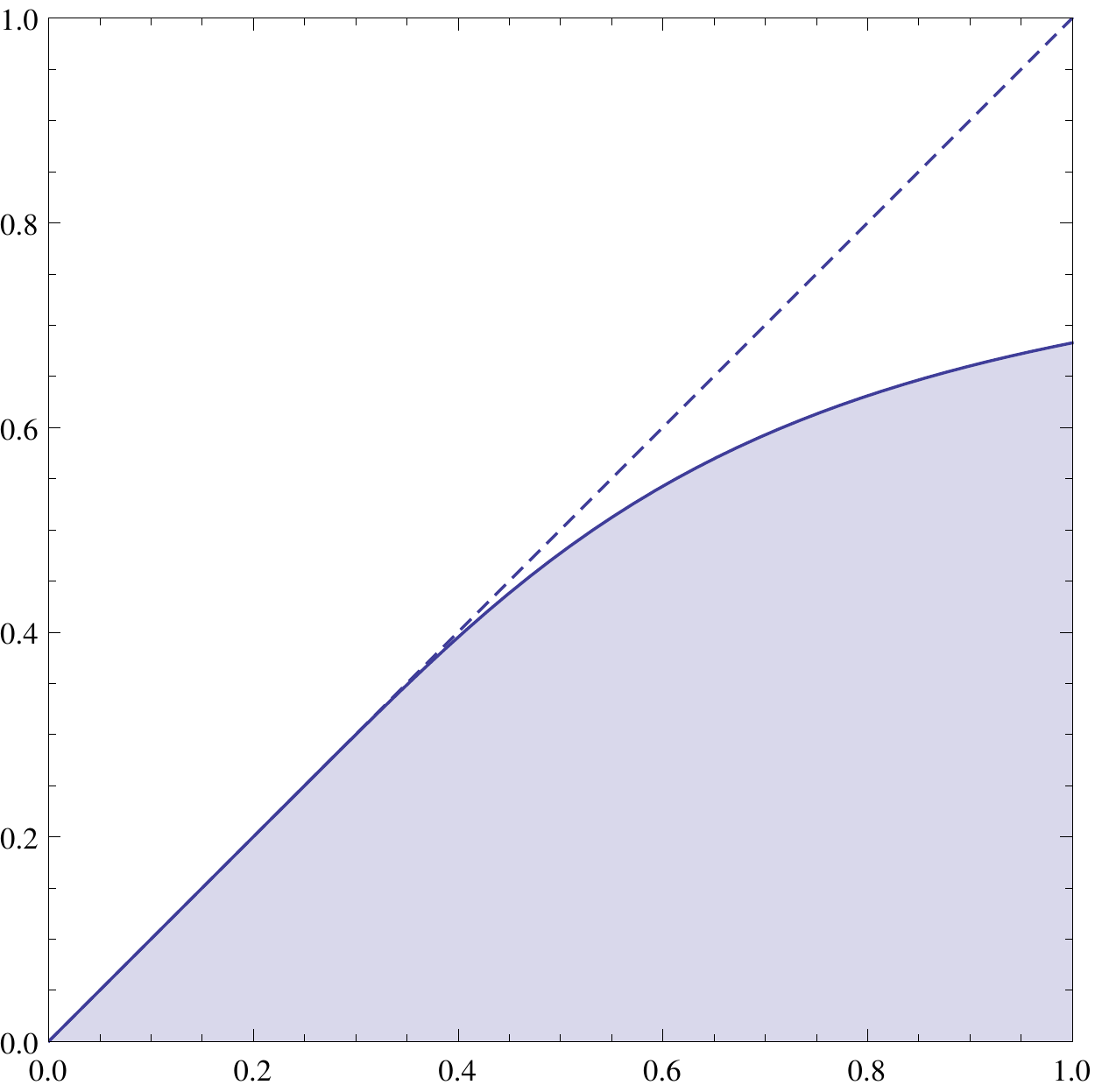}
%	\put (-50,125) {$\FTV$}
	\put (-50,105) {$\FTV(t)$}
\else
	\includegraphics[width=.4\textwidth]{tv-awgn.eps}
\fi
	\caption{The region $\{(\TV(P,Q), \TV(P*P_Z, Q*P_Z)): (P,Q) \in \calG_1 \}$ in the Gaussian case $Z \sim
	\calN(0,1)$ with $\cost(x)=|x|^2$.}
	\label{fig:tvawgn}
\end{figure}

\subsection{By-product: CLT in smoothed total variation}
\label{sec:clt}

Recall the following 1-Wasserstein distance between distributions with finite first moment:
\begin{equation}
	W_1(P,Q) = \inf_{P_{AB}} \{\|A-B\|_1: P_A=P,P_B=Q\}.
	\label{eq:w1}
\end{equation}
%where the infimum is over all couplings $P_{XY}$ of $P$ and $Q$ such that $P_X=P$ and $P_Y=Q$.
Then the same coupling method in the proof of \prettyref{th:coupling} yields the following bound, which relates the total variation between convolutions to the $W_1$ distance.
\begin{prop} 
If $P_Z$ has a symmetric density which is non-increasing on $\mreals_+$. Then for any $P$ and $Q$, 
	\begin{equation}
\TV(P * P_Z, Q * P_Z) \leq \prob{|Z| \leq \frac{W_1(P, Q)}{2}}.
	\label{eq:tv-w1}
\end{equation}
%with equality if and only if $P$ and $Q$ are Dirac measures.
	\label{prop:tv-w1}
\end{prop}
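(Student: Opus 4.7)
The plan is to follow the same coupling recipe used for \prettyref{th:coupling}, specialized to the present setting where no cost constraint is imposed on $(P,Q)$ but $P_Z$ is assumed to have a symmetric, unimodal density.

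First, I would use the convexity of total variation in the joint distribution: for any coupling $P_{AB}$ with marginals $P_A=P$ and $P_B=Q$, writing $P*P_Z = \Expect[P_{Z+A}]$ and similarly for $Q$,
\[
\TV(P*P_Z,\, Q*P_Z) \;\le\; \Expect[\TV(P_{Z+A}, P_{Z+B})] \;=\; \Expect[\theta(A-B)],
\]
exactly as in \prettyref{eq:tc1}. Under the hypothesis that $P_Z$ has a symmetric density non-increasing on $\mreals_+$, \prettyref{eq:conc1} gives the explicit formula $\theta(x) = \Prob[|Z|\le |x|/2] = F(|x|/2)$, where $F$ denotes the CDF of $|Z|$.

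Next, the key structural observation is that $F$ is concave on $\mreals_+$: because $Z$ has a symmetric density non-increasing on $\mreals_+$, $|Z|$ has a non-increasing density on $\mreals_+$, so its CDF is concave. Thus Jensen's inequality applied to $F$, along the nonnegative random variable $|A-B|/2$, yields
\[
\Expect[\theta(A-B)] \;=\; \Expect\!\left[F\!\left(\tfrac{|A-B|}{2}\right)\right] \;\le\; F\!\left(\tfrac{\Expect|A-B|}{2}\right) \;=\; \Prob\!\left[|Z|\le \tfrac{\Expect|A-B|}{2}\right].
\]

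Finally, since $F$ is non-decreasing, the bound is tightest for the coupling minimizing $\Expect|A-B|$. Taking the infimum over all couplings and invoking the Kantorovich--Rubinstein characterization $\inf_{P_{AB}} \Expect|A-B| = W_1(P,Q)$ from \prettyref{eq:w1} delivers the claimed inequality. There is no real obstacle here; the only point needing care is the concavity of $F$, which licenses Jensen's inequality and is the reason the hypothesis on $P_Z$ (symmetric, non-increasing density on $\mreals_+$) enters the statement.
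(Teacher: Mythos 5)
Your proposal is correct and follows essentially the same route as the paper's proof: the coupling bound \prettyref{eq:tc1}, the concavity of $\theta(x)=\Prob[|Z|\le x/2]$ from \prettyref{eq:conc1}, Jensen's inequality, and optimization over couplings to reach $W_1$. The only addition is your explicit justification of the concavity of the CDF of $|Z|$, which the paper takes for granted via \prettyref{eq:conc1}.
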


\begin{proof}
By \prettyref{eq:conc1}, the function $\theta(x) = \Prob[|Z|\leq x/2]$ is concave and non-decreasing in $x$. Applying Jensen's inequality to \prettyref{eq:tc1} and optimizing over the coupling yields \prettyref{eq:tv-w1}.
\end{proof}

\begin{remark}
	It is worth mentioning that for Gaussian smoothing, using similar coupling and convexity arguments,
%$P_Z=\calN(0,\sigma^2)$
the following counterpart of \prettyref{eq:tv-w1} for KL divergence has been proved in \cite{hwi},
which provides a simple proof of Otto-Villani's HWI inequality \cite{OV00} in the Gaussian case:
\[
D(P * \calN(0,\sigma^2) \| Q * \calN(0,\sigma^2)) \leq \frac{W_2^2(P,Q)}{2\sigma^2},
\]
where the $W_2$ distance is analogously defined as \prettyref{eq:w1} with $L_2$-norm replacing $L_1$-norm.
%	\label{rmk:}
\end{remark}

%\textbf{TODO:} Add proof, define $W_1$, explain conditions for equality!

In particular, if $P_Z$ has a bounded density near zero, then the right-hand side of \prettyref{eq:tv-w1} is $O(W_1(P,Q))$. 
As an application, we consider a central limit theorem setting and let
%relate the convergence rate in the central limit theorem in $W_1$ distance to that of the total variation smoothed by Gaussian convolution. Let
$$ S_n = {1\over \sqrt{n}} \sum_{j=1}^n X_j, $$
where $X_j$ are iid, zero-mean and unit-variance.
Choosing $P_Z = \matn(0, \sigma^2)$ and applying \prettyref{prop:tv-w1} to $P_{S_n}$ and $\calN(0,1)$, we obtain
\begin{equation}
\TV( P_{S_n} *\matn(0,\sigma^2), \matn(0, 1+\sigma^2)) \le \frac{W_1(P_{S_n},\calN(0,1))}{\sqrt{2\pi \sigma^2}} \leq {3 \, \EE[|X_1|^3] \over \sqrt{2 \pi \sigma^2 n}}		
	\label{eq:smooth-clt}
\end{equation}
where the convergence rate in $W_1$ can be obtained from Stein's method and the dual representation of $W_1(P,Q) = \sup\{\int f\diff P-\int f\diff Q: f \text{ is 1-Lipschitz}\}$ (see, \eg, \cite[Theorem 3.2]{BC05}). In other words, smoothing the law of $S_n$ by convolving with a Gaussian density (or any other bounded density that satisfies the conditions of \eqref{eq:conc1}) results in a distribution that is closer in total variation to the Gaussian distribution. On the other hand, the law of $S_n$ might never converge to Gaussian (\eg, for discrete $X_1$).

The non-asymptotic estimate \prettyref{eq:smooth-clt} should be contrasted with the sharp asymptotics of total variation in CLT due to Sirazhdinov and Mamatov \cite{SM62}, which states that the left-hand side of \prettyref{eq:smooth-clt} is equal to $\frac{(1+4\eexp^{-3/2})\Expect[X_1^3]}{6\sqrt{2 \pi n (1+\sigma^2)^3}} (1+o(1))$ when $n\diverge$ and $\sigma$ is fixed.

\apxonly{note to apply \cite{SM62} we need to normalize to unit variance by $1/\sqrt{1+\sigma^2}$. absorb convolution with $\calN(0,\sigma^2)$ into the normalizing sum. note that $X_1+Z_1$ has a density and $\Expect[(X+\sigma Z)^3] = \Expect[X^3]$.}

%\begin{coro}
%\begin{equation}
%\sup_{\substack{(P,Q) \in \calG_a: \\ \TV(P || Q) \leq t}} \TV(P*\calN || Q* \calN)   =  t \pth{1 - 2 \sfQ \pth{\frac{a}{t}}}.
%	\label{eq:Ftv-curve}
%\end{equation}
%	\label{cor:Ftv}
%\end{coro}

%\nbr{The optimal region coincides with those for binary probability measures. Very important question: since we are maximizing a convex function on a convex region, the maximum always occur on the extremal points. Does it mean we can focus on binary distributions? I do not see it.
%\nb{also, does this mean: to transmit one-bit, antipodal signaling is optimal non-asymptotically in terms of average probability error?}
%}

%the monotonicity and concavity (shapes) of the AWGN TV function (the channel characteristic) is probably not essential. some truncation argument probably will also do. also, first-moment condition might also be relaxed.

%in other words, in terms of the $F$-curve of TV, we have
%\[
%F_{\TV}(t) \leq t \pth{1 - 2 \sfQ \pth{\frac{\Expect|X| + \Expect|Y|}{2 t}}}.
%\]
%now, what is the rate of convergence?

%\subsection{Regarding $F_{KL}$}

\section{From total variation to $f$-divergences}
\label{sec:mate}

The main apparatus for obtaining the Dobrushin curve of total variation in \prettyref{th:coupling} is the infimum-representation via couplings, 
thanks to the special role of the total variation as a Wasserstein distance.
Unfortunately such representation is not known for other divergences such as the Hellinger distance or KL divergence.
To extend the contraction property of total variation, our strategy is as follows: We first study a special family of $f$-divergences $\{\mate_{\gamma}(P\|Q): \gamma > 0\}$,
 which enjoys the same contraction property as the total variation for any channel. Then using an integral representation of general $f$-divergences \cite{CKZ98} in terms of $\mate_{\gamma}$, we extend the contraction results in \prettyref{sec:coupling} for additive-noise channels to $f$-divergences, in particular, R\'enyi divergences.
%rely on an integral representation of general $f$-divergence \cite{CKZ98} in terms of a special family of $f$-divergences $\{\mate_{\gamma}(P\|Q): \gamma > 0\}$ defined below.

\subsection{A parameterized family of $f$-divergences}
For a pair of distributions $P,Q$, define the following family of $f$-divergences parameterized by $\gamma \geq 0$:
\begin{equation}
 \mate_{\gamma}( P\| Q) = \frac{1}{2} \int |\diff P - \gamma \diff Q| - \frac{1}{2} |1-\gamma|\,.
	\label{eq:mate}
\end{equation}
%Here and below we assume that $P\sim Q$, everything has densities, integrable etc. 
Typical plots of $\gamma \mapsto \mate_{\gamma}( P\| Q)$ are given in \prettyref{fig:mate} where $P$ and $Q$ are Gaussians or Bernoullis.

\begin{figure}[ht]
	\centering
	\ifpdf
		\subfigure[$P=\calN(1,1)$, $Q=\calN(0,1)$.]%
	{\label{fig:mate-g} \includegraphics[width=.4\columnwidth]{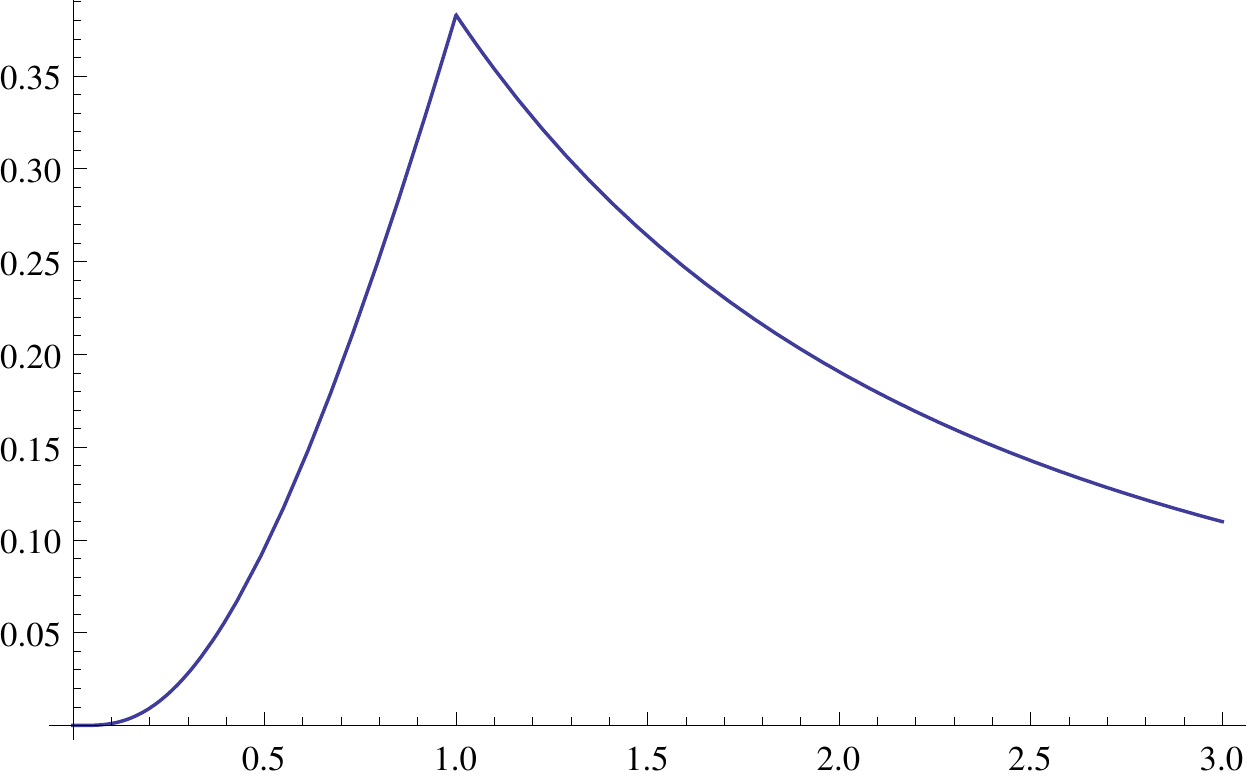}}
	\subfigure[$P=\text{Bern}(0.5)$, $Q=\text{Bern}(0.1)$.]%
	{\label{fig:mate-b} \includegraphics[width=.4\columnwidth]{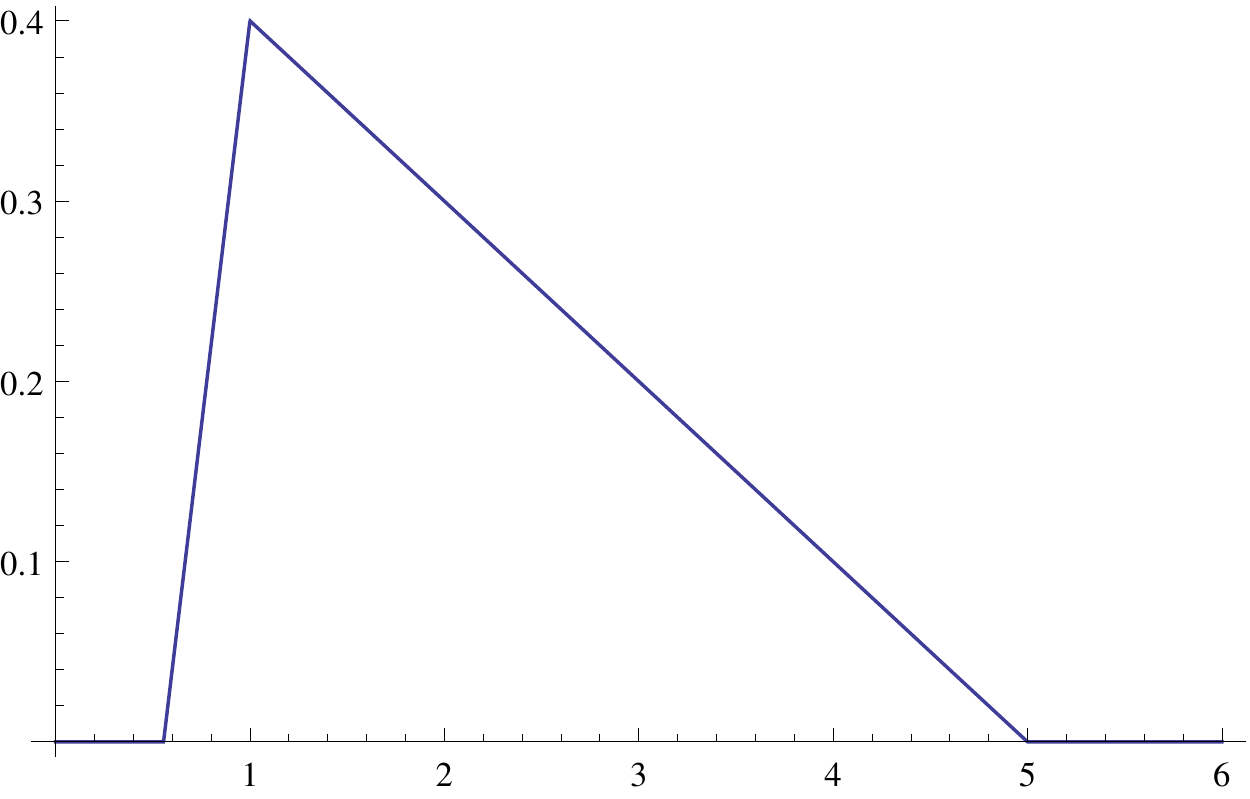}}
\else
		\subfigure[]%
	{\label{fig:mate-g} \includegraphics[width=.4\columnwidth]{mate.eps}}
	\subfigure[$P=\text{Bern}(0.5)$, $Q=\text{Bern}(0.1)$.]%
	{\label{fig:mate-b} \includegraphics[width=.4\columnwidth]{mate-binary.eps}}
\fi
	\caption{Plot of $\gamma \mapsto \mate_{\gamma}( P\|Q)$.}
	\label{fig:mate}
\end{figure}

Some general properties of $\mate_\gamma$ are as follows:
\begin{enumerate}
	\item $\mate_0(P\|Q)=\mate_\infty(P\|Q)=0$.
	\item $\mate_1(P\|Q)=\TV(P,Q)$.
	\item $\gamma \mapsto \mate_\gamma(P\|Q)$ is convex, positive, increasing on $[0,1]$, and decreasing on $[1,+\infty)$.
%	\item $\mate$ is decreasing, convex, positive on $[1,+\infty)$
	\item Reciprocity:
	$$ \mate_{\gamma}( P\| Q) = \gamma \mate_{\gamma^{-1}}( Q\| P). $$
	\item Derivative of $\mate$ recovers $Q[\fracd{P}{Q} < \gamma]$: 
	\begin{align} {d\mate_\gamma\over d\gamma} 
	&= \frac{1}{2} \sign(1-\gamma) + Q\Big[\fracd{P}{Q} < \gamma\Big] - \frac{1}{2} \nonumber \\
		&= \indc{\gamma < 1} - Q\Big[\fracd{P}{Q} > \gamma\Big]  \label{eq:dmate}
	\end{align}		
		\item $F$-contraction property: If $P', Q'$ are outputs of $P,Q$ under some channel
		$P_{Y|X}$ with known $\FTV$, then 
		\begin{equation}\label{eq:mate_contracts}
				\mate_{\gamma}( P'\| Q') \le \FTV(\mate_{\gamma}( P\| Q)).
		\end{equation}			
		This follows from the more general result below, which shows that the divergence $\mate_\gamma$ for general $\gamma$ enjoys the same (if not better) contraction property as the total variation, \ie, $\mate_1$.
%		, which coincides with $\mate_1$.
%\nb{need to modify the following for general channels for \prettyref{eq:mate_contracts}}
\end{enumerate}

\begin{proposition}
 Assume that for each choice of $a>0$ in~\eqref{eq:Ga} the corresponding $\FTV$ curve is denoted by 
	$t \mapsto \FTV(t,a)$. Then for any channel $P_{Y|X}$ and any $(P,Q)\in \calG_a$ we have
	\begin{equation}\label{eq:mec}
			\mate_{\gamma}( P_{Y|X} \circ P\| P_{Y|X} \circ Q) \le \FTV(\mate_{\gamma}( P\| Q), a(\gamma \wedge 1))
\end{equation}	
	and, in particular, \eqref{eq:mate_contracts} holds.
	\label{prop:mate}
\end{proposition}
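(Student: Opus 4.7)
The strategy is to reduce the bound on $\mate_\gamma$ to a genuine total-variation contraction, so that $\FTV$ can be invoked directly. For $(P,Q)\in \calG_a$ and $\gamma\le 1$, I would construct a pair of probability measures $(\tilde P,\tilde Q)\in \calG_{\gamma a}$ satisfying $\TV(\tilde P,\tilde Q)=\mate_\gamma(P\|Q)$ and $\TV(P_{Y|X}\circ \tilde P, P_{Y|X}\circ \tilde Q) \ge \mate_\gamma(P_{Y|X}\circ P\|P_{Y|X}\circ Q)$; given such a pair, \eqref{eq:mec} follows immediately from the definition of $\FTV(\cdot,\gamma a)$. The case $\gamma>1$ is then handled by reciprocity: combining $\mate_\gamma(P\|Q)=\gamma\mate_{1/\gamma}(Q\|P)$ with the scaling $\FTV(\alpha t,\alpha a)=\alpha\FTV(t,a)$ from \prettyref{prop:yw} converts it to the $1/\gamma\le 1$ case applied to $(Q,P)$, yielding exactly the stated bound with parameter $a(\gamma\wedge 1)=a$.

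For the construction, let $R=P\wedge\gamma Q$ and write $P=R+P_1$, $\gamma Q=R+Q_1$, so that $P_1\perp Q_1$ are non-negative measures with $P_1(\mreals^d)=m+1-\gamma$ and $Q_1(\mreals^d)=m$, where $m\eqdef\mate_\gamma(P\|Q)=\int(\gamma\,dQ-dP)^+\le \gamma$. Then set
\[
\tilde P \eqdef \frac{m}{m+1-\gamma}\, P_1 + (1-m)\,\delta_0,\qquad \tilde Q \eqdef Q_1 + (1-m)\,\delta_0.
\]
Both are probability measures, and the mutual singularity of $P_1$ and $Q_1$ gives $\TV(\tilde P,\tilde Q) = \frac{1}{2}\int\bigl|\tfrac{m}{m+1-\gamma}\,dP_1 - dQ_1\bigr| = m$. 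Since $\cost(0)=0$, $P_1\le P$, $Q_1\le \gamma Q$, and $m/(m+1-\gamma)\le \gamma$ (equivalent to $m\le\gamma$), one obtains $\int\cost\,d\tilde P+\int\cost\,d\tilde Q\le \gamma\bigl(\int\cost\,dP+\int\cost\,dQ\bigr)\le 2\gamma a$, i.e.\ $(\tilde P,\tilde Q)\in\calG_{\gamma a}$.

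For the channel-output inequality, the common $\delta_0$-components cancel and the signed difference $P_{Y|X}\circ\tilde P - P_{Y|X}\circ\tilde Q$ equals $c\,P_{Y|X}\circ P_1 - P_{Y|X}\circ Q_1$ with $c=m/(m+1-\gamma)$. Since $P-\gamma Q=P_1-Q_1$, applying the pointwise reverse triangle inequality $|cA-B|\ge |A-B|-(1-c)A$ (for $A,B\ge 0$ and $c\le 1$) to the densities and integrating gives
\[
2\TV(P_{Y|X}\circ\tilde P,P_{Y|X}\circ\tilde Q)\ \ge\ \int|d(P_{Y|X}\circ P)-\gamma\,d(P_{Y|X}\circ Q)| - (1-c)(m+1-\gamma),
\]
and $(1-c)(m+1-\gamma)=1-\gamma$ reduces the right-hand side to exactly $2\mate_\gamma(P_{Y|X}\circ P\|P_{Y|X}\circ Q)$. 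Combining this with $\TV(P_{Y|X}\circ\tilde P,P_{Y|X}\circ\tilde Q)\le\FTV(\TV(\tilde P,\tilde Q),\gamma a)=\FTV(m,\gamma a)$ delivers \eqref{eq:mec}. The main subtlety is the discovery of the weighting $m/(m+1-\gamma)$: it is precisely the ratio that makes $\TV(\tilde P,\tilde Q)$ collapse to exactly $m$ (any larger value would weaken the bound through monotonicity of $\FTV$), while simultaneously satisfying $m/(m+1-\gamma)\le\gamma$, which is what tightens the cost parameter from $a$ to $\gamma a$.
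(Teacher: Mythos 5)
Your proof is correct. For $\gamma\le 1$ it is essentially the paper's argument in different packaging: the paper works with the signed measure $\nu=c(P-\gamma Q)^+-(P-\gamma Q)^-$ with the same constant $c=m/(m+1-\gamma)$, verifies the same cost bound $2\gamma a$ and the same identity $\TV(\nu,0)=m$, and uses the same triangle-inequality step via $P-Q=\nu+(1-c)(P-\gamma Q)^+$; your $\tilde P,\tilde Q$ are just $\nu^{\pm}$ padded with $(1-m)\delta_0$ to make honest probability measures, which is exactly how the paper justifies invoking $\FTV$ on signed measures in the first place. Where you genuinely diverge is the case $\gamma>1$: the paper repeats the construction symmetrically with $\nu=(P-\gamma Q)^+-c(P-\gamma Q)^-$, $c=m/(m+\gamma-1)$, and bounds the cost by $2a$ directly, whereas you reduce to the $\gamma\le1$ case via the reciprocity $\mate_\gamma(P\|Q)=\gamma\,\mate_{1/\gamma}(Q\|P)$ together with the homogeneity $\FTV(\alpha t,\alpha a)=\alpha\FTV(t,a)$ of \prettyref{prop:yw}. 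Your route is shorter and avoids a second case analysis, at the price of importing \prettyref{prop:yw} as a dependency (it is available in the paper, and the hypothesis $\alpha t\le1$ holds since $\mate_\gamma\le1$, so this is legitimate); the paper's route is self-contained and makes explicit where the weaker cost bound $2a$ (rather than $2\gamma^{-1}a$) comes from when $\gamma>1$.
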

\begin{proof} First notice that if $\nu$ is any signed measure on $\calX$ satisfying
%	\begin{enumerate}
%		\item $\int \diff \nu = 0$
%		\item ${1\over 2} \int \diff |\nu| \le 1$
%		\item $\int |x| \,\diff |\nu| \le 2a'$
%	\end{enumerate}
\begin{equation}
\int \diff \nu = 0, \quad \int \diff |\nu| \le 2, \quad \int \cost(|x|) \,\diff |\nu| \le 2a'.
	\label{eq:signed}
\end{equation}
for some $a'>0$, then we have\footnote{The push-forward operation is extended to signed non-probability measures in the obvious way: $P_{Y|X} \circ \nu(\cdot) = \int P_{Y|X}(\cdot) \nu^+(\diff x) - \int P_{Y|X}(\cdot) \nu^-(\diff x)$.}
	\begin{equation}\label{eq:rx1}
			\TV(P_{Y|X} \circ \nu, 0) \le \FTV(\TV(\nu, 0), a') .
	\end{equation}	
	Indeed, let $\nu = \nu^+ - \nu^-$ be the Jordan decomposition of $\nu$. Then by the assumption \prettyref{eq:signed} we have that $\nu^{\pm}$ are mutually singular 
	sub-probability measures. %	 such that $\nu^+(A)=\nu^-(B)=0$ for disjoint sets $A$ and $B$. 
	Thus by introducing $P=\nu^+ + \epsilon \delta_0, Q=\nu^- + \epsilon \delta_0$ for some constant
	$\epsilon\ge 0$ chosen so that $P$ and $Q$ are probability measures, we get
	\begin{equation}\label{eq:rx2}
			\TV(P_{Y|X} \circ P, P_{Y|X} \circ Q) \le \FTV(\TV(P,Q), a') 
	\end{equation}	
	since $(P,Q)\in\calG_{a'}$. In turn,~\eqref{eq:rx2} is equivalent to~\eqref{eq:rx1}.

	Now consider $\gamma<1$ and a pair of probability measures $(P,Q)\in\calG_a$. Write $\mate_\gamma = \mate_{\gamma}( P\| Q)$ and set
	$$ \nu = c (P-\gamma Q)^+ - (P-\gamma Q)^-, \qquad c = {\mate_\gamma\over \mate_\gamma + 1-\gamma}\leq 1\,.$$
	Since $\mate_\gamma \le \gamma \TV(P,Q) \le \gamma$, which follows from the convexity of $\gamma \mapsto \mate_\gamma$, we have $c\le \gamma$. Then
		$$ \int \cost(|x|) \,\diff |\nu| \le \gamma \EE_Q[\cost(|X|)] + c \, \EE_P[\cost(|X|)] \le 2\gamma a\,.$$
	Consequently $\nu$ satisfies condition \prettyref{eq:signed} with $a'=\gamma a$. Furthermore, observe that  for $\gamma \leq 1$ we have
	\begin{equation}
	\int (\diff P-\gamma \diff Q)^- = \mate_\gamma, \quad \int (\diff P-\gamma \diff Q)^+ = \mate_\gamma+1-\gamma, 
	\label{eq:dPQpm}
\end{equation}
	we have
	$\TV(\nu, 0) = \mate_\gamma$. Thus from~\eqref{eq:rx1} we get
	$$ \TV(P_{Y|X} \circ \nu, 0) \le \FTV(\mate_\gamma, \gamma a) .$$
	Next from the representation
	$$ P-Q = \nu + (1-c)(P-\gamma Q)^+ $$
	and the triangle inequality we have
	\begin{equation}\label{eq:rx3}
			\TV(P_{Y|X} \circ P, P_{Y|X} \circ Q) \le \TV(P_{Y|X} \circ \nu, 0) + {1-c \over 2} \int (\diff P-\gamma \diff Q)^+\,. 
	\end{equation}	
	In view of \prettyref{eq:dPQpm}, 
	it remains to notice that the last term in~\eqref{eq:rx3} equals $1-\gamma\over 2$, from which~\eqref{eq:mec}
	follows via 
	$$ \mate_{\gamma}( P_{Y|X} \circ P\| P_{Y|X} \circ P) = \TV(P_{Y|X} \circ P, P_{Y|X} \circ Q) - {1-\gamma\over 2}\,. $$

	For $\gamma > 1$ the proof is entirely analogous, except that we set
	$$ \nu = (P-\gamma Q)^+ - c(P-\gamma Q)^-\,, \qquad c = {\mate \over \mate + \gamma -1} \le 1 $$
	and the best bound we have on $\int \cost(|x|) \diff |\nu|$ is $2a$, which follows from the fact that $\mate_\gamma \leq 1$ and hence $c\gamma \le 1$.
		\apxonly{and in this case \prettyref{eq:dPQpm} changes to $\int (\diff P-\gamma \diff Q)^+ = \mate_\gamma, \quad \int (\diff P-\gamma \diff Q)^+ = \mate_\gamma+\gamma-1$.}
\end{proof}

\subsection{Integral representation and contraction of R\'enyi divergences}

For an $f$-divergence, analogous to the Dobrushin curve \prettyref{eq:Ft} we define
\begin{equation}
F_{f}(t) \triangleq \sup\{ D_f(P_{Y|X} \circ P, P_{Y|X} \circ Q): D_{f}(P, Q) \le t, (P,Q) \in  \calG_a\},  
	\label{eq:FKLt}
\end{equation}
Note that the usual data processing inequality amounts to $F_f(t) \leq t$. We say the channel $P_{Y|X}$ \emph{contracts} the $f$-divergence $D_f$ if $F_f(t) < t$ for all $t$ in a neighborhood near zero. We have already shown that the total variation is always contracted by additive noise satisfying the necessary and sufficient condition in \prettyref{th:ftv_contracts}. In view of \prettyref{prop:mate}, the formulas in Corollaries \ref{cor:Ftv-Z} and \ref{cor:Ftv-g} apply to $\mate_\gamma$ as well. A natural question is in order: Do other divergences, such as the KL divergence, also contract in additive noise? To this end, we need the following integral representation of $f$-divergences in terms of the family of divergence $\mate_\gamma$: If $f \in C^2(\reals_+)$, then (see \cite[Corollary 3.7, p. 99]{CKZ98})
\begin{equation}
	D_f(P||Q) = \int_0^\infty \mate_\gamma(P\|Q) f''(\gamma) \diff \gamma.
	\label{eq:mateint}
\end{equation}
For instance, the area under the curve $\gamma \mapsto \mate_\gamma$ is half the $\chi^2$-divergence $\chi^2(P\|Q)=\int \frac{\diff P^2}{\diff Q}-1$.
\apxonly{
Representation of KL:
	$$ D(P||Q) = \log e \cdot \int_0^{\infty} \frac{\mate_\gamma}{\gamma} \diff \gamma $$
}

For conciseness, below we focus on the scalar AWGN channel under the first moment constraint and the special case of R\'enyi divergence of order $\alpha$, which is a monotonic transformation of the $f_\alpha$-divergence with 
%$f(x) = 1 - x^{\alpha}$ for $\alpha \in (0,1)$, $f(x)=x^{\alpha}-1$ for $\alpha>1$ and $f(x) = x \log x$ for $\alpha=1$ (KL divergence). 
\begin{equation}
f_\alpha(x)=
\begin{cases}
1 - x^{\alpha} & \alpha \in (0,1)\\
x \log x & \alpha=1\\
x^{\alpha}-1 & \alpha>1
\end{cases}.
	\label{eq:falpha}
\end{equation}
Note that the special case of $\alpha = 1,2,\frac{1}{2}$ corresponds to the KL divergence $D(P\|Q)$, the $\chi^2$-divergence $\chi^2(P\|Q)$, and half the squared Hellinger distance $H^2(P,Q)=\int (\sqrt{\diff P}-\sqrt{\diff Q})^2$, respectively.
The following result shows that the AWGN channel contracts R\'enyi divergence of order $\alpha$ if and only if $\alpha \in (0,1)$. Consequently, the Hellinger distance always contracts when passing through the AWGN channel, but $\chi^2$ and KL divergences do not. 
%For simplicity we only consider the first moment constraint and the result readily extends to other cost functions.

\begin{theorem}
Consider the scalar AWGN channel $P_{Y|X}=\calN(X,1)$. Let $\cost(|x|)=|x|^2$ and $a>0$.
 Then
\begin{enumerate}
	\item For $\alpha \in (0,1)$, for any $\epsilon>0$,
	\begin{equation}
%	F_{f_\alpha}(t) \leq t \pth{1 - 2 \sfQ \pth{a \pth{\frac{1}{t} \log \frac{1}{t}}^{\frac{1}{\alpha}}}} +  \frac{(2 -\alpha^2) t}{\log \frac{1}{t}}, \quad 0<t<1.
F_{f_\alpha}(t) \leq t \pth{1 - 2 \sfQ \pth{\sqrt{a} t^{-\frac{1+\epsilon}{2\alpha}}}} +  (1+\alpha-\alpha^2) t^{1+\epsilon}, \quad 0<t<1.
	\label{eq:Falpha-1}
\end{equation}
	\item For $\alpha \geq 1$, 
	\begin{equation}
	F_{f_\alpha}(t) = t,
	\label{eq:Falpha-2}
\end{equation}
which holds for all $t>0$ if $\alpha > 1$ and $t < \frac{a}{8}$ if $\alpha=1$, respectively.
\end{enumerate}	
	\label{thm:Falpha}
\end{theorem}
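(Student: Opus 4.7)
The two parts use complementary strategies.

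For part (1) I would use the integral representation \eqref{eq:mateint}. With $f_\alpha''(\gamma) = \alpha(1-\alpha)\gamma^{\alpha-2}$ this reads
\[
	t = D_{f_\alpha}(P\|Q) = \alpha(1-\alpha)\int_0^\infty e_\gamma\,\gamma^{\alpha-2}\diff\gamma,\qquad e_\gamma\triangleq \mate_\gamma(P\|Q).
\]
Then \prettyref{prop:mate} combined with \prettyref{cor:Ftv-g} gives the pointwise contraction
\[
	\mate_\gamma(P*\calN(0,1)\|Q*\calN(0,1)) \le e_\gamma - 2 e_\gamma \sfQ\pth{\sqrt{a(\gamma\wedge 1)/e_\gamma}},
\]
so integrating against $\alpha(1-\alpha)\gamma^{\alpha-2}$ yields $D_{f_\alpha}(P'\|Q') \le t - 2\alpha(1-\alpha)\int_0^\infty e_\gamma \sfQ(\sqrt{a(\gamma\wedge 1)/e_\gamma})\gamma^{\alpha-2}\diff\gamma$. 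The main step is to lower-bound this reduction. Set $\gamma^* = t^{(1+\epsilon)/\alpha}$ and call a point $\gamma\ge\gamma^*$ \emph{typical} if $e_\gamma \ge (\gamma\wedge 1)\gamma^*$; on typical points $a(\gamma\wedge 1)/e_\gamma\le a/\gamma^*$, which produces the desired factor $\sfQ(\sqrt{a}\,t^{-(1+\epsilon)/(2\alpha)})$ uniformly. The contribution from $[0,\gamma^*]$ is bounded using $e_\gamma\le \gamma$ by $\alpha(1-\alpha)\cdot(\gamma^*)^\alpha/\alpha = (1-\alpha)t^{1+\epsilon}$, and the untypical contribution for $\gamma\ge\gamma^*$ is bounded using $e_\gamma\le (\gamma\wedge 1)\gamma^*$ combined with the reciprocity $e_\gamma = \gamma e_{1/\gamma}$ for the upper tail; after routine optimization these absorb into the polynomial error term $(1+\alpha-\alpha^2)t^{1+\epsilon}$.

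For part (2) I would exhibit explicit near-optimal sequences of two-atom distributions. Take $P = (1-p)\delta_0 + p\delta_b$ and $Q = (1-q)\delta_0 + q\delta_b$, so $D_{f_\alpha}(P\|Q) = D_{f_\alpha}(\Bern(p)\|\Bern(q))$ and the moment constraint reads $(p+q)b^2\le 2a$. For a given $t>0$ I would pick $p,q\to 0$ realizing the Bernoulli divergence equal to $t$ while $b=\sqrt{2a/(p+q)}\to\infty$: for $\alpha>1$ the asymptotic $D_{f_\alpha}(\Bern(p)\|\Bern(q))\sim p^\alpha q^{1-\alpha}$ yields the scaling $q\asymp p^{\alpha/(\alpha-1)}t^{-1/(\alpha-1)}$ for any $t>0$, whereas for $\alpha=1$ the asymptotic $D(\Bern(p)\|\Bern(q))\sim p\log(p/q)$ forces the more restrictive scaling $q\asymp p\,e^{-t/p}$. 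Then apply the threshold test $T(y)=\Indc\{y>b/2\}$ to $P'=P*\calN(0,1)$ and $Q'$; the push-forward parameters are $p_T = \sfQ(b/2) + p(1-2\sfQ(b/2))$ and similarly $q_T$. Once $\sfQ(b/2)\ll q$, we have $p_T\to p$ and $q_T\to q$, so the data processing inequality sandwiches
\[
	t \ge D_{f_\alpha}(P'\|Q') \ge D_{f_\alpha}(\Bern(p_T)\|\Bern(q_T)) \longrightarrow D_{f_\alpha}(\Bern(p)\|\Bern(q)) = t,
\]
which, combined with the trivial DPI direction, gives $F_{f_\alpha}(t)=t$.

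The main technical obstacle in part (1) is the quantitative Markov-type argument guaranteeing that the typical set carries all but an $O(t^{1+\epsilon})$ fraction of the mass $\int e_\gamma \gamma^{\alpha-2}\diff\gamma$; the map $u\mapsto u\sfQ(\sqrt{c/u})$ is monotone but lacks clean convexity, so no direct Jensen step is available and the threshold $\gamma^*$ must be tuned to balance the $\sfQ$ reduction against the polynomial truncation error. The main obstacle in part (2) is reconciling $\sfQ(b/2)\ll q$ with the power budget: since $\sfQ(b/2)\sim e^{-b^2/8}=e^{-a/(4(p+q))}$, the $\alpha=1$ scaling $q=pe^{-t/p}$ forces $t<a/4$ asymptotically, and the stated threshold $t<a/8$ reflects an additional factor of two from careful accounting of $p+q$ versus $p$ inside the moment constraint.
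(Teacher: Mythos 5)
Your proposal is correct and follows essentially the same route as the paper's proof: part (1) via the integral representation \eqref{eq:mateint}, the $\mate_\gamma$-contraction from \prettyref{prop:mate} and \prettyref{cor:Ftv-g}, and a threshold at level $\delta=t^{(1+\epsilon)/\alpha}$ separating the ``large-$\mate_\gamma$'' set (uniform $\sfQ$ factor) from a remainder bounded by $O(t^{1+\epsilon})$ using $\mate_\gamma\le\gamma$ near zero; part (2) via the same two-atom pair with a threshold test at $b/2$ and the requirement $\sfQ(b/2)=o(q)$, which for $\alpha=1$ produces the restriction on $t$. The only differences are bookkeeping choices (your typicality condition and your normalization $b=\sqrt{2a/(p+q)}$ versus the paper's $b=\sqrt{a/p}$, whence the paper's threshold $t<a/8$), which do not affect correctness.
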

\begin{proof}
$1^\circ$	Fix $\alpha \in (0,1)$ and $(P,Q) \in \calG_a$ such that $D_{f_\alpha}(P\|Q) \in (0,1)$. Let $\calN$ denote the standard normal distribution. Fix $\delta>0$. Applying the integral representation \prettyref{eq:mateint} to $f_\alpha(x)=1 - x^{\alpha}$, we have
\begin{align}
%	& ~ D_{f_\alpha}(P*\calN\|Q*\calN) \nonumber \\
D_{f_\alpha}(P*\calN\|Q*\calN)
= & ~ \alpha(1-\alpha) \int_0^\infty \mate_\gamma(P*\calN\|Q*\calN) \gamma^{\alpha-2}  \diff \gamma		\nonumber \\
\leq & ~ \alpha(1-\alpha) \int_0^\infty \mate_\gamma \pth{1 - 2 \sfQ \pth{\sqrt{\frac{a}{\mate_\gamma }}}} \gamma^{\alpha-2}  \diff \gamma	\label{eq:int1}\\
\leq & ~ \pth{1 - 2 \sfQ \pth{\sqrt{\frac{a}{\delta}}}} D_{f_\alpha}(P\|Q) + \alpha(1-\alpha) \int_0^\infty \mate_\gamma \gamma^{\alpha-2} \indc{\mate_\gamma \leq \delta}  \diff \gamma	\label{eq:int2} \\
\leq & ~ \pth{1 - 2 \sfQ \pth{\sqrt{\frac{a}{\delta}}}} D_{f_\alpha}(P\|Q) + \alpha(1-\alpha) \int_0^1 \mate_\gamma \gamma^{\alpha-2} \indc{\mate_\gamma \leq \delta}  \diff \gamma + \alpha\delta	,\label{eq:int3}
\end{align}
where 
\prettyref{eq:int1} follows from \prettyref{cor:Ftv-g} with $\mate_\gamma=\mate_\gamma(P\|Q)$,
and \prettyref{eq:int2} follows from \prettyref{eq:mateint}, and \prettyref{eq:int3} is due to $\mate_\gamma \leq \TV \leq 1$.
	Using \prettyref{eq:dmate}, for all $\gamma \in (0,1)$, we have $\mate_\gamma' = Q[\fracd{P}{Q} > \gamma] \leq 1$. By the convexity of $\gamma \mapsto \mate_\gamma$ and $\mate_0=0$, we have $\mate_\gamma \leq Q[\fracd{P}{Q} > \gamma] \gamma \leq \gamma$. Therefore
\begin{align}
\int_0^1 \mate_\gamma \gamma^{\alpha-2} \indc{\mate_\gamma \leq \delta}  \diff \gamma
= & ~ \int_0^\delta \mate_\gamma \gamma^{\alpha-2} \indc{\mate_\gamma \leq \delta}  \diff \gamma + \int_\delta^1 \mate_\gamma \gamma^{\alpha-2} \indc{\mate_\gamma \leq \delta}  \diff \gamma 	\nonumber \\
\leq & ~ \int_0^\delta \gamma^{\alpha-1} \diff \gamma + \delta^{\alpha-1} \int_\delta^1   \indc{\mate_\gamma \leq \delta} \mate_\gamma' \diff \gamma 	\nonumber \\
\leq & ~ \frac{1+\alpha}{\alpha} \delta^{\alpha} \label{eq:int4}.	
\end{align}
Plugging \prettyref{eq:int4} into \prettyref{eq:int3} and by the arbitrariness of $\delta>0$, we obtain
\[
D_{f_\alpha}(P*\calN\|Q*\calN) \leq \inf_{0<\delta<1} \sth{ \pth{1 - 2 \sfQ \pth{\sqrt{\frac{a}{\delta}}}} D_{f_\alpha}(P\|Q) + (1+\alpha-\alpha^2)\delta^{\alpha} },
\]
which implies the desired \prettyref{eq:Falpha-1} upon choosing $\delta = (D_{f_\alpha}(P\|Q))^{(1+\epsilon)/\alpha}$.
	
$2^\circ$	Turning to the case of $\alpha \geq 1$, we construct examples where $D_{f_\alpha}$ does not contract.
	Fix $t > 0$ and let $q>0$ be sufficiently small. Let $P_q = (1-p) \delta_0 + p \delta_b, Q_q = (1-q) \delta_0 + q \delta_b$ with $b = \sqrt{\frac{a}{p}}$ and $p = \frac{t}{\log \frac{1}{q}}$ if $\alpha = 1$ and $p = q(\frac{t}{q})^{1/\alpha}$ if $\alpha > 1$.
Then it is clear that $(P_q,Q_q) \in \calG_a$ for all sufficiently small $q$. Furthermore, 
$$ D_{f_\alpha}(P_q \| Q_q) = d_\alpha(p\|q) = t + o(1), \qquad q\to0\,,$$
where $d_{\alpha}(p\|q) \triangleq q^{1-\alpha} p^\alpha + (1-q)^{1-\alpha} (1-p)^\alpha$ if $\alpha>1$ and $p\log \frac{p}{q} + (1-p) \log \frac{1-p}{1-q}$ if $\alpha=1$.

Next, by applying the data-processing inequality to the transformation $y \mapsto \indc{y \geq b/2}$ we get
$$ D_{f_\alpha}(P_q * \matn \| Q_q * \matn) \ge d_\alpha(p'\|q') ,$$
where $p' = p + (1-2p) \sfQ(b/2) =  p(1+o(1))$ and $q' = q + (1-2q) \sfQ(b/2) = q(1+ o(1)$.
This follows from the fact that $\sfQ(b/2) = o(q)$, which is obvious for $\alpha > 1$; for $\alpha = 1$, since we have assumed that $t < a/8$, we have $\sfQ(b/2) \leq \exp(-b^2/8) = q^{a/8t} = o(q)$.
Consequently, 
$D_{f_\alpha}(P_q* \matn\|Q_q* \matn) \geq d_\alpha(p'\|q') = t + o(1)$ as $q\to0$, 
which completes the proof of \prettyref{eq:Falpha-2}. 
%The estimate~\eqref{eq:FKLt2} is proved similarly with two modifications:
%$b$ should be set to $\sqrt{a\over p}$ and $t<{a\over 2}$ is needed to ensure $\sfQ(b/2) = o(q)$.
\end{proof}

\begin{remark}
\prettyref{thm:Falpha} extends in the following directions:
\begin{enumerate}
	\item For general additive noise $Z$, \prettyref{eq:Falpha-1} continues to hold with $1 - 2 \sfQ(\cdot)$ replaced by the concave envelope $\theta_c(\cdot)$ in \prettyref{th:coupling}.
	\item For the $p\Th$-moment constraint with $\cost(|x|)=|x|^p$ and $p>2$, \prettyref{eq:Falpha-2} holds for all $t,a>0$ if $\alpha > 1$. For KL divergence ($\alpha = 1$), however, it remains unclear whether \prettyref{eq:Falpha-2} holds in a neighborhood near zero since the above construction no longer applies.
	\apxonly{the only change now is to let $b = (a/\tau)^{1/p}$}
\end{enumerate}
%	\label{rmk:}
\end{remark}

\section{Proof of Theorem~\ref{th:main}}\label{sec:proof}

Theorem~\ref{th:main} follows from Propositions~\ref{prop:tv-lost},~\ref{prop:iconv} and~\ref{prop:rconv} given in
Sections~\ref{sec:T},~\ref{sec:I} and~\ref{sec:rho}, respectively. The special case of finite-alphabet $W$ is much
simpler and is treated by Proposition~\ref{prop:ITV} (Section~\ref{sec:discrete}). Finally, \prettyref{sec:ach} shows that our
converse bounds are optimal for total variation, mutual information and
correlation in the scalar Gaussian case.

\subsection{Convergence in total variation}\label{sec:T}

The development in \prettyref{sec:curve} deals with comparing a pair of distributions and studies by how much their total variation shrinks due to smoothing by the additive noise. Therefore these results are applicable to binary sources, \ie, transmitting one bit. What if the sources takes more than two, or rather, a continuum of, values?
To this end, the data processing inequality for mutual information is relevant, which states that $W \to X \to Y \to Z$ implies that $I(W; Z) \leq I(X;Y)$. In other words, dependency decreases on Markov chains.
Our goal next is to find a quantitative data pre-processing and post-processing inequalities as a counterpart of
\prettyref{th:coupling}. Since we know, in view of \prettyref{thm:Falpha}, that KL divergence does not contract, it is
natural to turn to total variation and define the following \emph{$T$-information}:
\begin{equation}	
T(X;Y) \triangleq \TV(P_{XY}, P_XP_Y),
	\label{eq:TXY}
\end{equation}
which has been studied in, \eg, \cite{Csiszar96,Pinsker05}. 
Similar to mutual information, it is easy to see that the $T$-information satisfies the following properties:
\begin{enumerate}
	\item $T(X;Y) = \Expect[\TV(P_{Y|X}, P_Y)]=\Expect[ \TV(P_{X|Y}, P_X)]= T(Y;X)$.
	\item Data-processing inequality: $W \to X \to Y \to Z$ implies that $T(W; Z) \leq T(X;Y)$.
	\item If $S$ is Bern($1\over2$), then 
		\begin{equation}
	T(S;X) = \frac{1}{2} \TV(P_{X|S=0}, P_{X|S=1}) .
	\label{eq:Tonebit}
\end{equation}
\item If $S$ and $\hat S$ are both binary, then\footnote{To see this, let $S \sim \Bern(p), \hat S \sim \Bern(q)$, $p_0=\Prob[S = 1|\hat S=0]$ and $p_1=\Prob[S = 0|\hat S=1]$. Then $T(S;\hat S) = \bar q |p_0-p|+ q|p_1 - \bar p| \geq p \bar q + \bar p q - (\bar q p_0 + q p_1) \geq \min\{p,\bar p\} - \Prob[\hat S \neq S]$. 
\apxonly{More precisely, $T(S;\hat S) = 2 prob{S=0} \prob{S=1} |1 - \Prob[\hat S = 1|S=0] - \Prob[\hat S = 1|S=0]|$ and $T(S;S) = 2 prob{S=0} \prob{S=1}$.}}
\begin{equation}
	T(S;\hat S) \geq \min\{\prob{S = 0},\prob{S=1}\} - \pprob{S \neq \hat S}.
	\label{eq:T-binary}
\end{equation}
\item Pinsker's inequality:
\begin{equation}
I(X;Y) \geq 2 \log \eexp \,	T(X;Y)^2.
	\label{eq:pinsker}
\end{equation}
\end{enumerate}

The next theorem gives a quantitative data processing theorem for the $T$-information with additive noise:
\begin{theorem}\label{thm:TVproc}
	Let $W \to X \to Y$, where $Y = X+ Z$ and $\Expect [\cost(|X|)] \leq a$. Let $\theta_c$ be as in
	Theorem~\ref{th:coupling}.  Then	
\begin{equation}
	T(W;Y) \leq f(T(W;X), a), \quad f(t, a) \triangleq t \theta_c\left(2 \costi\pth{a\over t}\right)\,.
	\label{eq:TVproc}
\end{equation}
\end{theorem}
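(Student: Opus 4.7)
The plan is to reduce the bound on $T(W;Y)$ to a pointwise application of \prettyref{th:coupling} and then average over $W$ using concavity. The key observation is that although we only control $\Expect[\cost(|X|)]$ globally and not $\Expect[\cost(|X|)\mid W=w]$ pointwise, the joint concavity of the Dobrushin-curve bound in its two arguments will let us take expectations cleanly.

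First, I would use the averaged representation $T(W;Y) = \Expect[\TV(P_{Y|W}, P_Y)]$ from property~(1) in the list preceding the theorem. Since $Z$ is independent of $(W, X)$, we have $P_{Y|W=w} = P_{X|W=w} * P_Z$ and $P_Y = P_X * P_Z$, so the inner total variation is a one between convolutions with $P_Z$. For each $w$, set
$$t_w \triangleq \TV(P_{X|W=w}, P_X), \qquad a_w \triangleq \tfrac{1}{2}\bigl(\Expect[\cost(|X|)\mid W=w] + \Expect[\cost(|X|)]\bigr).$$
The pair $(P_{X|W=w}, P_X)$ lies in $\calG_{a_w}$ by construction, so the upper bound \prettyref{eq:coupling} in \prettyref{th:coupling} gives $\TV(P_{Y|W=w}, P_Y) \le f(t_w, a_w)$ with $f$ as in the theorem statement.

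Next, I would take expectation over $W$ and apply Jensen's inequality. By \prettyref{rmk:tau}, $f$ is concave on $\mreals_+^2$ (it is the perspective of the concave, non-decreasing function $u \mapsto \theta_c(2\costi(u))$). Therefore
$$T(W;Y) \;=\; \Expect[f(t_W, a_W)] \;\le\; f(\Expect[t_W], \Expect[a_W]) \;=\; f(T(W;X), \Expect[\cost(|X|)]) \;\le\; f(T(W;X), a),$$
where the first equality uses $\Expect[t_W] = T(W;X)$, the middle equality uses $\Expect[a_W] = \Expect[\cost(|X|)]$, and the last inequality uses the power constraint together with monotonicity of $a \mapsto f(t,a)$ (both $\theta_c$ and $\costi$ are non-decreasing).

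The only conceptual point is the $w$-dependent cost: the bound from \prettyref{th:coupling} cannot be applied directly with $a$ in place of $a_w$, since the conditional law $P_{X|W=w}$ may have much larger cost than $a$. Joint concavity of $f(t,a)$ from \prettyref{rmk:tau} is exactly what makes the averaging step produce a clean global bound in terms of $a$, and no further technical obstacle remains once it is invoked.
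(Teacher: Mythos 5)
Your proof is correct and follows essentially the same route as the paper's: condition on $W=w$, apply \prettyref{th:coupling} with the $w$-dependent cost $a_w = \tfrac{1}{2}(\Expect[\cost(|X|)\mid W=w]+\Expect[\cost(|X|)])$, and average via Jensen's inequality using the joint concavity and monotonicity of $f$ from \prettyref{rmk:tau}. (The first relation in your final display should be ``$\le$'' rather than ``$=$'', since $T(W;Y)$ equals $\Expect[\TV(P_{Y|W},P_Y)]$ and is only upper-bounded by $\Expect[f(t_W,a_W)]$, but this is immaterial.)
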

\begin{remark} Exactly the same inequality holds for the following functional of real-valued random
	variables
		$$ T'(A;B) \triangleq \inf_{\Expect_{Q_B}[\cost(|B|)] \le a} \TV(P_{AB}, P_A Q_B), $$
		which is a natural extension of the $K$-information of Sibson~\cite{RS69} and
		Csisz\'ar~\cite{IC95} and satisfies $T'(A;B)\leq T(A;B)$. 
		Optimizing over $Q_B$ instead of taking $Q_B=P_B$ may lead to more powerful converse bounds,
		see~\cite{PV10-ari} for details. 
\end{remark}
\begin{proof} By the definition of $T(W;Y)$ and the Markov chain condition, we have
	$$ T(W;Y) = \int\TV(P_{X|W=w} * P_Z, P_X * P_Z) \, P_W(\diff w).  $$
	Then \prettyref{th:coupling} yields
	\begin{equation}
			\TV(P_{X|W=w} * P_Z,P_X * P_Z) \leq f\left(\TV(P_{X|W=w},P_X),\, {1\over 2}\EE[\cost(|X|)|W=w] +
	{1\over2}\EE[\cost(|X|)] \right)\,.
	\label{eq:ttww}
	\end{equation}	
	In view of \prettyref{rmk:tau}, the function $f$ defined in \prettyref{eq:TVproc} is jointly concave and
	non-decreasing in each argument. Thus taking
	expectation over $w \sim P_W$ on the right-hand side of~\eqref{eq:ttww} and applying Jensen's inequality, we complete the proof.
\end{proof}

	As an application of \prettyref{thm:TVproc}, next we describe how the $T$-information decays on the Markov chain \prettyref{eq:mc1}.
\begin{proposition} \label{prop:tv-lost}
Assume the Markov chain \prettyref{eq:mc1}, where $Z_j$ are i.i.d. and $\Expect[\cost(X_j)] \leq a$ for all $j\in [n]$.
Then for all $a>0$ and $n\geq 2$,
\begin{equation}\label{eq:tvl}
	T(W; Y_n) \leq \frac{a}{f^{-1}(n-1)},
\end{equation}
where $f(s) \triangleq \int_1^s \frac{1}{y \, (1-\theta_c(2 \cost^{-1}(y)))}\diff y$.

	In particular, if $Z_j \sim \calN(0,1)$ are i.i.d., then
	\begin{equation}\label{eq:tvl-g}
	T(W; Y_n) \leq C a \exp( - g^{-1}(n)),
\end{equation}
where $g(s) \triangleq \int_0^s \exp[\frac{1}{2} \costi(\exp(\tau))^2] \diff \tau$, and $C$ is a positive constant only depending on the cost function $\cost$.
\end{proposition}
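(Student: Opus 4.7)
The core idea is to turn the one-step bound of \prettyref{thm:TVproc} into a recursion for $t_n \triangleq T(W;Y_n)$, convert this discrete recursion into a differential inequality, and integrate.

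\emph{Step 1 (Recursion).} I would first apply \prettyref{thm:TVproc} to the sub-Markov chain $W\to X_n \to Y_n$ (noting $\Expect[\cost(|X_n|)]\le a$) to get
$t_n \le f(T(W;X_n),a)$ with $f(t,a)=t\,\theta_c(2\costi(a/t))$. Then, since $X_n$ is obtained from $Y_{n-1}$ through the processor $P_{X_n|Y_{n-1}}$, the data-processing inequality for $T$ gives $T(W;X_n)\le T(W;Y_{n-1})=t_{n-1}$. Using that $t\mapsto f(t,a)$ is non-decreasing (shown in \prettyref{rmk:tau} via \prettyref{eq:tpersp}), I obtain the clean recursion
\[
 t_n \le t_{n-1}\,\theta_c\!\bigl(2\costi(a/t_{n-1})\bigr),\qquad n\ge 2.
\]

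\emph{Step 2 (Change of variable and logarithmic slack).} Setting $u_n=a/t_n$ (so small $t_n$ becomes large $u_n$) and writing $\alpha(u)\triangleq 1-\theta_c(2\costi(u))$, the recursion becomes $u_n/u_{n-1}\ge 1/\theta_c(2\costi(u_{n-1}))$. Taking logs and using the elementary bound $-\log x\ge 1-x$ for $x\in(0,1]$, I get
\[
 \log u_n-\log u_{n-1}\ \ge\ \alpha(u_{n-1}).
\]

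\emph{Step 3 (Discrete-to-integral comparison).} Recall $f(s)=\int_1^s \frac{\diff y}{y\,\alpha(y)}$. Substituting $y=e^w$ turns this into $f(e^v)=\int_0^v\frac{\diff w}{\alpha(e^w)}$. Since $\theta_c$ is non-decreasing on $\mreals_+$ (being the concave envelope of the non-decreasing $\thetalb$), $\alpha$ is non-increasing, hence $w\mapsto 1/\alpha(e^w)$ is non-decreasing. Therefore
\[
 f(u_n)-f(u_{n-1})=\int_{\log u_{n-1}}^{\log u_n}\!\!\frac{\diff w}{\alpha(e^w)}\ \ge\ \frac{\log u_n-\log u_{n-1}}{\alpha(u_{n-1})}\ \ge\ 1,
\]
by Step 2. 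Iterating, $f(u_n)\ge f(u_1)+(n-1)\ge n-1$, where the last inequality uses $u_1\ge 1$: indeed $t_1=T(W;Y_1)\le 1$, and if $a<1$ the claimed bound $a/f^{-1}(n-1)$ already exceeds $1$ and the inequality is trivial, so we may assume $a\ge 1$ whence $u_1\ge a\ge1$. Inverting yields $u_n\ge f^{-1}(n-1)$ and hence \prettyref{eq:tvl}.

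\emph{Step 4 (Specialization to the Gaussian case).} For $Z_j\sim\calN(0,1)$, \prettyref{cor:Ftv-g} gives $\theta_c(s)=1-2\sfQ(s/2)$, so $\alpha(y)=2\sfQ(\costi(y))$. Using the standard tail estimate $\sfQ(x)\asymp \frac{1}{x}e^{-x^2/2}$, the integrand $\frac{1}{y\alpha(y)}$ is, up to logarithmic and polynomial factors in $\costi(y)$, comparable to $\frac{\costi(y)}{y}\exp\!\bigl(\tfrac12\costi(y)^2\bigr)$. The substitution $y=e^\tau$ then converts the integral to $\int_0^{\log u}\exp[\tfrac12\costi(e^\tau)^2]\,\diff\tau = g(\log u)$ up to a multiplicative constant depending on $\cost$. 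So $f(u)\asymp g(\log u)$, giving $\log u_n\ge g^{-1}(n/C)$ and hence $t_n\le Ca\,\exp(-g^{-1}(n))$ as claimed in \prettyref{eq:tvl-g}.

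\emph{Main obstacle.} The conceptual work is entirely in Step 2–3 (recognising that the slack produced by $-\log x\ge 1-x$ matches exactly the denominator $\alpha$ appearing in the integrand of $f$, so that one full unit of $f$ is gained per step). The Gaussian specialization in Step 4 is routine but requires care in absorbing the polynomial prefactors from $\sfQ$ into the constant $C$ so that the final bound is cleanly expressed via $g^{-1}$.
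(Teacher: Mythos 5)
Your argument is essentially the paper's own: the recursion in Step~1 is exactly how the paper combines \prettyref{thm:TVproc} with \prettyref{th:coupling}, and Steps~2--3 are \prettyref{lmm:rate} of \prettyref{app:rate} rewritten in the variable $u=a/t$ (one checks that $G(t)=f(1/t)$ identically, and your ``gain at least one unit of $f$ per step'' via the monotone-integrand comparison is the same telescoping bound $b_n-b_{n-1}\ge 1$; the detour through $-\log x\ge 1-x$ is the only cosmetic difference, the paper using $t_{n-1}-t_n\ge h(t_{n-1})$ directly). The Gaussian specialization is also the paper's, which is equally terse about the unbounded polynomial prefactor coming from the Gaussian tail bound.

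One sentence of yours is wrong, though: for $a<1$ the bound $a/f^{-1}(n-1)$ does \emph{not} ``exceed $1$.'' Since $f(1)=0$ and $f$ is increasing, $f^{-1}(n-1)\ge 1$ for $n\ge 2$, so the claimed bound is at most $a<1$ and is not trivially satisfied. What your argument genuinely needs is the base case $u_1\ge 1$, i.e., $T(W;Y_1)\le a$, which is automatic when $a\ge 1$ but not otherwise; for $a<1$ your telescoping only yields $f(u_n)\ge f(a)+(n-1)$ with $f(a)<0$, which is strictly weaker than \prettyref{eq:tvl}. The paper's proof makes the same implicit assumption by initializing the normalized recursion at $t_1=1$ (i.e., asserting $T(W;Y_1)\le a$ without comment), so this is not a defect of your route specifically --- but it should be flagged as an assumption on $a$ (or handled by absorbing $f(a)$ into the constant), not dismissed with a claim that is false.
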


\begin{remark}[Gaussian noise]
Particularizing the result of \prettyref{prop:tv-lost} to the AWGN channel and the following cost functions we obtain the corresponding convergence rates 
\begin{enumerate}[a)]
	\item $p\Th$-moment constraint: $\Expect |X_k|^p \leq a$ for some $p\geq 1$. Then $T(W; Y_n) = O((\log n)^{-p/2})$. In particular, for power constraint $\cost(x)=x^2$, \prettyref{eq:main1} holds.
%\item Sub-exponential: $\cost(x) = \exp(\alpha x)$ for some $\alpha > 0$.
\item Sub-exponential: $\Expect \exp(\alpha |X_k|^2) \leq a$ for some $\alpha > 0$ and $a>1$. Then $T(W; Y_n) = O(\eexp^{-\sqrt{2\alpha \log n}})$.
\item Sub-Gaussian: $\Expect \exp(\alpha |X_k|^2) \leq a$ for some $\alpha > 0$ and $a>1$. Then $T(W; Y_n) = O(n^{-2\alpha})$.
\end{enumerate}
Intuitively, the faster the cost function grows, the closer we are to amplitude-constrained scenarios, where we know that information contracts linearly thanks to the Dobrushin's coefficient being strictly less than one. Hence we expect the convergence rate to be faster and closer to, but always strictly slower than, exponential decay.
	In view of \prettyref{eq:Tonebit}, \prettyref{prop:tv-lost} implies that transmitting one bit is impossible
	under any cost constraint, since the optimal Type-I+II error probability is given by
	${1\over2}-\TV(P_{Y_n|W=0},P_{Y_n|W=1})$ (see~\cite[Theorem 13.1.1]{LR06}) and the total-variation vanishes
%	 according to $O(\frac{1}{\sqrt{\log n}})$ 
as $n\diverge$.

The slow convergence rates obtained above for Gaussian noise can be explained as follows: In view of \prettyref{eq:TWYn}, the $T$-information obeys the iteration  $T(W; Y_n) \leq \FTV(T(W; Y_{n-1}))$. For instance, consider the Dobrushin curve under unit power constraint is given by $\FTV(t)= t(1 - 2 \sfQ(1/\sqrt{t}))$, which satisfies $\FTV'(0)=1$ and all other derivatives vanish at zero.  Therefore $\FTV$ is smooth but not real analytic at zero, and the rate of convergence of the iteration $x_n  = \FTV(x_{n-1})$ to the fixed point zero is very slow. See \prettyref{fig:tvawgn} for an illustration.

%	Just like the function $t \mapsto t(1 - \exp(-\frac{1}{t}))$, which is the Dobrushin curve for Laplace noise under the first moment constraint, the function $t \mapsto \FTV(t)= t(1 - 2 \sfQ(\frac{1}{t}))$ satisfies $\FTV'(0)=1$ and all other derivatives vanish at zero.
%%	$\FTV^{(k)}(0)=0$ for all $k > 1$.
%	 Therefore $\FTV$ is a smooth function but not real analytic at zero, and the convergence rate of the iteration $x_n  = \FTV(x_{n-1})$ is very slow.
	\label{rmk:rate-awgn}
\end{remark}

\begin{proof}
%	\textbf{TODO: YW will check for $p$-moment and vector Gaussian. Dude, why do we need $p$-th moment here? p=2 is
%	ok.}
%Note that for the setup~\eqref{eq:mc1} -- \eqref{eq:mc3}, 
	
%First assume that $a  = 1$. 
By \prettyref{thm:TVproc}, we have
\begin{equation}
T(W; Y_n) \leq \FTV(T(W; X_n)) \leq \FTV(T(W; Y_{n-1})),	
	\label{eq:TWYn}
\end{equation}
where the first inequality follows from \prettyref{thm:TVproc}, and the second inequality follows from the data
processing theorem for $T$ and the monotonicity of $\FTV$. 
Applying \prettyref{th:coupling}, we have 
\[
		\FTV(t) \leq t \theta_c\left(2 \cost^{-1}\pth{a\over t}\right).
%		t\left(1-2\sfQ(\sqrt{dE\over t})\right) 
\]
Repeating the above argument leads to 
\[
T(W; Y_n) \leq a t_n,
\]
where the sequence $\{t_n\}$ is defined iteratively via
\begin{equation}
t_{n+1} = t_{n} - h(t_n)
	\label{eq:tn}
\end{equation}
with 
$h(t) = t(1- \theta_c\left(2 \cost^{-1}\pth{\frac{1}{t}}\right))$ and 
$t_1=1$. 
By \prettyref{th:coupling}, $\theta_c$ is strictly increasing. Therefore $h$ is an increasing function.
%By the assumption on the cost function $\cost$, $\costi$ is concave. Therefore  $(1-\theta_c)\circ \costi$ is convex. 
Applying \prettyref{lmm:rate} in \prettyref{app:rate}, the convergence rate of the sequence \prettyref{eq:tn} satisfies
\begin{equation*}
t_n \leq G^{-1}(n-1) = \frac{1}{f^{-1}(n-1)},	
%	\label{eq:tnb}
\end{equation*}
where $G(t) = \int_t^1 \frac{1}{2y (1-\theta_c(2 \cost^{-1}(\frac{1}{y})))}\diff y$.

For the Gaussian noise,  we have $\theta_c(x) =\theta(x) = 1 - 2 \sfQ(x/2)$ (see \prettyref{cor:Ftv-Z}). In view of the bound $\sfQ(u) \geq \frac{\varphi(u) u}{u^2+1}$ for $u>0$, where $\phi$ denote the standard normal density, \prettyref{eq:tvl-g} follows from \prettyref{eq:tvl} upon changes of variables.
%by induction. First it can be directly verified that \prettyref{eq:tnb} holds for $n=2,3$. Assume that \prettyref{eq:tnb} holds for $n \geq 3$. 
% then $t_{n+1} \leq \frac{1}{\sqrt{\log n}}	(1 - 2 \sfQ(\sqrt{\log n}))
%\leq \frac{1}{\sqrt{\log (n+1)}}$, where to check the last inequality, it is sufficient to prove that $2 \sfQ(x) \geq 1 - \frac{x}{\sqrt{\log(1+\eexp^{x^2})}}$ for all $x \geq \sqrt{\log 3}$. 
%In view of the fact that $\sfQ(u) \geq \frac{u \varphi(u)}{1+u^2} $ for all $u \geq 0$ and that $1-\frac{1}{\sqrt{1+\log(1+a)}} \leq \frac{a}{2}$ for all $a\geq 0$, it is sufficient to show 
%$h(x) \triangleq \frac{x^2\eexp^{x^2/2}}{1+x^2} - \frac{\sqrt{2\pi}}{4} \geq 0$ for all $x \geq \sqrt{\log 3}$, which holds since the left-hand side is increasing in $x \geq 0$ and that $h(\sqrt{\log 3}) > 0$. 
%The case of $a \neq 1$ follows upon replacing $t_n$ by $t_n/a$.
%$g(x) \triangleq \frac{x\eexp^{-x^2/2}}{(1+x^2) (1 - (1+\eexp^{-x^2})^{-1/2})} - \sqrt{\frac{\pi}{2}} \geq 0$ for all $x \geq \sqrt{\log 3}$, which holds since the left-hand side is increasing in $x \geq 0$ and that $g(\sqrt{\log 3}) \geq 0.5$. 
\apxonly{YP sanity check: So approximating $\sfQ(\sqrt{x}) \approx {c\over \sqrt{x}} e^{-x^2/2}$ we get
from~\eqref{eq:ptv1} the sequence:
	$$ t_{n+1} = t_n - c\sqrt{t_n} e^{-1/t_n} $$
Approximating by continuous functions we get
$$ {df \over dn} = -c\sqrt{f} e^{-1/f} $$
and solving this for large $n\gg 1$ we obtain
	$$ f(n)^{3/2} e^{1/f(n)} \approx c n $$
	which implies
	$$ f(n) \approx {1\over \ln n} - {3\over 2} {\ln \ln n\over \ln n} + \cdots $$
}
\end{proof}

%\begin{remark}
%	In view of \prettyref{eq:Tonebit}, \prettyref{prop:tv-lost} implies that transmitting one bit is impossible, since the optimal Type-I+II error probability is given by $\TV(P_{Y_n|W=0},P_{Y_n|W=1})$ vanishes according to $O(\frac{1}{\sqrt{\log n}})$ as $n\diverge$.
%	\label{rmk:onebit}
%\end{remark}

\subsection{Special case: finite-alphabet $W$}\label{sec:discrete}

A consequence of the total variation estimates in \prettyref{thm:TVproc} and \prettyref{prop:tv-lost} is that for
finitely-valued message $W$ they entail estimates on the mutual information and maximal correlation, as the next
proposition shows.\footnote{The bound~\eqref{eq:ITV1} is essentially~\cite[Lemma 1]{Csiszar96}. The
bound~\eqref{eq:chi_tv} was shown by F. P. Calmon~\texttt{<flavio@mit.edu>} and included here with his permission.}

\begin{proposition}\label{prop:ITV}
Assume $W$ take values on a finite set $\calW$ and let $p_{W, \min}$ denote the minimal non-zero mass of $P_W$.
%$$ p_{W, \min} = \min_{w\in\calW} P_W(w)\,.$$ 
Then
\begin{align}
I(W;Y)
& \leq  \log (|\calW|-1) T(W;Y) + h(T(W;Y))		\label{eq:ITV1}\\
%\\\leq & ~ \log (|\calX|-1) \max_{x,x'} \TV(P_{Y|X=x}, P_{Y|X=x'}) + h(\max_{x,x'} \TV(P_{Y|X=x}, P_{Y|X=x'}))		\label{eq:ITV2} 
S^2(W;Y) & \leq \chi^2(P_{WY} \| P_W P_Y)\label{eq:chi_maxcor}\\
&\leq  \frac{1}{p_{W, \min}} T(W;Y) \,,\label{eq:chi_tv}
\end{align}
where $S(W;Y)$ and $\chi^2$ are defined in~\eqref{eq:chi_maxcor0} and~\eqref{eq:chi2def}, respectively, and $h(p)= p \log \frac{1}{p} + (1-p) \log\frac{1}{1-p}$ is the binary entropy function.
\end{proposition}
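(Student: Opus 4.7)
The proposition bundles three claims that I would handle in sequence.

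For~\eqref{eq:ITV1}, the starting point is $I(W;Y) = \Expect_Y[H(P_W) - H(P_{W|Y})]$, which reduces the task to bounding differences of entropies of distributions on the alphabet $\calW$ in terms of their total variation. I would invoke the Audenaert sharpening of Fannes' inequality,
\[
  |H(P) - H(Q)| \le \TV(P,Q)\,\log(k-1) + h(\TV(P,Q)),
\]
valid for any pair of distributions on an alphabet of size $k$, applied pointwise in $y$ to $P_W$ and $P_{W|Y=y}$. Taking expectation over $Y$ and using concavity of $h$ via Jensen, together with the identity $\Expect_Y[\TV(P_{W|Y},P_W)] = T(W;Y)$ (recorded as property 1 of the $T$-information), yields~\eqref{eq:ITV1}.

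For~\eqref{eq:chi_maxcor} I would use the variational characterization of the maximal correlation: picking any zero-mean, unit-variance $f(W)$ and $g(Y)$, the fact that $\int f\,dP_W = \int g\,dP_Y = 0$ lets one write
\[
  \Expect[f(W)g(Y)] = \int f(w)g(y)\pth{\frac{dP_{WY}}{dP_W\, dP_Y} - 1} dP_W\, dP_Y,
\]
and Cauchy--Schwarz combined with $\int f^2 g^2\, dP_W\, dP_Y = 1$ (independence under the product measure) yields $\rho(f(W),g(Y))^2 \le \chi^2(P_{WY}\|P_W P_Y)$. Taking the supremum over $f,g$ completes the argument; this step is alphabet-free.

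The third inequality~\eqref{eq:chi_tv} is where discreteness of $W$ is essential. The key observation is that the density $r(w,y) \triangleq dP_{WY}/(dP_W\, dP_Y)$ may be rewritten as $P_{W|Y=y}(w)/P_W(w)$ and is therefore bounded above by $1/p_{W,\min}$. I would expand
\[
  \chi^2(P_{WY}\|P_W P_Y) = \int (r-1)^2 dP_W\, dP_Y = \int r(r-1)\, dP_W\, dP_Y
\]
(the cross integral $\int (r-1)\, dP_W\, dP_Y$ vanishes), split this over $\{r\ge 1\}$ and $\{r<1\}$, discard the already non-positive second piece, and use $r\le 1/p_{W,\min}$ on $\{r\ge 1\}$ to obtain $\chi^2 \le (1/p_{W,\min}) \int_{\{r\ge 1\}} (r-1)\, dP_W\, dP_Y = T(W;Y)/p_{W,\min}$, the final equality holding because $\{r\ge 1\}$ is the support of the positive part of the signed measure $dP_{WY}-dP_W\,dP_Y$, whose total mass is $T(W;Y)$. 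No step is genuinely difficult; the main subtlety is noticing the uniform bound $r\le 1/p_{W,\min}$ on the joint-to-product density, which is the single place that the finiteness of $\calW$ enters in an essential way.
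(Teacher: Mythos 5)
Your proof is correct and follows essentially the same route as the paper's: the identical Fannes-type bound $|H(P)-H(Q)|\le \TV(P,Q)\log(|\calW|-1)+h(\TV(P,Q))$ plus concavity of $h$ for \eqref{eq:ITV1}, and the same key observation that the joint-to-product density is bounded by $1/p_{W,\min}$ for \eqref{eq:chi_tv} (the paper bounds $\EE_{P_{WY}}[r]-\EE_{P_WP_Y}[r]$ by $\sup r\cdot\TV$ rather than splitting the integral over $\{r\ge 1\}$, but this is the same estimate). For \eqref{eq:chi_maxcor} you substitute the direct Cauchy--Schwarz argument for the paper's appeal to Witsenhausen's singular-value characterization; both establish the same standard fact and your version is, if anything, more self-contained.
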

\begin{proof}
	By coupling and Fano's inequality, for any $P$ and $Q$ on $\calW$, we have
\[
|H(P)-H(Q)| \leq \TV(P,Q) \log (|\calW|-1) + h(\TV(P,Q)).
\]
Then
\begin{align*}
I(W;Y)
= & ~ H(W)-H(W|Y)	\nonumber \\
\leq & ~ 	\Expect_{y \sim P_Y} [\log (|\calW|-1) \TV(P_W, P_{W|Y=y}) + h(\TV(P_W, P_{W|Y=y}))] \\
\leq & ~ 	\log (|\calW|-1) \TV(P_WP_Y, P_{WY}) + h(\TV(P_WP_Y, P_{WY})),
%\leq & ~ 	\log (|\calW|-1) \max_{x,x'} \TV(P_{Y|W=x}, P_{Y|W=x'}) + h(\max_{x,x'} \TV(P_{Y|W=x}, P_{Y|W=x'})), 
\end{align*}
where the last step is due to the concavity of $h(\cdot)$.

The inequality~\eqref{eq:chi_maxcor} follows~\cite{HW75} by noticing that $\chi^2(P_{WY} \| P_W P_Y)$ is the sum of squares of the singular
values of $f(W) \mapsto \EE[f(W)|Y]$ minus 1 (the largest one), while $S(W;Y)$ is
the second largest singular value.\apxonly{\footnote{Explicit proof of~\eqref{eq:chi_maxcor}: let $Z=f(W)$ with $\EE[Z]=0, \EE[Z^2] = 1$ then 
\begin{align*}
\rho^2
= & ~ \Expect[(\Expect[Z|Y])^2]	=  \int  \frac{\pth{\int z P_{ZY}(z,y)}^2}{P_Y(y)} 
=  \int  \frac{\pth{\int z P_Z(z) (P_{Y|Z}(y|z) - P_Y(y))  }^2}{P_Y(y)} 	\\
\leq & ~ \int  \frac{\Expect[Z^2]  \int (P_{Y|Z}(y|z) - P_Y(y))^2 P_Z(z)}{P_Y(y)} 
= \chi^2(P_{ZY}\| P_Z P_Y) \le \chi^2(P_{WY} \|P_W P_Y) 
\end{align*} }}
Bound~\eqref{eq:chi_tv} follows from the chain:
\begin{align*} 
	\chi^2(P_{WY} \| P_W P_Y) &= \EE_{P_{WY}}\left[{P_{W|Y}(W|Y)\over P_W(W) }\right] - 1\\ 
	&=\EE_{P_{WY}}\left[{P_{W|Y}(W|Y)\over P_W(W) }\right] -\EE_{P_W P_Y}\left[{P_{W|Y}(W|Y)\over P_W }\right] \\
     &\le \esssup_{w,y} {P_{W|Y}(w|y)\over P_W(w)} \cdot \TV(P_{WY}, P_W P_Y)\\
     &\le {1\over p_{W,\min}}  T(W;Y)\,,
\end{align*}     
where first step is by~\eqref{eq:chi2def} and the rest are self-evident.
\end{proof}

Combining Propositions \ref{prop:tv-lost} and \ref{prop:ITV}, we conclude that both $S(W;Y_n)$ and $I(W;Y_n)$ vanish for finitely-valued $W$.
In particular, for Gaussian noise, by \prettyref{rmk:rate-awgn} (second moment constraint) we have $T(W;Y_n) = O(\frac{1}{\log n})$. Then the 
maximal correlation satisfies $S(W;Y_n) = O(\frac{1}{\sqrt{\log n}})$ and the mutual information vanishes according to
\begin{equation}
	I(W;Y_n) = O\pth{h\pth{\frac{1}{\log n}}} = O\pth{\frac{\log \log n}{\log n}}.
	\label{eq:i-rate}
\end{equation}

\apxonly{
\begin{remark}[Maximal correlation and $\chi^2$ divergence]\label{rmk:rho-chi2}
If we had $\chi^2(P_{WY_n}\| P_W P_{Y_n}) \to 0$, then we could immediately derive convergence of maximal correlation and
hence of $\rho^2(W,\Expect[W|Y_n])$ because of the general inequality~\eqref{eq:chi_maxcor}.
Note a curious double-relation between $S(X;Y)$ and $\chi^2$ divergence
in~\eqref{eq:chi_maxcor0} and~\eqref{eq:chi_maxcor}.

Furthermore, $\chi^2$ also would bound the mutual information:
	$$ \chi^2(P_{XY} \| P_X P_Y) \ge \exp\{I(X;Y)\} - 1 $$
(This follows from $\chi^2 = \EE[{P_{XY}\over P_X P_Y}] $ and Jensen.)
\end{remark}

\textbf{Open question:} For any $P_{X_n|X_0}$,
\[
I(X_0;X_n) \to 0 \stackrel{?}{\Longrightarrow} \maxcor(X_0;X_n) \to 0.
\]
Even if $P_{X_n|X_0}$ is cascaded AWGN, it is still unclear. But it is easy to construct examples
where
$$ I(A_n; B_n) \to 0, \maxcor(A_n, B_n) \ge \delta > 0 $$
}

\subsection{Convergence of mutual information}
\label{sec:I}

In this subsection we focus on the AWGN channel and show that the convergence rate \prettyref{eq:i-rate} continues to
hold for \emph{any} random variable $W$, which will be useful for applications in optimal stochastic control where $W$
is Gaussian distributed.\footnote{\textit{Added in print:} Another method of showing~\prettyref{eq:i-rate} is to directly use the strong data processing inequality for mutual information in Gaussian noise, cf.~\cite{yuryITA,CPW15}. Namely, it is possible to show the existence of certain non-linear function $F_I$ such that $F_I(t)<t$ and
\begin{equation}\label{eq:ppdq}
	I(W;X+Z) \le F_I(I(W;X)) 
\end{equation}
for all $(W,X) \dperp Z$ and $\EE[|X|^2]\le E$. Then~\prettyref{eq:i-rate} follows by applying~\prettyref{eq:ppdq}
repeatedly and the behavior of $F_I$ curve near zero: $F_I(t) = t - e^{-\frac{E}{t}\ln \frac{1}{t} + \Theta(\ln \frac{1}{t})}$.}
To deal with non-discrete $W$, a natural idea to apply is \emph{quantization}. By Propositions~\ref{prop:tv-lost} and~\ref{prop:ITV}, for any quantizer $q:\mreals^d\to[m]$,  we have
\begin{equation}\label{eq:tp0}
		I(q(W); Y_n) \le {C \log m \over \log n} + h\left({C \over \log n}\right)
\end{equation}
for some universal constant $C$.
A natural conjecture is the following implication: For any sequence of channels $P_{Y_n|X}$ we
	have:
	$$ \forall m \in \naturals, \forall q: \mreals^d\to[m]: I(q(W); Y_n) \to 0 \quad \implies \quad I(W;Y_n) \to0,
	$$
	which would imply the desired conclusion that mutual information vanishes. 
Somewhat counter-intuitively, this conjecture is generally false, as the following counterexample shows: Consider $X\sim \Unif([0,1])$ and
	$$ Y_n = \begin{cases} 0, {1\over n} \le X \le 1,\\
		k, (k-1){2^{-n}\over n} \le X < k {2^{-n}\over n}, 
		\end{cases} \qquad k=1,\ldots,2^n. $$
	On one hand it is clear that $I(X; Y_n) \to \infty$. On the other hand, 
	among all $m$-point quantizers $q$,
%	 with fixed number $m$ of levels, 
	 it is clear that the
	optimal one is to quantize to some levels corresponding to the partition that $Y_n$
	incurs (other quantizers are just equivalent to randomization). Thus
	$$ \sup_{q:[0,1]\to[m]} I(q(X); Y_n) = \sup_{q:[2^n+1]\to[m]} H(q(Y_n)) .$$
	But the RHS tends to zero as $n\to\infty$ for any fixed $m$ because the dominating
	atom shoots up to $1$. The same example also shows that
\begin{equation}\label{eq:ttoi}
			T(X; Y_n) \to 0 \quad \not\!\!\! \implies \quad I(X;Y_n) \to 0\,. 
\end{equation}	
	\apxonly{Indeed, in this case
		$$ T(X; Y_n) = \EE_{Y_n} \TV(P_{X|Y_n}, P_X) \le \PP[Y_n \neq 0] + O(1/n) = O(1/n)\,.$$
	}%

	Nevertheless, under additional constraints on kernels $P_{Y_n|W}$, we can prove that \prettyref{eq:ttoi} indeed holds and obtain the convergence rate. The main idea is to show that the set of distributions $\{P_{Y_n|W=w}, w\in\mreals^d\}$ can be 	grouped into finitely many clusters, so that the diameter (in KL divergence) of each cluster is arbitrarily small. This can indeed be done in our setting since the channel $P_{Y_n|W}$ is a stochastically degraded version of an AWGN channel.
%	More precisely, note that
%		$$ D(\matn(x, \vect I_d) \| \matn(y, \vect I_d)) = {\log e\over 2} \|x-y\|^2\,.$$

\begin{proposition}\label{prop:iconv} Let $W, X_k, Y_k$ be as in Theorem~\ref{th:main}. 
If $\Expect[\|X_k\|^2] \leq d E$ for all $k \in [n]$, then
	\begin{equation}\label{eq:tp}
		I(W; Y_n) \le 
\frac{d}{2}\log\pth{1+ \frac{dE}{\log n}} + \frac{d^2E}{2 \log n} \log\pth{1+ \frac{\log n}{d}} + \frac{C d^2E}{\log n} \log\pth{1+\frac{2 \log n}{d \sqrt{E}}} + h\pth{\frac{C dE}{\log n}\wedge1}
,
	\end{equation}
	where $C$ is the absolute constant in \prettyref{eq:main1}.
	In particular, for fixed $d$ and $E$, 
	\begin{equation}
	I(W; Y_n) = O\pth{\frac{\log \log n}{\log n}}.
	\label{eq:tp-d}
\end{equation}
%	 $c$ is the same constant as in $T(W;Y_n) \le {c\over \log n}$ \textbf{TODO: $c$ depends on d!}
\end{proposition}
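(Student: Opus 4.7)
The plan is to bound $I(W; Y_n) \le I(X_1; Y_n)$ via data-processing on $W \to X_1 \to Y_n$, and then combine a quantization of $X_1$ in $\mreals^d$ with the total-variation estimate from \prettyref{prop:tv-lost}. This reduction is essential because $W$ is arbitrary and need not live in $\mreals^d$ or satisfy any moment constraint (cf.\ the counterexample preceding the proposition), whereas $X_1$ does live in $\mreals^d$ with $\Expect\|X_1\|^2 \le dE$. Specifically, introduce a quantizer $q: \mreals^d \to [M]$ partitioning the ball $B_R = \{x: \|x\| \le R\}$ into $M-1$ cells of diameter $\le \delta$, plus one outer cell $B_R^c$, and apply the chain rule
\[
I(X_1; Y_n) \;=\; I(q(X_1); Y_n) \;+\; I(X_1; Y_n \mid q(X_1)).
\]

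For the discrete term, \prettyref{prop:ITV} combined with \prettyref{prop:tv-lost} (applied to the sub-chain starting at $X_1$, which has the same number of Gaussian hops as the original chain) gives
\[
I(q(X_1); Y_n) \;\le\; \log(M-1)\,T(X_1;Y_n) + h\pth{T(X_1;Y_n)}, \qquad T(X_1;Y_n) \le \frac{CdE}{\log n}.
\]
For the continuous term, conditioning on the function $q(X_1)$ preserves the Markov property $X_1 \to Y_1 \to Y_n$, so DPI yields $I(X_1; Y_n \mid q(X_1)) \le I(X_1; Y_1 \mid q(X_1))$; the latter is then handled by the Gaussian capacity bound $I(U; U+Z) \le \tfrac{d}{2}\log(1+\Expect\|U-\Expect U\|^2/d)$ applied cell by cell. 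Inside $B_R$ each cell contributes at most $\tfrac{d}{2}\log(1+\delta^2/(4d))$, while the outer cell contributes at most $p \cdot \tfrac{d}{2}\log(1+E/p)$ where $p := \Prob[X_1 \notin B_R] \le dE/R^2$ by Markov's inequality and $\Expect[\|X_1\|^2\mid X_1\notin B_R]\le dE/p$.

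Optimizing by setting $R = \sqrt{\log n}$ and $\delta = 2d\sqrt{E/\log n}$ makes the inside-cell capacity exactly $\tfrac{d}{2}\log(1 + dE/\log n)$ and forces $p \le dE/\log n$, so the outer-cell contribution becomes of order $\tfrac{d^2 E}{2\log n}\log(1+\log n/d)$; a volume-packing count then yields $\log(M-1) \lesssim d\log(1 + 2\log n/(d\sqrt E))$, producing exactly the four terms on the right-hand side of \prettyref{eq:tp}, with \prettyref{eq:tp-d} following by inspection. The main obstacle is the unbounded outer cell $B_R^c$: a naive bound on its conditional capacity would diverge because $\Expect[\|X_1\|^2 \mid X_1 \notin B_R]$ can be as large as $dE/p$. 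The saving fact is that this large capacity is weighted by $p$ itself, so the product is of order $p\log(1/p)$, which is small exactly because $p$ is, and this tension between shrinking cell diameter (to make each in-cell capacity vanish) and shrinking tail mass (to keep the outer contribution vanishing) is precisely what pins down the scaling $R = \Theta(\sqrt{\log n})$ and $\delta = \Theta(1/\sqrt{\log n})$.
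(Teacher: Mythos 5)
Your proposal is correct and follows essentially the same route as the paper's proof: the chain-rule split $I(X_1;Y_n)=I(q(X_1);Y_n)+I(X_1;Y_n\mid q(X_1))$ with a covering of the ball of radius $\sqrt{\log n}$ by cells of radius $\Theta(d\sqrt{E/\log n})$, the discrete part controlled by Propositions~\ref{prop:ITV} and~\ref{prop:tv-lost}, and the conditional part by the AWGN capacity formula applied cell by cell, including the $p\log(1/p)$ treatment of the outer cell via Chebyshev. No gaps.
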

\begin{remark} 
Note that the upper bound~\eqref{eq:tp} 
deteriorates as the dimension $d$ grows, which is to be expected. Indeed, for large $d$ one can employ very reliable error-correcting codes for the AWGN channel with blocklength $d$, that can tolerate a large number of hops over the AWGN channels. If the blocklength $d=d_n$ grows with $n$ such that $d_n=O(\log n)$ and the power per coordinate $E$ is fixed, then \prettyref{eq:tp} reduces to
\[
I(W; Y_n) \le O\pth{\frac{d_n^2}{\log n} \log \frac{\log n}{d_n}}.
\]
Using Fano's inequality, this implies that in order to reliably communicate over $n$ hops at some positive rate, thereby $I(W;Y_n)=\Omega(d_n)$, it is necessary to have the blocklength $d_n$ grow at least as fast as 
\begin{equation}
d_n=\Omega(\log n).	
	\label{eq:dn}
\end{equation}
 This conclusion has been obtained in \cite{Subramanian12} under the simplified assumption of almost sure power constraint of the codebook (see \prettyref{eq:mc3sub}). Here \prettyref{prop:iconv} extends it to power constraint in expectation. 
 \apxonly{
 \par To see that \prettyref{eq:dn} is also sufficient for reliable communicate at some positive (albeit small) rate, let $d_n= \rho \log n$ for some $\rho>0$. Note that the optimal code for the AWGN channel with blocklength $d$ and rate $R$ has block error probability at most $\exp(-E(R)d_n) = n^{-\rho E(R)}$, where $E(R)$ is the random coding error exponent. Applying a union bound over the $n$ relays, we conclude that the end-to-end error probability is at most $n^{1-\rho E(R)}$.
Choose $R=R(\rho) >0$, such that $\rho E(R)>1$. }

\label{rmk:rate-pos}
\end{remark}

\begin{proof}[Proof of \prettyref{prop:iconv}]
Fix $u,\epsilon>0$
% and $0 < \epsilon < u$ 
to be specified later. 
It is well-known that the $\ell_2$-ball in $\reals^d$ of radius $u$ can be covered by at most $m=\floor{(1+\frac{2u}{\epsilon})^d}$ $\ell_2$-balls of radius $\epsilon$, whose centers are denoted by ${x_1},\ldots,{x_m}$. 
%Let $x_{m+1}$ be any point not equal to any $x_k$. Define the quantizer: $q(x) = x_{\argmin_{i\in [m]} x_i}$
Define $q: \reals^d \to [m+1]$ by
\[
q(x) = \bpth{\argmin_{i\in [m]} \|x_i-x\|}\indc{\|x\|\leq u}+ (m+1)\indc{\|x\|> u}.
\]
Then $\expect{\|X_1-x_i\|_2^2|q(X_1)=j} \leq \epsilon^2$ for any $j\in[m]$. Hence
	\begin{align} I(X_1; Y_n | q(X_1) = j) 
	&\le I(X_1; Y_1 | q(X_1) = j)\label{eq:tt1}\\
		&\le \frac{d}{2}\log\pth{1+\frac{\EE[\|X_1 - x_j\|^2 | q(X_1) = j]}{d}}\label{eq:tt2}\\
		&\le \frac{d}{2}\log\pth{1+ \frac{\epsilon^2}{d}}\label{eq:tt3},
	\end{align}
	where in~\eqref{eq:tt1} we used the Markov relation $q(X_1)\to X_1 \to Y_1 \to Y_n$, and \prettyref{eq:tt2} follows from the vector AWGN channel capacity:
	\begin{equation}
	\sup_{P_X: \Expect[\|X\|_2^2 \leq P]} I(X;X+Z) = \frac{d}{2} \log\pth{1+\frac{P}{d}},
	\label{eq:awgn-mi}
\end{equation}
where $Z\sim \calN(0,\mathbf{I}_d)$ is independent of $X$. Similarly,
\begin{align} 
I(X_1; Y_n | q(X_1) = m+1) 
	&\le I(X_1; Y_1 | q(X_1) = m+1) \nonumber \\
		&\le \frac{d}{2}\log\pth{1+\frac{\EE[\|X_1\|^2 | \|X_1\| > u]}{d}} \nonumber\\
		&\le \frac{d}{2}\log\pth{1+ \frac{E}{\prob{\|X_1\| > u}}}\label{eq:tt4},
	\end{align}
where \prettyref{eq:tt4} follows from the fact that $\EE[\|X_1\|^2 | \|X_1\| > u] \prob{\|X_1\| > u} \leq \Expect[\|X_1\|^2]$.

		Averaging \prettyref{eq:tt3} and \prettyref{eq:tt4} over $q(X_1)=j \in [m+1]$, we obtain
		\begin{align}
I(X_1; Y_n | q(X_1)) 
\le & ~  \frac{d}{2}\log\pth{1+ \frac{\epsilon^2}{d}} + \frac{d}{2} \prob{\|X_1\| > u} \log\pth{1+ \frac{E}{\prob{\|X_1\| > u}}}	\nonumber \\
\leq & ~ \frac{d}{2}\log\pth{1+ \frac{\epsilon^2}{d}} + \frac{d^2E}{2u^2} \log\pth{1+ \frac{u^2}{d}},
				\label{eq:tt0x}
\end{align}
where \prettyref{eq:tt0x} follows from the fact that $x \mapsto x \ln(1+\frac{1}{x})$ is increasing on $\reals_+$\footnote{Indeed, $(x \ln(1+\frac{1}{x}))'=-\ln(1-\frac{1}{1+x})-\frac{1}{1+x} \geq 0$.} and the Chebyshev's inequality:
\begin{equation}
	\prob{\|X_1\| \geq u} \leq \frac{\Expect[\|X_1\|^2]}{u^2} \leq \frac{dE}{u^2}.
	\label{eq:cheb}
\end{equation}
Applying \prettyref{prop:ITV}, we have
\begin{equation}
	I(q(X_1); Y_n) \leq t_n d \log \pth{1+\frac{2u}{\epsilon}} + h(t_n),
	\label{eq:tt5}
\end{equation}
where $t_n = T(q(X_1);Y_n) \leq T(X_1;Y_n) \leq \frac{C dE}{\log n}$ in view of \prettyref{eq:main1}

Combining \prettyref{eq:tt4} and \prettyref{eq:tt5} yields
	\begin{align} 
	I(W;Y_n)
	&\leq I(X_1; Y_n) \\
	&= I(q(X_1); Y_n) + I(X_1; Y_n | q(X_1))\\
		&\le \frac{d}{2}\log\pth{1+ \frac{\epsilon^2}{d}} + \frac{d^2E}{2u^2} \log\pth{1+ \frac{u^2}{d}} + \frac{C d^2E}{\log n} \log\pth{1+\frac{2u}{\epsilon}} + h\pth{\frac{C dE}{\log n}\wedge1}.\label{eq:tt6}
	\end{align}
%which holds for any $0<\epsilon<u$.
Choosing $u=\sqrt{\log n}$ and $\epsilon^2 = \frac{d^2 E}{\log n}$ yields the desired \prettyref{eq:tp}.
\end{proof}

\subsection{Convergence of correlation coefficients}
\label{sec:rho}
Given a pair of random variables $X,Y$, the conditional expectation of $X$ given $Y$ has the maximal correlation with $X$ among all functions of $Y$, \ie
\[
\sup_{g \in L_2(P_Y)} \rho(X,g(Y)) = \rho(X,\Expect[X|Y]) = \frac{\|\Expect[X|Y]-\Expect[X]\|_2}{\sqrt{\var(X)}}\,,
\]
which is a simple consequence of the Cauchy-Schwartz inequality.
As the next result shows, vanishing mutual information provides a convenient sufficient condition for establishing vanishing correlation coefficients.
\begin{proposition} \label{prop:rconv}
Assume that $\Expect [W^2] < \infty$. For any sequence of $P_{Y_n|W}$, 
\begin{equation}
\lim_{n\to\infty} I(W; Y_n) = 0 \quad \implies \quad \lim_{n\to\infty} \rho(W,  \Expect[W|Y_n]) = 0.	
%\sup_g\rho(W,  g(Y_n)) = 0.
	\label{eq:Irho}
\end{equation}
Moreover, if $W$ is Gaussian, then
\begin{equation}
%\sup_{g} \rho^2(W, g(Y_n)) = O(I(W; Y_n)).
\rho^2(W, \Expect[W|Y_n]) \leq 1 - \exp(- 2 I(W; Y_n)) \le 2 I(W; Y_n).
	\label{eq:Irho-g}
\end{equation}
\end{proposition}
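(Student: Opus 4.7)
I will handle the Gaussian case first since it is immediate, and then reduce the general case to it by truncation and Pinsker's inequality.

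For the Gaussian case, write $\hat W_n \eqdef \Expect[W|Y_n]$ and let $\sigma^2 = \Var(W)$. By the orthogonality principle, $\Var(\hat W_n) = \sigma^2 - \mmse(W|Y_n)$, where $\mmse(W|Y_n) \eqdef \Expect[(W-\hat W_n)^2]$. Since $\Expect[\hat W_n]=\Expect[W]$ and $\Cov(W,\hat W_n)=\Var(\hat W_n)$, one has $\rho^2(W,\hat W_n) = 1 - \mmse(W|Y_n)/\sigma^2$. Now use the Gaussian maximum-entropy bound: $h(W|Y_n)\le h(W-\hat W_n)\le \frac12\log(2\pi e\,\mmse(W|Y_n))$, which combined with $h(W)=\frac12\log(2\pi e\sigma^2)$ gives $I(W;Y_n)\ge \frac12\log\frac{\sigma^2}{\mmse(W|Y_n)}$. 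Rearranging yields $\mmse(W|Y_n)\ge \sigma^2\exp(-2I(W;Y_n))$, hence \prettyref{eq:Irho-g}; the final inequality $1-e^{-2x}\le 2x$ is elementary.

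For the general case with $\Expect[W^2]<\infty$, the plan is to truncate. Let $W_M \eqdef W\indc{|W|\le M}$, which is bounded by $M$, and decompose $\hat W_n = \Expect[W_M|Y_n]+\Expect[W-W_M|Y_n]$. By Jensen, $\Var(\Expect[W-W_M|Y_n]) \le \Expect[(W-W_M)^2]$, which vanishes as $M\to\infty$ by dominated convergence. Thus by the triangle inequality in $L_2$,
\begin{equation*}
\sqrt{\Var(\hat W_n)}\;\le\;\sqrt{\Var(\Expect[W_M|Y_n])} + \sqrt{\Expect[(W-W_M)^2]}.
\end{equation*}
It remains to show that, for each fixed $M$, the first term vanishes as $n\to\infty$. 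Using the tower property, $\Var(\Expect[W_M|Y_n]) = \Cov(W_M,\Expect[W_M|Y_n]) = \int (W_M \cdot \Expect[W_M|Y_n])\,d(P_{W_M,Y_n}-P_{W_M}\otimes P_{Y_n})$, since the product-measure integral factors to $\Expect[W_M]\Expect[\hat W_n^M]$. Because both $W_M$ and $\Expect[W_M|Y_n]$ are bounded by $M$ in absolute value, the variational characterization of total variation gives
\begin{equation*}
\Var(\Expect[W_M|Y_n]) \;\le\; 2M^2\, T(W_M;Y_n).
\end{equation*}
By the data-processing inequality $I(W_M;Y_n)\le I(W;Y_n)\to 0$, and then Pinsker's inequality \prettyref{eq:pinsker} forces $T(W_M;Y_n)\to 0$. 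Taking $\limsup_{n\to\infty}$ followed by $M\to\infty$ then yields $\Var(\hat W_n)\to 0$, and since $\rho^2(W,\hat W_n)=\Var(\hat W_n)/\Var(W)$, the implication \prettyref{eq:Irho} follows.

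The main subtlety is the truncation step: one cannot hope to bound $\Var(\hat W_n)$ by $I(W;Y_n)$ directly without some moment information, since Pinsker-type inequalities for unbounded test functions fail in general. Splitting $W$ into a bounded piece (handled by Pinsker) and a tail (handled by Jensen and the finite second moment of $W$) is what makes the argument go through. Note this also proves the general maximal-correlation statement \prettyref{eq:main2} in \prettyref{th:main}, because $\sup_{g\in L_2(P_{Y_n})}\rho(W,g(Y_n))=\rho(W,\Expect[W|Y_n])$ by Cauchy--Schwarz and the tower property.
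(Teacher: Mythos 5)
Your proof is correct, but for the general implication \prettyref{eq:Irho} it takes a genuinely different route from the paper. The Gaussian part coincides with the paper's (the bound $I(W;Y_n)\ge \frac12\log\frac{\sigma^2}{\mmse(W|Y_n)}$ is exactly the "Gaussian rate--distortion formula" the paper invokes, derived here via the max-entropy/Shannon-lower-bound argument). For general $W$ with $\Expect[W^2]<\infty$, the paper reduces the claim to a statement about the rate--distortion function, namely that $R(D)\to 0$ iff $D\to\sigma^2$, and proves the nontrivial direction by a soft compactness argument: tightness of $P_{\hX_n,X}$, Prokhorov's theorem, and lower semicontinuity of divergence and of the second moment along a weakly convergent subsequence. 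Your argument instead truncates $W$ into a bounded piece plus a tail, controls the tail uniformly in $n$ by conditional Jensen and dominated convergence, and controls the bounded piece by the elementary bound $\Var(\Expect[W_M|Y_n])\le 2M^2\,T(W_M;Y_n)$ together with data processing and Pinsker. The two approaches buy different things: the paper's is shorter given the rate--distortion machinery but is purely qualitative (no rate), whereas yours is self-contained, avoids weak-convergence arguments entirely, and is quantitative in the sense that it yields explicit bounds once one has a rate for $T(W;Y_n)$ and a tail estimate for $W$ --- indeed it is close in spirit to the paper's own Proposition~\ref{prop:rconv-ng}, which trades mutual information for $T$-information precisely to get rates under moment or sub-Gaussian conditions. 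All the individual steps in your writeup check out: $\rho^2(W,\hW_n)=\Var(\hW_n)/\Var(W)$ by the tower property, the $L_2$ triangle inequality for the decomposition, and the boundedness of both $W_M$ and $\Expect[W_M|Y_n]$ by $M$ needed for the total-variation estimate.
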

\begin{proof}
For the Gaussian case, \prettyref{eq:Irho-g} follows from the inequality
\begin{equation}
I(W;\hW) \geq \frac{1}{2} \log \frac{1}{1-\rho^2(W,\hW)},	
	\label{eq:Icor-g}
\end{equation}
which is equivalent to the Gaussian rate-distortion formula.
To see the implication \prettyref{eq:Irho}, first notice the equivalence
$$  \EE[(W-\Expect[W|Y_n])^2] \to \var(W) \quad\iff\quad
\rho(W, \Expect[W|Y_n]) \to 0.$$
%that there is an equivalence
%$$  \inf_g \EE[\|W-g(Y_n)\|^2] \ge \EE[|W\|^2] (1-\delta) \quad\iff\quad
%\sup_g \rho(W, g(Y_n)) \le \sqrt{\delta} $$
From here Proposition~\ref{prop:rconv} follows from the next (probably well-known) lemma.
\end{proof}
		
\begin{lemma}
Assume that $\Expect[X^2]  < \infty$. Let $\var(X) = \sigma^2$. Denote the rate-distortion function of $X$ with respect to the mean-square error by
\[
R(D) = \inf_{P_{\hX|X}: \Expect (\hX-X)^2 \leq D} I(X; \hX).
\]
Then
\begin{equation}
	D \to \sigma^2 \Leftrightarrow R(D) \to 0.
	\label{eq:RD}
\end{equation}
	\label{lmm:RD}
\end{lemma}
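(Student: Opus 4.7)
The plan is to reduce the equivalence to a characterization of the zero set of $R$ combined with convexity. The key computation is the following: $R(D) = 0$ if and only if $D \geq \sigma^2$. For the easy direction, the deterministic choice $\hat X \equiv \Expect X$ is independent of $X$, gives $I(X;\hat X)=0$, and achieves distortion exactly $\sigma^2$; combined with monotonicity, $R(D)=0$ for all $D\geq \sigma^2$. For the converse, $I(X;\hat X)=0$ forces $\hat X \perp X$, and expanding the square yields
\[
\Expect[(\hat X - X)^2] = \Var(\hat X) + (\Expect \hat X - \Expect X)^2 + \sigma^2 \geq \sigma^2,
\]
so any rate-zero code already pays at least $\sigma^2$ in distortion. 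Hence $R^{-1}(\{0\}) = [\sigma^2, \infty)$.

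Next I would invoke the standard facts that $R$ is convex and non-increasing on $[0,\infty)$ (the former via time-sharing of optimal test channels at two distortion levels). Combined with the zero-set characterization, convexity forces $R$ to be \emph{strictly} decreasing on $[0,\sigma^2]$: if $R$ were constant on a subinterval $[a,b]\subset[0,\sigma^2)$, then convexity together with $R(\sigma^2)=0 < R(a)$ would be violated. Moreover, convex functions are continuous on the interior of their effective domain, so $R$ is continuous at every $D\in(0,\sigma^2]$; in particular $R$ is left-continuous at $\sigma^2$ with $\lim_{D\uparrow \sigma^2} R(D) = R(\sigma^2)=0$.

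With these properties in hand, the equivalence \prettyref{eq:RD} is immediate, under the (implicit) restriction to the natural range $D\in[0,\sigma^2]$, which is where \prettyref{lmm:RD} is applied in \prettyref{prop:rconv} since $D_n=\Expect[(W-\Expect[W|Y_n])^2]\le \Var(W)=\sigma^2$. The forward direction $D_n\to\sigma^2 \Rightarrow R(D_n)\to 0$ follows from left-continuity of $R$ at $\sigma^2$. For the converse, suppose $R(D_n)\to 0$ but $\liminf D_n < \sigma^2$. Choose $D^\ast$ with $\liminf D_n < D^\ast < \sigma^2$; by strict monotonicity $R(D^\ast) > 0$, yet along a subsequence $D_{n_k} \leq D^\ast$, whence $R(D_{n_k}) \geq R(D^\ast) > 0$, contradicting $R(D_n)\to 0$. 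Hence $\liminf D_n \geq \sigma^2$, and combined with $D_n\leq\sigma^2$, we conclude $D_n\to\sigma^2$.

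There is no significant obstacle in this proof; the only delicate point is the strict monotonicity of $R$ on $[0,\sigma^2]$, which is what rules out the trivial counterexample of $R$ having a ``plateau'' just below $\sigma^2$, and this is precisely where convexity is used in a non-trivial way.
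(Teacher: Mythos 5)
Your forward direction and the convexity/continuity bookkeeping are fine (and your use of convexity for continuity at $\sigma^2$ is a legitimate alternative to the paper's one-line Gaussian domination bound $R(D)\le \frac12\log^+\frac{\sigma^2}{D}$). But the core of the converse has a genuine gap: you deduce $R^{-1}(\{0\})=[\sigma^2,\infty)$ from the statement that ``any rate-zero code already pays at least $\sigma^2$ in distortion.'' That argument only covers the case where the infimum defining $R(D)$ is \emph{attained} by an actual coupling with $I(X;\hX)=0$. Since $R(D)$ is an infimum, $R(D_0)=0$ for some $D_0<\sigma^2$ only guarantees a sequence $P_{\hX_n|X}$ with $\Expect[(\hX_n-X)^2]\le D_0$ and $I(X;\hX_n)\to 0$; no member of the sequence need be exactly independent of $X$, so your identity $\Expect[(\hX-X)^2]=\Var(\hX)+(\Expect\hX-\Expect X)^2+\sigma^2$ never applies. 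Nor can you pass to the limit cheaply: small mutual information makes $P_{X\hX_n}$ close to $P_XP_{\hX_n}$ only in an information/total-variation sense, and $(x-\hat x)^2$ is unbounded, so this does not control the expected distortion without uniform integrability.

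This missing step is precisely the content of the paper's proof of the converse: from $\Expect[\hX_n^2]\le 2D_0+2\Expect[X^2]$ the sequence $P_{X\hX_n}$ is tight, Prokhorov's theorem yields a weakly convergent subsequence $P_{X\hX_{n_k}}\to P_{X\hX}$, and lower semicontinuity of mutual information and of the quadratic cost under weak convergence give $I(X;\hX)=0$ (hence $\hX\indep X$) together with $\Expect[(\hX-X)^2]\le D_0<\sigma^2$, which is the contradiction you were after. Once $R(D)>0$ for all $D<\sigma^2$ is established this way, your monotonicity and continuity wrap-up goes through; but as written, the claim that the zero set of $R$ is exactly $[\sigma^2,\infty)$ is unproved, and it is the crux of the lemma.
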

\begin{proof}
($\Rightarrow$)  The rate-distortion function is dominated by that of the Gaussian distribution \cite{berger}: 
\begin{equation}
R(D) \leq \frac{1}{2} \log^+ \frac{\sigma^2}{D},	
	\label{eq:RG}
\end{equation}
 where $\log^+ \triangleq \max\{\log, 0\}$.

($\Leftarrow$)  Note that $D \mapsto R(D)$ is decreasing and concave on $[0,\sigma^2]$, hence continuous on the open
interval $(0,\sigma^2)$. Suppose there exists $D_0 < \sigma^2$ such that $R(D_0)=0$. Then by definition of the
rate-distortion function, there exists a sequence of $P_{\hX_n|X}$ such that $\Expect(\hX_n-X)^2\leq D$.  $I(X;\hX_n)
\to 0$. Note that $\Expect X_n^2 \leq 2 D + 2\Expect X^2$ for all $n$. Therefore the sequence $P_{\hX_n,X}$ is tight. By
Prokhorov's theorem, there exists a subsequence $P_{\hX_{n_k},X}$ which converges weakly to some $P_{\hX,X}$. 
%By Fatou's lemma and Skohorod's representation, $\Expect(\hX_n-X)^2 \leq \liminf \Expect(\hX_{n_k}-X)^2\leq D$.
By the lower semicontinuity of the divergence and the second-order moment, $\Expect(\hX_n-X)^2 \leq \liminf
\Expect(\hX_{n_k}-X)^2\leq D$ and $I(\hX; X) \leq \liminf I(\hX_{n_k};X) = 0$. Hence $\hX \indep X$, contradicting
$\Expect(\hX_n-X)^2 \leq D < \sigma^2$. 
\end{proof}

\smallskip

\prettyref{prop:rconv} allows us to capitalize on the results on mutual information in \prettyref{sec:I} to obtain
correlation estimates for the Markov chain \prettyref{eq:mc1}. In particular, combining \prettyref{eq:Irho} with
\prettyref{prop:iconv} yields \prettyref{eq:main2}. Additionally, if $W$ is Gaussian, then \prettyref{eq:Irho-g} yields 
\begin{equation}
\rho(W,\Expect[W|Y_n]) = O(\sqrt{I(W; Y_n)}) = O\pth{\sqrt{\frac{\log \log n}{\log n}}}.	
	\label{eq:main3-rate}
\end{equation}
These prove the correlation part of the main result \prettyref{th:main}.

However, the estimate \prettyref{eq:main3-rate} is not entirely satisfactory in the sense that it highly depends on the Gaussianity of $W$; 
if $W$ is not Gaussian, the rate-distortion function of $W$ is not explicitly known and it is unclear whether \prettyref{eq:Irho-g} still applies. 
How to obtain quantitative estimates on the correlation coefficient if we only have sub-Gaussianity or moment constraints on $W$? It turns out that one can circumvent mutual information completely and directly obtain correlation estimate from the $T$-information, whose convergence rate has been found in \prettyref{sec:T}. The key connection between total variation and correlation is the following simple observation:

\begin{proposition}
\label{prop:rconv-ng}
Assume $W$ is zero-mean, unit-variance. For any $q \in (1,\infty]$ we have
\begin{equation}
	\rho^2(W,\Expect[W|Y]) \leq 4 T(W;Y)^{1-\frac{1}{q}} \|W\|_{2q}^2.
	\label{eq:rconv-moment}
\end{equation}
If $W$ is sub-Gaussian and $T(W;Y) < e^{-2/e}$, we have
\begin{equation}
	\rho^2(W,\Expect[W|Y]) \leq {8\over \log e} \|W\|^2_{\psi_2} \, T(W;Y) \log \frac{1}{T(W;Y)} \,.
	\label{eq:rconv-subg}
\end{equation}
where $\|W\|_{\psi_2} \triangleq \inf\{c>0: \Expect [e^{W^2/c^2}] \leq 2\}$ is an Orlicz norm.
\end{proposition}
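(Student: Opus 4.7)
The plan is to reduce both bounds to a Hölder estimate on a single integral against the signed measure $\sigma\triangleq P_{W,Y}-P_W P_Y$. Writing $g(y)\triangleq\Expect[W|Y=y]$, the zero-mean, unit-variance hypothesis gives $\rho^2(W,\Expect[W|Y])=\|g\|_2^2$; since $\Expect[W]\Expect[g(Y)]=0$,
\begin{equation*}
\|g\|_2^2=\Expect[Wg(Y)]=\int wg(y)\,\sigma(\diff w,\diff y).
\end{equation*}
The Jordan decomposition $\sigma=\sigma^+-\sigma^-$ satisfies $\sigma^{\pm}(\mreals^2)=T(W;Y)$, and the total variation measure $|\sigma|=\sigma^++\sigma^-$ is dominated by $P_{W,Y}+P_WP_Y$ and has total mass $2T(W;Y)$.

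For \eqref{eq:rconv-moment} the key step is a three-factor Hölder inequality applied to $|w|\cdot|g(y)|\cdot 1$ with exponents $(2q,2q,q/(q-1))$:
\begin{equation*}
\|g\|_2^2\leq \left(\int|w|^{2q}\diff|\sigma|\right)^{1/(2q)}\left(\int|g(y)|^{2q}\diff|\sigma|\right)^{1/(2q)}(2T(W;Y))^{1-1/q}.
\end{equation*}
The domination of $|\sigma|$ bounds $\int|w|^{2q}\diff|\sigma|\leq 2\Expect|W|^{2q}$, and the conditional Jensen inequality $|g(y)|^{2q}\leq\Expect[|W|^{2q}|Y=y]$ gives $\int|g|^{2q}\diff|\sigma|\leq 2\Expect|W|^{2q}$ as well. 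Combining these yields $\|g\|_2^2\leq 2^{1+1/q}\|W\|_{2q}^2 T(W;Y)^{1-1/q}\leq 4\|W\|_{2q}^2 T(W;Y)^{1-1/q}$ for $q\geq 1$, which is \eqref{eq:rconv-moment}.

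For \eqref{eq:rconv-subg} I would feed into \eqref{eq:rconv-moment} the standard sub-Gaussian moment estimate: integrating the tail bound $\Prob[|W|>t]\leq 2\exp(-t^2/\|W\|_{\psi_2}^2)$ gives $\Expect|W|^{2q}\leq 2\|W\|_{\psi_2}^{2q}\Gamma(q+1)$, so by Stirling's formula $\|W\|_{2q}^2\leq (q/e)\|W\|_{\psi_2}^2(1+o(1))$ as $q\to\infty$. The resulting upper bound $(4q/e)\|W\|_{\psi_2}^2 T(W;Y)^{1-1/q}(1+o(1))$ is minimized in $q$ by setting $q=\log(1/T)$ (natural log), making $T^{-1/q}=e$ and collapsing the estimate to the form $C\|W\|_{\psi_2}^2 T(W;Y)\log(1/T)$. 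The hypothesis $T<e^{-2/e}$ ensures the optimal $q$ exceeds $2/e$, which is the range where Stirling's asymptotics produce the explicit prefactor $8/\log e$ claimed (note that this prefactor is base-independent: in natural logarithms it equals $8$).

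The main obstacle is that the integrand $h(w,y)=wg(y)$ is \emph{unbounded}, so the elementary inequality $|\int h\,\diff\sigma|\leq 2\|h\|_\infty T(W;Y)$ is useless; the Jordan-plus-Hölder device bypasses this by trading the $L^\infty$-norm of $h$ for a $2q$-moment of $W$, at the price of weakening the linear dependence on $T$ to the Hölder-exponent form $T^{1-1/q}$. A secondary delicate point is the careful Stirling bookkeeping needed in the sub-Gaussian regime to extract the stated constant.
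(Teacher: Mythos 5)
Your proof of \prettyref{eq:rconv-moment} is correct, and it takes a genuinely different route from the paper. The paper constructs a maximal coupling $(W,W',Y)$ with $W'\dperp Y$, $W'\sim P_W$ and $\PP[W\neq W']=T(W;Y)$, writes $\Expect[Wg(Y)]=\Expect[(W-W')g(Y)\indc{W\neq W'}]$, and applies H\"older with exponents $(2,2q,2q')$; you instead integrate $wg(y)$ against the Jordan decomposition of $\sigma=P_{WY}-P_WP_Y$ and use a three-factor H\"older with the domination $|\sigma|\le P_{WY}+P_WP_Y$ plus conditional Jensen to control $\int|g|^{2q}\,\diff|\sigma|$. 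Both arguments are sound; yours avoids constructing the coupling explicitly (and in fact yields the slightly better prefactor $2$ in place of $4$ --- your intermediate constant $2^{1+1/q}$ should be $2^{1/q}\cdot 2^{1-1/q}=2$), while the paper's coupling sets up the machinery it reuses for the sub-Gaussian case.

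For \prettyref{eq:rconv-subg} there is a genuine gap. Your plan --- bound $\|W\|_{2q}^2\le (2\Gamma(q+1))^{1/q}\|W\|_{\psi_2}^2$ and optimize $q$ --- is a legitimate strategy (it is essentially the alternative the authors themselves sketch in a side remark, but only ``up to a universal constant''), and it does produce a bound of order $\|W\|_{\psi_2}^2\,T\log\frac1T$ for sufficiently small $T$. It does not, however, establish the stated constant $8/\log e$ on the stated range $T<e^{-2/e}$, and the step where you wave at ``Stirling bookkeeping'' is exactly where it fails. Concretely: \prettyref{eq:rconv-moment} requires $q>1$, but your choice $q=\ln(1/T)$ satisfies $q\le 1$ whenever $T\ge 1/e\approx 0.368$, while the claimed range extends to $e^{-2/e}\approx 0.479$; so on $(1/e,e^{-2/e})$ your optimization is inadmissible and $T^{1-1/q}$ is not even a decaying power of $T$. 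Your explanation of the threshold (``the optimal $q$ exceeds $2/e$'') is therefore not the right condition, and indeed $e^{-2/e}$ has a different origin in the paper: it is the point at which the Orlicz norm $\|\indc{W\neq W'}\|_{\psi_1^*}$, with $\psi_1(x)=\tfrac12 e^x$ and Legendre dual $\psi_1^*(y)=y\ln\frac{2y}{e}$, can be bounded by $\PP[W\neq W']\ln\frac{1}{\PP[W\neq W']}$. Even for smaller $T$, extracting the factor $8$ from $(2\Gamma(q+1))^{1/q}\le cq$ requires $q$ bounded away from $1$ by a definite amount, which you have not verified. To repair the proof along the paper's lines, return to the coupling, bound $\Expect[|W-W'|^2\indc{W\neq W'}]\le 2\|W-W'\|_{\psi_2}^2\|\indc{W\neq W'}\|_{\psi_1^*}$ via Young's inequality for the conjugate pair $(\psi_1,\psi_1^*)$, and evaluate the $\psi_1^*$-norm of the indicator; alternatively, keep your route but accept an unspecified absolute constant and a smaller range of $T$.
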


\prettyref{prop:rconv-ng} is reminiscent of Tao's inequality \cite{tao.szemeredi, ahlswede.tao} and \cite[Theorem 10]{mmse.functional.IT},  which use mutual information to produce correlation estimates for bounded random variables: If $\var(W)=1$, then
\[
\rho^2(W,\Expect[W|Y]) \leq \frac{2}{\log \eexp} \|W\|_\infty^2 I(W;Y).
\]
In contrast, \prettyref{prop:rconv-ng} uses $T$-information in lieu of mutual information and allows more general tail condition.
\apxonly{Quick proof:
\[
|\Expect[W]-\Expect[W|Y]| \leq \int |w| |P_W-P_{W|Y}| \leq 2 \|W\|_\infty \TV(P_W,P_{W|Y}).
\]
Then $\|\Expect[W]-\Expect[W|Y]\|_2^2 \leq 4 \|W\|_\infty^2 \Expect[\TV(P_W,P_{W|Y})^2] \leq 2 \|W\|_\infty^2 I(W;Y)$. Note that if we use $\TV \leq 1$, then we get 
$\|\Expect[W]-\Expect[W|Y]\|_2^2 \leq 4 \|W\|_\infty^2 T(W;Y)$, which is exactly \prettyref{eq:rconv-moment} with $q=\infty$.
}

\begin{remark}
Combining \prettyref{prop:rconv-ng} with the convergence rate of the $T$-information in \prettyref{prop:tv-lost}, we obtain the corresponding convergence rate of correlation under various cost constraints on the relays and tail conditions on the original message $W$. For example, in view of \prettyref{rmk:rate-awgn},	if the cost function is $\cost(x)=|x|^p$ and $W$ is sub-Gaussian, then
\begin{equation}
\rho(W,\Expect[W|Y_n]) = O\pth{\frac{\sqrt{\log \log n}}{(\log n)^{p/4}}}.	
	\label{eq:corr-pmoment-subg}
\end{equation}
In particular, for average power constraint ($p=2$), the convergence rate \prettyref{eq:main3-rate} applies to all sub-Gaussian $W$. We will show in the next subsection that \prettyref{eq:corr-pmoment-subg} is in fact optimal for all $p$ when $W$ is Gaussian. 
\apxonly{Additionally, if $W$ has $q\Th$ moment for $q>2$, then $\rho(W,\Expect[W|Y]) = O((\log n)^{p/q-p/2})$.}
	\label{rmk:rcov-ng}
\end{remark}

\begin{proof}[Proof of \prettyref{prop:rconv-ng}]
 Since $T(W; Y)=\EE[\TV(P_{W|Y}, P_W)]$ we may construct a probability space with three variables $W,W',Y$
	such that $W'\dperp Y$ and furthermore
		$$ \PP[W\neq W'] = T(W; Y). $$
		Then, consider an arbitrary zero-mean $g(Y)$ and write
		\begin{align} \EE[W g(Y)] = \EE[W g(Y)] - \EE[W' g(Y)] &\le \EE[|g(Y)| \cdot |W-W'| \indc{W\neq W'}]
			\label{eq:rcm1a}\\
			&\le \|g(Y)\|_2 \|W-W'\|_{2q} T(W; Y)^{1\over 2q'}\,,\label{eq:rcm1}
	\end{align}		
	where the last step is by H\"older's inequality since ${1\over 2} + {1\over 2q} + {1\over 2q'} =1$ and 
	$q'={q\over q-1}$.
%	 is the norm dual to $q$. 
	 Since $\|W-W'\|_{2q} \le 2\|W\|_{2q}$, normalizing both sides of~\eqref{eq:rcm1} by $\|g(Y)\|_2$ and $\|W\|_2=1$ yields the desired \prettyref{eq:rconv-moment}.

	 For the second part of the proposition, consider arbitrary non-negative, convex $\psi:\mreals\to\mreals$ with
	 $\psi(0)<1$ and define the following Orlicz norm
	 	$$ \|X\|_\psi \eqdef \inf\{c>0: \EE[\psi(X/c)] \le 1\}\,.$$
	 If $\psi^*$ is the Legendre dual of $\psi$ then from Young's inequality we have for arbitrary $X,Y$:
	 $$ XY \le \psi(X) + \psi^*(Y), $$
	 and, hence,
	 \begin{equation}\label{eq:young_orlicz}
		 	\EE[XY] \le 2 \|X\|_{\psi} \|Y\|_{\psi^*} 
  	\end{equation}	 
	 Consider 
	 \begin{align*} \psi_1(x) &= {1\over 2} e^x, \qquad \psi_1^*(y)=y \ln {2y\over e},\\
		   \psi_2(y) &= {1\over 2} e^{x^2} ,
	   \end{align*}		   
	  and notice an easy identity
	  \begin{equation}\label{eq:rcm2}
		  	\|X^2\|_{\psi_1} = \|X\|_{\psi_2}^2 .
  \end{equation}	  
  Then, proceeding as above we only need to upper-bound $\EE[|W-W'|^2 \indc{W\neq W'}]$ in~\eqref{eq:rcm1a}. From
	  inequality~\eqref{eq:young_orlicz} and~\eqref{eq:rcm2} we get
	  $$ \EE[|W-W'|^2 \indc{W\neq W'}] \le 2 \|W-W'\|_{\psi_2}^2 \|\indc{W\neq W'}\|_{\psi_1^*}\,.$$
	  For the first term we apply triangle inequality.
	  The $\psi_1^*$-norm of the indicator is found as a unique solution of
	  	$$ c = \PP[W\neq W'] \ln{2\over ec}\,,$$
		with $c\in(0, {2\over e})$. It is easy to show that if $\PP[W\neq W'] <e^{-2/e}$ then 
	  $$ \|\indc{W\neq W'}\|_{\psi_1^*} \le \PP[W\neq W'] \ln {1\over \PP[W\neq W']}\,,$$
	  from which the proposition follows.
\end{proof}

\apxonly{
\begin{remark}
YW's peasant proofs of \prettyref{prop:rconv-ng}: Part 1  based on Hellinger:
\begin{align*}
\rho^2
= & ~ \Expect[(\Expect[W|Y])^2]	\nonumber \\
= & ~ \int P_Y(\diff y) \pth{\int w (P_W(\diff w) - P_{W|Y=y}(\diff w) ) }^2	\\
\leq & ~ \int P_Y(\diff y) \pth{\int w^2 (\sqrt{P_W(\diff w)} + \sqrt{P_{W|Y=y}(\diff w)})^2 } \pth{\int (\sqrt{P_W(\diff w)} - \sqrt{P_{W|Y=y}(\diff w)})^2 }	\\
\leq & ~ \int P_Y(\diff y) \pth{\int 2 w^2 (P_W(\diff w)+ P_{W|Y=y}(\diff w)) } H^2(P_W, P_{W|Y=y})	\\
\leq & ~ \int P_Y(\diff y) \pth{\int 2 w^2 (P_W(\diff w)+ P_{W|Y=y}(\diff w)) } 2 \TV(P_W, P_{W|Y=y})	\\
\leq & ~ 2 T(W;Y) + 2 \Expect[\TV(P_W, P_{W|Y}) \Expect[W^2|Y] ] \\
\leq & ~ 2 T(W;Y) + 2 T(W;Y)^{1-1/q} \|\Expect[W^2|Y]\|_q \\
\leq & ~ 2 T(W;Y) + 2 T(W;Y)^{1-1/q} \|W\|_{2q}^2 
\end{align*}
\end{remark}

Part 2 based on relation between moments and Orlicz norms:
If in addition that $W$ is sub-Gaussian, then we have $\|W\|_p \leq c \|W\|_{\psi_2} \sqrt{p}$ for any $p \geq 1$ and some universal constant $c>0$ (see, \eg, \cite[Lemma 5.5]{Vershynin10}). 
Optimizing \prettyref{eq:rconv-moment} over $q>1$, \ie, choosing $q = \log \frac{1}{T(W;Y)}$, we obtain
\prettyref{eq:rconv-subg}. Note: sub-Gaussian moment bound for even $p=2q$: $x^{2q} \leq \frac{1}{q!} \exp(x^2)$
then apply to $x=W/\|W\|_{\psi_2}$.
}

\subsection{Achievable schemes}
	\label{sec:ach}

For the scalar case we construct a relay scheme under which the $T$-information, mutual information and the correlation  
between the initial message $W\sim\matn(0,1)$ and the final output $Y_n$ achieve the lower bounds \prettyref{eq:main1} -- \prettyref{eq:main2} up to constants. 
This scheme is also useful for the optimal control problem in \prettyref{sec:control}.
For simplicity we only consider the $p\Th$ moment constraint $\Expect|X_k|^p \leq a$ and assume $W\sim \calN(0,1)$ and $a=2$ for notational conciseness.

\paragraph{Binary-messaging scheme}
In view of the converse results in Sections \ref{sec:T} -- \ref{sec:rho}, the majority of the information will be inevitably lost regardless of the relay design. Thus we only aim to transmit a small fraction of the original message, \eg, a highly skewed quantized version, reliably. To this end, let 
\begin{equation}
\mu = 4\sqrt{\log n}, \quad a = \sfQ^{-1}(\mu^{-p}) = \sqrt{p\log \log n} +o(1).	
	\label{eq:mua}
\end{equation}
 Let $X_1 = \mu \indc{W \geq a}$, which satisfies $\Expect |X_1|^p = 1$.
At each stage, the relay decodes the previous message by $X_{k+1}=\mu \indc{Y_k \geq \mu/2}$. 
Note that all $X_k$'s take values in $\{0,\mu\}$.
Then $\prob{X_{k+1}\neq X_k} \leq \prob{|Z_k| \geq \mu/2} = 2 \sfQ(\mu/2)$. For any $k\in[n+1]$, applying the union bound and the fact that 
 $\sfQ(a) \leq \varphi(a)/a$, we obtain
\begin{equation}
\prob{X_k \neq X_1} \leq 2 n \sfQ(\mu/2) \leq n^{-1}	.
	\label{eq:XkX1}
\end{equation}
Moreover, the moment constraint is satisfied since
%By symmetry, $\prob{X_k=\mu}=\prob{X_k=-\mu}$.
$$\Expect |X_k|^p = \mu^p \prob{X_k \neq 0} \leq \mu^p (\prob{X_1 \neq 0}+\prob{X_k \neq X_1}) \leq 1+ \frac{1}{n}(16\log n)^{p/2} \leq 2$$ for all sufficiently large $n$.

\paragraph{Total variation and Mutual information}
We show that 
\begin{align}
T(W;Y_n) &~= \Omega\pth{\frac{1}{(\log n)^{p/2}}},	\label{eq:tvt0} \\
I(W;Y_n) &~= \Omega\pth{\frac{\log \log n}{(\log n)^{p/2}}},	\label{eq:mi-ach}
\end{align}
which matches the upper bound in \prettyref{rmk:rate-awgn} and the upper bound \prettyref{eq:tp-d} (for $p=2$), respectively.
% and \prettyref{eq:main1} for $p=2$. 
 Since $X_1$ and $X_{n+1}$ are deterministic functions of $W$ and $Y_n$, respectively, we have $X_1 \to W \to Y_n \to X_{n+1}$ and
\[
T(W;Y_n) \geq T(X_1;X_{n+1}) \geq \mu^{-p} - \prob{X_{n+1} \neq X_1} = \Omega((\log n)^{-p/2}),
\]
where the first inequality follows from data processing, the second inequality follows from \prettyref{eq:T-binary}, and the last inequality is by \prettyref{eq:XkX1}.
Similarly, 
\[
I(W;Y_n) \geq I(X_1;X_{n+1}) = H(X_1)-H(X_1|X_{n+1}) \geq h(\mu^{-p/2}) - h(1/n) = \Omega(\mu^{-p/2} \log \mu ).
\]

\paragraph{Correlation}
Denote $B = \indc{W \geq a} = X_1/\mu$ and $\hat W = \Expect[W | B]=g(B)$, where 
\begin{equation}
	g(0) = \Expect[W|W\leq a] = -\frac{\varphi(a)}{\Phi(a)}, \quad g(1) = \Expect[W|W>a] = \frac{\varphi(a)}{\sfQ(a)}.
	\label{eq:g}
\end{equation}
Using the fact that $\sfQ(x) = \frac{\varphi(x)}{x}(1+o(1))$ as $x\diverge$, we have
\begin{equation}
\Expect[W\hW]=\Expect[\hW^2]=\frac{\varphi^2(a)}{\Phi(a) \sfQ(a)} = \sfQ(a) a^2 (1+o(1)) = \Theta\pth{\frac{\log \log n}{(\log n)^{p/2}}},
	\label{eq:hWnorm}
\end{equation}
where the last inequality follows from the choice of $a$ in \prettyref{eq:mua}.

Set $B_n = \indc{Y_n\geq \mu/2} = X_{n+1}/\mu$ and $W_n = g(B_n)$. By \prettyref{eq:XkX1}, we have $\Prob[B \neq B_n] \leq \frac{1}{n}$.
Therefore 
\begin{align}
\Expect[WW_n]
= & ~ \Expect[W\hW] + \Expect[W(W_n-\hW) \indc{B \neq B_n}]  	 \nonumber \\
\geq & ~ \Expect[W\hW] - \max\{|g(0)|,|g(1)|\} \Expect[|W| \indc{B \neq B_n}]  \nonumber \\
\geq & ~ \Expect[W\hW] - g(1) \sqrt{\Prob[B \neq B_n]} \label{eq:cor1} \\
= & ~ \Theta\pth{\frac{\log \log n}{(\log n)^{p/2}}}, \label{eq:cor2}
\end{align}
where \prettyref{eq:cor1} is by Cauchy-Schwartz, 
\prettyref{eq:cor2} is by \prettyref{eq:hWnorm} and $g(1) = a(1+o(1)) = \Theta(\sqrt{\log \log n}) $. Similarly,
\begin{align}
|\Expect[W_n^2] - \Expect[\hW^2]|
= & ~  |\Expect[(W_n^2-\hW^2) \indc{B \neq B_n}]| \leq g(1)^2 \Prob[B \neq B_n] = O(\log \log n/n).
	\label{eq:hV-norm}
\end{align}
Therefore $\|W_n\|_2=\|\hat W\|_2(1+o(1))$.
Consequently, the correlation satisfies
\begin{equation}
\rho(W,\Expect[W|Y_n]) = \sup_{g\in L_2(P_{Y_n})} \rho(W, g(Y_n)) \geq \rho(W, W_n) = \frac{\Expect[WW_n]}{\|W_n\|_2} =
\Omega\pth{\frac{\sqrt{\log\log n}}{(\log n)^{p/4}}},	
	\label{eq:rho-opt}
\end{equation}
which meets the upper bound \prettyref{eq:corr-pmoment-subg}.

\section{Applications}\label{sec:appl}

\subsection{Optimal memoryless control in Gaussian noise}
	\label{sec:control}
	
	The problem of optimal memoryless control in Gaussian noise was investigated in	\cite{LM11}. Consider the $n$-stage stochastic control problem in \prettyref{fig:control} in one dimension ($d=1$) where the input $W=X_0+Z_0$ with $X_0\sim\calN(0,\sigma_0^2)$ independent of $Z_0\sim\calN(0,1)$. The additive noise $Z_1,\ldots,Z_n$ are \iid standard Gaussian, and the relay function $f_j$ plays the role of a memoryless controller mapping the noisy observation $Y_{i-1}$ into a control signal $X_i$. 
%	To match \prettyref{fig:control} to the setting in \cite{LM11}, we restrict $f_1$ to be the identity mapping (hence $X_1=W$) and 
	Let $X_{n+1}=f_{n+1}(Y_n)$ denote the final estimate. Then we have the following Markov chain which has two more stages than \prettyref{eq:mc1}:
	\[
	X_0 \to	W \to X_1 \to Y_1 \to X_2 \to Y_2 \to \cdots \to X_n \to Y_n \to X_{n+1}.
	\]
The major difference is that,	instead of requiring that each controller satisfies the same power constraint as in \prettyref{eq:mc3}, here only a \emph{total} power budget is imposed:
\begin{equation}
\sum_{j=1}^n \EE[X_j^2] \le nE\,.
	\label{eq:total-power}
\end{equation} 
The objective is to maximize the correlation between $X_0$ and $X_{n+1}$.

The main results of \cite{LM11} show that although linear controllers are optimal for two stages ($n=1$) \cite[Proposition 7]{LM11}, for multiple stages they can be strictly sub-optimal. Specifically, subject to the constraint \prettyref{eq:total-power}, the optimal squared  correlation $\rho^2(X_0,X_{n+1})$
%mean-square error $\expect{(X_n-X_0)^2}$ 
achieved by linear controllers is \cite[Lemma 6]{LM11} 
\begin{equation}
%\sigma_0^2 \pth{1 - \frac{\sigma_0^2}{1+\sigma_0^2} \pth{\frac{P}{1+P}}^{n-1} }
\frac{\sigma_0^2}{1+\sigma_0^2} \pth{\frac{E}{1+E}}^{n} \,,
	\label{eq:LM11-affine}
\end{equation}
which vanishes exponentially as $n \diverge$. \cite[Theorem 15]{LM11} shows that \prettyref{eq:LM11-affine} can be improved by using binary quantizers in certain regimes, although the correlation still vanishes exponentially fast albeit with a better exponent. The optimal performance of non-linear controllers is left open in \cite{LM11}.
	
Capitalizing on the results developed in \prettyref{sec:proof}, next we show that 
	the squared correlation achieved by the best non-linear controllers is $\Theta(\frac{\log \log n}{\log n})$, which is significantly better than the exponentially small correlation \prettyref{eq:LM11-affine} achieved by the best linear controllers. 
	\begin{itemize}
	\item For any sequence $\{f_j\}$ satisfying the total power constraint \prettyref{eq:total-power}, the correlation necessarily satisfies 
	\begin{equation}
\rho^2(X_0,X_{n+1}) = O\left(\frac{\log \log n}{\log n}\right).	
	\label{eq:control-optimal}
\end{equation}
 To see this, applying the data processing inequality \prettyref{thm:TVproc} and the $\FTV$ curve in \prettyref{cor:Ftv-g} with $\cost(|x|)=|x|^2$, we have 
 \[
 T(X_0;X_{n+1}) \leq T(W;Y_n) \leq F_1 \circ \cdots \circ F_n (1),
 \]
  where $F_i(t) = t(1-2\sfQ(\sqrt{a_i/t}))$ and $a_i=\Expect[X_i^2]$. Since $\sum_{i=1}^n a_i \leq n E$, we have $\sum_{i=1}^n \indc{a_i \geq 2E} \leq n/2$. Consequently, \prettyref{prop:tv-lost} applies with $n$ replaced by $n/2$ and, by \prettyref{rmk:rate-awgn}, we have $T(W;Y_n) \leq C/\log n$ for some constant $C$ only depending on $E$. Since $X_0$ is Gaussian, applying \prettyref{prop:rconv-ng} yields the upper bound \prettyref{eq:control-optimal}.
  
	\item Conversely, the binary-quantizer scheme described in \prettyref{sec:ach} (with $p=2$) achieves 
	\[
	\rho^2(X_0,X_{n+1})= \Omega\pth{\frac{\log \log n}{\log n}}.
	\]
	Set $X_{n+1} = W_n= g(\indc{Y_n\geq \mu/2})$, where $g$ is defined in \prettyref{eq:g}. Since $W=X_0+Z_0$ and $V \triangleq X_0-\sigma_0^2 Z_0$ are independent, we have $\Expect[X_0X_{n+1}] = \frac{\sigma_0^2}{1+\sigma_0^2} \Expect[WW_n]$ and the rest follows from \prettyref{eq:rho-opt}.
\end{itemize}
%	instead of individual power constraint, the total power constraint leads to the same asymptotic decoupling phenomenon.

The fact that linear control only achieves exponentially decaying correlation can also be understood from the perspective of contraction coefficient of KL divergence. Note that if all controllers are linear, then all input $X_i$'s to the AWGN channel are Gaussian. Recall the distribution-dependent contraction coefficient $\etaKL(Q)$ defined in \prettyref{eq:eta_fq}. For AWGN channel with noise variance $\sigma^2$ and Gaussian input with variance $P$, Erkip and Cover showed in \cite[Theorem 7]{EC98} that
 $\etaKL(\calN(\mu,P)) = \frac{P}{P+\sigma^2}$, which is strictly less than one. This results in exponentially small mutual information:
 \begin{align*}
 I(W;\hW) 
\leq & ~ I(W;Y_1) \prod_{i=2}^n \etaKL(\calN(\Expect[X_i],\var(X_i)))	\nonumber \\
\leq & ~ \frac{\log(1+\sigma_0^2)}{2} \prod_{i=2}^n \frac{\var(X_i)}{1+\var(X_i)} \leq \frac{\log(1+\sigma_0^2)}{2} \pth{\frac{E}{1+E}}^{n-1},
\end{align*}
 where the last step follows from \prettyref{eq:total-power} and the concavity and monotonicity of $x \mapsto \log \frac{x}{x+1}$. Together with the Gaussian rate-distortion function \prettyref{eq:Icor-g}, this implies $\rho(W,\hW)$ must vanish as $(\frac{E}{1+E})^n$ which agrees with \prettyref{eq:LM11-affine}.
 Therefore from a control-theoretic perspective, it is advantageous to design the controller to steer the output away from Gaussian, which requires, of course, non-linear control.

\subsection{Uniqueness of Gibbs measures}
\label{sec:gibbs}

In this section we rely on the notations and results from the theory of infinite-volume Gibbs measures; in particular we
assume familiarity with~\cite[Chapter 2]{HOG11}. Consider a $\mreals$-valued Markov random field $\{X_n: n\in\integers\}$ specified by
pairwise potentials $\Phi_j(x_j, x_{j+1})$. We assume that for every $k\in\integers$ and every $L\ge 1$ we have
$$ \int \exp\left\{- \sum_{j=k}^{k+L} \Phi_j(x_j, x_{j+1})\right\} dx_k \cdots dx_{k+L} < \infty\,. $$
This specification translates into requiring the conditional probabilities to be of the following form:
\begin{equation}\label{eq:gibbs1}
		P_{X_{k+1}^{k+L} | X_{-\infty}^k, X_{k+L+1}^{\infty}} \propto 
	\exp\left\{- \sum_{j=k}^{k+L} \Phi_j(x_j, x_{j+1})\right\} dx_k \cdots dx_{k+L}\,,
\end{equation}	
and in particular $X_n$ form a doubly-infinite Markov chain:
\begin{equation}\label{eq:gibbs2}
		\cdots - X_{-1} - X_0 - X_1 - \cdots 
\end{equation}	

One of the principal questions in Gibbs theory is: Do there exist
none, one or many joint distributions satisfying conditional probabilities~\eqref{eq:gibbs1}? 
Such a joint distribution is called a Gibbs measure consistent with the specification \prettyref{eq:gibbs1}.
It is believed that
the existence of multiple Gibbs measures corresponds to the existence of second-order phase transitions in physics (such as the Curie
temperature in ferromagnets).

A typical method for proving non-existence of multiple phases is the application of Dobrushin contraction, cf.~\cite{RLD70}.
Next we extend this technique to cases where Dobrushin contraction is not available
($\etaTV = 1$) by relying on the knowledge of the Dobrushin curve $\FTV$. Here is an illustration.

\begin{theorem} Suppose that potentials $\Phi_j$ are such that each conditional distribution~\eqref{eq:gibbs1} factors
	through the Gaussian channel, i.e. for each $k, L$ there exists a representation
	\begin{equation}\label{eq:gibbs4}
			P_{X_{k+1}^{k+L} | X_{-\infty}^k, X_{k+L+1}^{\infty}} = P_{X_{k+1}^{k+L}|Y}\circ P_{Y|X_k, X_{k+L+1}}\,,
\end{equation}	
	with $P_{Y|X_k, X_{k+L+1}}$ a two-dimensional Gaussian channel~\eqref{eq:mc2}. Then there may exist at most one
	joint distribution of $X_{-\infty}^{\infty}$ satisfying
	\begin{equation}\label{eq:gibbs3}
			\sup_{j\in\integers} \EE[|X_j|^2]  < \infty .
\end{equation}
\label{thm:gibbs}	
\end{theorem}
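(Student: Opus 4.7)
The plan is to extend Dobrushin's classical uniqueness argument, using the Dobrushin curve $\FTV$ from \prettyref{cor:Ftv-g} in place of the contraction coefficient $\etaTV$ (which under~\eqref{eq:gibbs3} equals $1$, as noted in Section~\ref{sec:subram}). Suppose $\mu$ and $\nu$ are two Gibbs measures consistent with~\eqref{eq:gibbs1} and both satisfying~\eqref{eq:gibbs3}, and set $M=\sup_{j\in\integers}\max\{\Expect_\mu[X_j^2],\Expect_\nu[X_j^2]\}<\infty$. For each $L\ge 1$ write $s_L=\TV(\mu_{X_{-L},X_L},\nu_{X_{-L},X_L})$. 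The goal is to show that $s_L=0$ for every $L$, for then the Markov specification pins down $P_{X_{-L+1}^{L-1}\mid X_{-L},X_L}$ identically under $\mu$ and $\nu$, every finite-dimensional marginal $\mu_{X_{-L}^L}=\nu_{X_{-L}^L}$ agrees, and uniqueness of the joint law follows.

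The heart of the argument is the contraction inequality
\[
s_L \;\le\; \FTV(s_{L+1},\,2M), \qquad L\ge 1.
\]
To establish it, I apply~\eqref{eq:gibbs4} to the window $[-L,L]$ with boundary $(X_{-L-1},X_{L+1})$, obtaining
\[
P_{X_{-L}^{L}\mid X_{-L-1},X_{L+1}} \;=\; P_{X_{-L}^{L}\mid Y}\circ P_{Y\mid X_{-L-1},X_{L+1}},
\]
where $Y=(X_{-L-1},X_{L+1})+Z$ with $Z\sim\calN(0,\vect{I}_2)$. Marginalizing $P_{X_{-L}^{L}\mid Y}$ onto the two extreme indices yields a common kernel $K'$ such that $\mu_{X_{-L},X_L}=K'\circ(\mu_{X_{-L-1},X_{L+1}}*P_Z)$ and analogously for $\nu$. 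Data-processing for total variation followed by \prettyref{cor:Ftv-g} (with $d=2$, $\cost(r)=r^2$, and the second-moment budget $\Expect_\mu[X_{-L-1}^2+X_{L+1}^2]+\Expect_\nu[X_{-L-1}^2+X_{L+1}^2]\le 4M$, so that $a=2M$) then delivers the displayed bound.

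Iterating the contraction $n$ times, using $s_{L+n}\le 1$ together with the monotonicity of $\FTV(\cdot,2M)$, gives $s_L\le \FTV^{(n)}(1,2M)$. The scalar iteration $x_{n+1}=\FTV(x_n,2M)$ starting from $x_0=1$ is exactly the recursion analyzed in \prettyref{prop:tv-lost} (an $n$-stage cascade of AWGN channels at constant second-moment budget $2M$), which shows $x_n\to 0$ at rate $O(1/\log n)$ per \prettyref{rmk:rate-awgn}. Letting $n\to\infty$ forces $s_L=0$ for every $L$, and the reduction described above concludes the proof.

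The main obstacle I anticipate is the contraction inequality itself: one must verify that the Gaussian-factored window kernel, after being marginalized to the two extreme indices, represents \emph{both} $\mu_{X_{-L},X_L}$ and $\nu_{X_{-L},X_L}$ as images of the Gaussian-smoothed two-point boundary laws through the \emph{same} channel, so that data-processing applies and the Dobrushin curve bounds the total variation. A related subtlety is that $\FTV(\cdot,2M)$ is not a uniform contraction near $0$; the crucial convergence $\FTV^{(n)}(1,2M)\to 0$ must therefore be extracted from the nontrivial iterate estimate in \prettyref{prop:tv-lost} rather than from a naive geometric-rate bound.
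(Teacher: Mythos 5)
Your proof is correct, and it reaches the paper's recursion by a somewhat different route. The paper follows Dobrushin's original coupling strategy: it invokes \prettyref{pr:dobr} to progressively refine a coupling $\pi$ of the two infinite-volume measures from the outside in, tracking the disagreement probability $\epsilon_N=\pi[X_{-N}^N\neq\tilde X_{-N}^N]$, and the moment hypothesis \eqref{eq:gibbs3} enters through the bound $\EE_\pi[\theta_c(|X_{\pm N}-\tilde X_{\pm N}|)]\le f(\epsilon_N)$ exactly as in the proof of \prettyref{th:coupling}. You dispense with couplings of the infinite system altogether: since both Gibbs measures are consistent with the same specification, \eqref{eq:gibbs4} exhibits the two boundary-pair marginals at level $L$ as push-forwards of the level-$(L+1)$ boundary-pair marginals through one and the same composite kernel (two-dimensional Gaussian smoothing followed by the common marginalized kernel $K'$), so ordinary data processing plus \prettyref{cor:Ftv-g} yields $s_L\le \FTV(s_{L+1},2M)$ directly. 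The quantitative core is identical in both arguments --- the two-dimensional Gaussian Dobrushin curve at a budget fixed by \eqref{eq:gibbs3}, iterated to zero via \prettyref{lmm:rate} / \prettyref{prop:tv-lost} --- and your final reduction from $s_L=0$ to equality of finite-dimensional marginals is sound, since the Markov specification determines $P_{X_{-L+1}^{L-1}\mid X_{-L},X_L}$ identically under both measures. What your marginal/data-processing formulation buys is a cleaner bookkeeping that makes the role of the common specification explicit; what the paper's coupling formulation buys is generality --- \prettyref{pr:dobr} is stated for arbitrary Wasserstein metrics, not just the trivial metric underlying total variation, which is why the paper reuses the same machinery verbatim in the noisy-circuits argument of \prettyref{sec:circuits}.
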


\begin{remark} Assumptions of \prettyref{thm:gibbs} guarantee that ``strengths'' of all links in~\eqref{eq:gibbs2} are uniformly
	upper-bounded. Thus we can see that on $\integers$ the only possibilities for a phase transition are: 1) when
	the links become asymptotically noiseless, or 2) when the (non shift-invariant) solutions are allowed to grow
	unbounded. This is in accord with known examples of systems with non-unique Gibbs measures: e.g., the
	asymptotically noiseless example in~\cite[Chapter 6]{HOG11}, or the non shift-invariant examples of 
	Spitzer-Cox and Kalikow in~\cite[Chapter 11]{HOG11}.\apxonly{\\ \textbf{TODO:} For
	the revision try to insert some simple sufficient condition that would imply~\eqref{eq:gibbs4}. Something like a
	uniform quadratic lower-bound $\Phi_j(x_j, x_{j-1}) \ge \epsilon (x_j-x_{j-1})^2$.}
\end{remark}

\begin{proof}

We recall the following idea due to Dobrushin \cite[Lemma 5]{RLD70}:
\begin{proposition}\label{pr:dobr} Let $\pi$ be any coupling of $P_{AB}$ to $Q_{AB}$ (i.e. $\pi_{ABA'B'}$ is $P_{AB}$ or $Q_{AB}$ when
	restricted to first pair or second pair).
		Assume also that for every $a$ and $a'$ we have\footnote{Here $W_\rho$ is a Wasserstein
		distance with respect to the metric $\rho$, analogously defined as in \prettyref{eq:w1} with the $L_1$ distance replaced by $\rho$.}
		$$ W_\rho(P_{B|A=a}, Q_{B|A=a'}) \le r(a,a') \,.$$
	Then there exists a coupling $\tilde \pi$ between $P_{AB}$ and $Q_{AB}$ such that $\tilde \pi_{A,A'} = \pi_{A,A'}$ and 
	$$ \EE_{\tilde \pi}[\rho(B,B')] \le \EE_\pi[r(A,A')]. $$
\end{proposition}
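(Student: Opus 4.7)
The plan is to construct $\tilde\pi$ by a disintegration argument: we will keep the given coupling $\pi_{A,A'}$ on the $A$-marginal and, conditionally on each pair $(A,A')=(a,a')$, glue in the \emph{optimal} transport plan that witnesses $W_\rho(P_{B|A=a}, Q_{B|A=a'})\le r(a,a')$.

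First, for each fixed pair $(a,a')$ the defining infimum of the Wasserstein distance guarantees a coupling $\nu_{a,a'}$ of $P_{B|A=a}$ and $Q_{B|A=a'}$ on the product space satisfying
\[
\int \rho(b,b')\, \nu_{a,a'}(\diff b,\diff b') \;\le\; r(a,a') + \epsilon,
\]
for any $\epsilon>0$; under mild regularity (Polish spaces and a lower semicontinuous $\rho$) one may in fact take $\epsilon=0$. I would then define
\[
\tilde\pi(\diff a,\diff b,\diff a',\diff b') \;\triangleq\; \pi_{A,A'}(\diff a,\diff a')\,\nu_{a,a'}(\diff b,\diff b').
\]

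Next I would verify the two required properties. The $(A,A')$-marginal of $\tilde\pi$ is $\pi_{A,A'}$ by construction. For the $(A,B)$-marginal, integrating out $(A',B')$ gives
\[
\int \pi_{A,A'}(\diff a,\diff a')\, P_{B|A=a}(\diff b) \;=\; P_A(\diff a)\,P_{B|A=a}(\diff b) \;=\; P_{AB}(\diff a,\diff b),
\]
since $\nu_{a,a'}$ has $P_{B|A=a}$ as its first marginal; the $(A',B')$-marginal matches $Q_{AB}$ by the symmetric argument. Finally,
\[
\EE_{\tilde\pi}[\rho(B,B')] \;=\; \EE_{\pi_{A,A'}}\!\left[\int \rho(b,b')\,\nu_{A,A'}(\diff b,\diff b')\right] \;\le\; \EE_\pi[r(A,A')] + \epsilon,
\]
and letting $\epsilon\downarrow 0$ yields the claimed bound.

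The one technical point is measurability: we need $(a,a')\mapsto \nu_{a,a'}$ to be a measurable kernel so that $\tilde\pi$ is a bona fide probability measure. This is the standard hurdle in any ``gluing'' construction of couplings and is handled by a measurable selection theorem (\eg Kuratowski--Ryll-Nardzewski) applied to the multifunction that assigns to $(a,a')$ the (closed, nonempty) set of near-optimal couplings of $P_{B|A=a}$ and $Q_{B|A=a'}$; alternatively, one may invoke the standard gluing lemma for couplings on Polish spaces directly. With that in hand the construction above goes through verbatim.
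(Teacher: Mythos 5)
Your construction is exactly the paper's (one-line) proof: fix the $(A,A')$-marginal $\pi_{A,A'}$ and conditionally glue in a (near-)optimal coupling of $P_{B|A=a}$ and $Q_{B|A=a'}$, then verify marginals and take expectations. The argument is correct, and your added remarks on $\epsilon$-optimal selections and measurability of the kernel $(a,a')\mapsto\nu_{a,a'}$ address the only technical points the paper leaves implicit.
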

\apxonly{Proof: Just take
$$ \tilde \pi_{B, B'|A=a, A'=a} = \mbox{best coupling of~} P_{B|A=a} \mbox{~to~} Q_{B|A=a'} $$}

	When $r(a,a') = c \rho(a,a')$ and $c<1$ (Dobrushin contraction), we can progressively refine the coupling at
	various points between two distributions $P$ and $Q$ and show that they must coincide. This is a brilliant idea
	of Dobrushin~\cite{RLD70}. We apply the same recursion here, except without relying on $c<1$.

	Suppose that there exist two distributions $P$ and $Q$ of $X_{-\infty}^\infty$ satisfying~\eqref{eq:gibbs4} and \eqref{eq:gibbs3}. 
	Let $E>0$ denote the left-hand side of \prettyref{eq:gibbs3}, \ie, 
	the common upper bound on the second moment of $X_j$. Given a coupling $\pi$ between $P$ and $Q$, that is
%		$$ \pi[X_{k}^n = a_k^n] = P[X_k^n = a_k^n], \quad \pi[\tilde X_{k}^n = a_k^n] = Q[X_k^n = a_k^n], $$
		$$ \pi_{X_{k}^n} = P_{X_k^n}, \quad \pi_{\tilde X_{k}^n} = Q_{X_k^n}, \quad k \leq n$$
	denote
	$$ \epsilon_N = \pi[X_{-N}^N \neq \tilde X_{-N}^N] \le 1, $$
	where $N\ge 1$ is large integer.
	
	Denote $x_{\pm N} = (x_N,x_{-N})$ and $|x_{\pm N}| = \sqrt{x_N^2+x_{-N}^2}$ its Euclidean norm.
	Using the factorization condition~\eqref{eq:gibbs4} and the data processing inequality for total variation, we have
	\begin{align*}
	& ~ \TV\pth{P_{X_{-N+1}^{N-1}|X_{\pm N}=a_{\pm N}}, Q_{X_{-N+1}^{N-1}|X_{\pm N}=b_{\pm N}}} \\
\leq & ~ \TV(P_{Y|X_{\pm N}=a_{\pm N}}, Q_{Y|X_{\pm N}=b_{\pm N}}) = \TV(\calN(a_{\pm N},\vect I_2),\calN(b_{\pm N},\vect I_2))	\nonumber \\
= & ~ \theta_c\left(|a_{\pm N}-b_{\pm N}|\right)\,,
\end{align*}
	where $\theta_c(u) = 1-2\sfQ(u/2)$, cf. Corollary~\ref{cor:Ftv-g}. Applying
	Proposition~\ref{pr:dobr} with $\rho(a,a')=\indc{a\neq a'}$ and $r(a,a')=\theta_c(|a-a'|)$, we can produce a new coupling $\pi'$ 
	so that $\pi_{X_{\pm N}, \tX_{\pm N}}' = \pi_{X_{\pm N}, \tX_{\pm N}}$ and 
	$$ \pi'[X_{-N+1}^{N-1} \neq \tilde X_{-N+1}^{N-1}] \le \EE_\pi[\theta_c(|X_{\pm N} - \tilde X_{\pm N}|) ] .$$
	In view of the moment constraint~\eqref{eq:gibbs3},
	we have
	\begin{equation}\label{eq:gibbs5}
		\EE_\pi[|X_{\pm N}|^2 + |\tilde X_{\pm N}|^2] = 	\EE_P[|X_{-N}|^2 + |X_N|^2] + \EE_Q[|X_{-N}|^2 + |X_N|^2] \le 4E .
	\end{equation}	
	Thus, as we noticed in the proof of Theorem~\ref{th:coupling}, the constraint~\eqref{eq:gibbs5} leads to
	$$ 
%	\EE_\pi[\theta_c(X_{j-1} + X_{j+1} - X_{j-1}' - X_{j+1}')] 
\EE_\pi[\theta_c(|X_{\pm N} - \tilde X_{\pm N}|) ]
	\le f(\pi[X_{\pm N} \neq \tilde X_{\pm N}]) \leq f(\epsilon_N), $$
	where the concave non-decreasing function $f$ is
	$$ f(t) = t \theta_c\left(\sqrt{8E\over t} \right)\,. $$
	Therefore, starting from any coupling $\pi$ which achieves $\epsilon_N$ we produced a new coupling $\pi'$ which
	achieves 
		$$ \epsilon_{N-1} \leq f(\epsilon_N)\,.$$
	As we have seen in the proof of \prettyref{prop:tv-lost}, Lemma~\ref{lmm:rate} shows that such iterations lead to $\epsilon_N$ decreasing to zero. 
	Hence for any $n$, starting with
	sufficiently large $N\gg n$, we have shown that $ \TV(P_{X_{-n}^n}, Q_{X_{-n}^n}) $ is arbitrarily small, hence zero. In
	other words, distributions $P$ and $Q$ have the same finite-dimensional marginals, and must therefore coincide.
\end{proof}

As one can see our proof crucially relies on the fact that boundary of the interval $[-N,N]$ on the chain
graph~\eqref{eq:gibbs2} always consists of two points $X_{\pm N}$ (see~\eqref{eq:gibbs5}). This is why a similar argument is not applicable to Markov random fields
on $\integers^2$, where the number of variables in the boundary of $[-N,N]^2$ grows with $N$. But in that case it is well-known that even for binary-valued $X$ there can exist multiple
Gibbs measures (the two-dimensional Ising model example).

\apxonly{
Remarks about existence of $\{f_j\}$ such that Gibbs measure is not unique:
\begin{itemize}
	\item If range of all $f_j$ are all equal and have finite set as range, then there can only be one Gibbs
	measure~\cite[Chapter 3]{HOG11}.
	\item Let $f_j=0$ for $j<0$, this effectively correspond to just considering the one-directional chain $(X_0,
	Y_0)-(X_1,Y_1)-\cdots$. Also consider the following choice
	\begin{equation}\label{eq:rl6}
			f_j(Y_j) = \begin{cases}	0, &|Y_j| \le b_j\\
			a_j \cdot \sign(Y_j), & |Y_j| > b_j 
		\end{cases} 
	\end{equation}		
		where we assume $a_n\to \infty$ (so that $X_j$ has unbounded range). For convenience let us also define
		$T_j = \sign X_j$ (so $T_j = 0, \pm 1$). Then a simple argument based
		on~\eqref{eq:rl2} shows
		\begin{enumerate}
			\item $P_{T_j | T_{\hat j}} = P_{T_j | T_{j-1}, T_{j+1}}$
			\item Flip-symmetry: $P_{T_j | T_{j-1}, T_{j+1}}(t_0 | t_{-1}, t_{+1}) = P_{T_j | T_{j-1},
			T_{j+1}}(-t_0 | -t_{-1}, -t_{+1})$
			\item Zero-symmetry: $P_{T_j | T_{j-1}, T_{j+1}}(t_0 | 0,0) = P_{T_j | T_{j-1},
			T_{j+1}}(-t_0 | 0,0)$
			\item By applying technique in~\cite[Theorem 1.33]{HOG11} and since $T_{-1}=0$ 
			we can extend the last property to
			\begin{equation}\label{eq:rl7}
					P_{T_j | T_{j+N}^\infty}(t_0 | 0, t_{j+N+1}, \ldots) = P_{T_j | T_{j+N}^\infty}(-t_0 | 0,
			t_{j+N+1}, \ldots) 
		\end{equation}			
		\end{enumerate}
		Therefore, for $\{T_j, j=0,\ldots\}$ we have a positive Markov specification on $\mathbb{Z}$ (as
		in~\cite[Chapter 10-11]{HOG11}). Correspondingly, every extremal Gibbs measure must be a Markov chain
		(i.e. $P_{T_j|T_0^{j-1}} = P_{T_j|T_{j-1}}$ and $P_{T_j|T_{j+1}^\infty} = P_{T_j|T_{j+1}}$) which is
		trivial on the tail $\sigma$-algebra -- this is~\cite[Theorem 10.21]{HOG11}.

		Now, suppose that for some Gibbs measure $\PP$ we have $\PP[\liminf |T_n|=1]>0$, then since
		$a_n\to\infty$ we must have
			$$ \EE[\liminf X_n^2] = +\infty $$
		but by Fatou's lemma this violates~\eqref{eq:rl3}. Hence, every Gibbs measure satisfies
		\begin{equation}\label{eq:rl8}
				\PP[T_n = 0\mbox{-- i.o.}] = 1 
		\end{equation}		
		Let us now define another Gibbs measure $\QQ[T_0^n = t_0^n] \eqdef \PP[T_0^n = -t_0^n]$.
		From~\eqref{eq:rl7} we have that for every $t_0^n$ we have
		$$ \PP[T_0^n=t_0^n, T_{n+1}=0] = \QQ[T_0^n=t_0^n, T_{n+1}=0]\,.$$
		Finally, from~\eqref{eq:rl8} we have
		$$ \PP[T_0^n=t_0^n] = \PP[T_0^n=t_0^n, \exists k>n: T_k=0] = \QQ[T_0^n=t_0^n, \exists k>n: T_k=0]
		= \QQ[T_0^n=t_0^n] $$
		and therefore $\PP=\QQ$. \textit{Consequently, we can not argue existence of several Gibbs measures by a
		method in~\cite[Chapter 6.1]{HOG11}, since ``spin-flip'' of every Gibbs measure equals itself!}
\end{itemize}
}

\subsection{Circuits of noisy gates}\label{sec:circuits}

A circuit is a directed acyclic graph emanating from
$n$ inputs $X_1, \ldots X_n$, going through multiple intermediate nodes (``gates'') and terminating at a final node
$W$. Each gate $i$ with inputs $S_i=(S_{i,1}, \ldots, S_{i,k})$ performs a simple operation $f_i(S_i)$ and produces an output,
which is then subjected to additive Gaussian noise, so that the output value $O_i$ of the $i\Th$ gate is given by 
\begin{equation}
	O_i = f_i(S_i) + Z_i\,, \qquad Z_i \sim \matn(0,1)\,.
	\label{eq:awgn-circuit}
\end{equation}
The outputs of the $i\Th$ gate are connected to the inputs $S_j$ of subsequent gates according to the graph.
The value of $W$ is the output of the last gate.

We say that the circuit computes the Boolean function
$F:\{0,1\}^n\to\{0,1\}$ with probability of error $\epsilon$ if
$$ \PP[F(x_1,\ldots, x_n) = g(W)] \ge 1-\epsilon\,, $$
for some $g:\mreals\to\{0,1\}$ and all binary vectors $x^n$.
We assume that all gates have at most $k$ inputs. We say that the function $F$ depends essentially on input $x_i$
if there exist $x, x'\in\mreals^n$ differing in the $i\Th$ coordinate {\it only}, such that
$$ F(x) \neq F(x')\,. $$

We show below that it is not possible to have small $\epsilon$, complicated $F$, large $n$ and small power
consumed by outputs of each gate:
\begin{equation}\label{eq:nc_cost}
		\EE[|f_i(S_i)|^2] \le P\,.
\end{equation}	
This is a natural extension of the well-studied model of binary symmetric
noise (bit flips) \cite{von1956probabilistic,pippenger1988reliable}. We note that even for the settings of binary symmetric channels (BSC), quite a few open questions
remain. For example, it is known that for each $k$ there exists a threshold of maximum tolerable
noise beyond which arbitrarily complex circuits are not possible \cite{HW91,evans2003maximum}. However, this threshold is generally unknown and is 
sensitive to whether 
BSCs have crossover probability exactly $\delta$ or $\le\delta$, cf.~\cite{unger2010better}, and 
whether the output of one gate is allowed to be used at one or multiple consequent gates, cf.~\cite{evans2003maximum}. 

\begin{proposition}\label{prop:noisy} For any signal-to-noise ratio $P>0$, any Boolean function $F$ essentially depending on $n$
	inputs, and any circuits of noisy $k$-input gates computing $F$, the probability of error satisfies
	\begin{equation}
	 \epsilon \ge {1-t^*_k\over 2} + o(1)\,, \qquad n\to \infty\,, 	
	\label{eq:noisy}
\end{equation}
	where 
	$$ t^*_k = \sup\{t: \FTV(kt \wedge 1) \ge t, 0\le t\le 1\} $$
	and $\FTV(t)$ is given by~\eqref{eq:Ftv-g} with $a=P$.
\end{proposition}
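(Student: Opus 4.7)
The plan is to propagate a lower bound on total variation backward through the circuit using the Dobrushin curve $\FTV$ of the AWGN channel (\prettyref{cor:Ftv-g} with $a=P$), then exploit the fact that the circuit depth must diverge with $n$.

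\textbf{Step 1 (reduction to two-point TV).} Since $F$ depends essentially on all $n$ inputs, pick an essential coordinate $i^*$ and $x,x'\in\{0,1\}^n$ differing only at position $i^*$ with $F(x)\neq F(x')$. The two-hypothesis testing bound gives
\[
2\epsilon \;\geq\; 1 - \TV(P_{W|X=x}, P_{W|X=x'}),
\]
so it suffices to show $\TV(P_{W|X=x}, P_{W|X=x'}) \leq t^{*}_{k} + o(1)$ as $n\to\infty$.

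\textbf{Step 2 (per-gate contraction plus fan-in).} For each gate $i$ with output $O_i=f_i(S_i)+Z_i$, the power bound \eqref{eq:nc_cost} implies $\bigl(P_{f_i(S_i)|X=x},\,P_{f_i(S_i)|X=x'}\bigr)\in\calG_{P}$, and \prettyref{cor:Ftv-g} then yields
\[
\TV(P_{O_i|X=x}, P_{O_i|X=x'}) \;\leq\; \FTV\!\bigl(\TV(P_{S_i|X=x}, P_{S_i|X=x'})\bigr).
\]
The main technical step is the fan-in subadditivity
\[
\TV(P_{S_i|X=x}, P_{S_i|X=x'}) \;\leq\; \sum_{\ell=1}^{k} \TV(P_{S_{i,\ell}|X=x}, P_{S_{i,\ell}|X=x'}) \;\leq\; k\,\max_{\ell}\TV(P_{S_{i,\ell}|X=x}, P_{S_{i,\ell}|X=x'}).
\]
This holds in the \emph{formula} case (each gate output is consumed only once), because the subcircuits feeding the $k$ predecessors use disjoint noise variables and are therefore conditionally independent given $X$, so the product-measure inequality $\TV(P_1\otimes P_2,Q_1\otimes Q_2)\le \TV(P_1,Q_1)+\TV(P_2,Q_2)$ applies to each pair of sibling subtrees; for general DAGs one unrolls to a formula, which at most multiplies the size but preserves the depth recursion.

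\textbf{Step 3 (iteration and depth argument).} Set $d^{(\ell)}=\sup_{i:\mathrm{depth}(i)\le \ell} \TV(P_{O_i|X=x}, P_{O_i|X=x'})$ and $\phi(t)\eqdef \FTV(kt\wedge 1)$. Step~2 gives the recursion $d^{(\ell+1)} \leq \phi(d^{(\ell)})$ with $d^{(0)} \leq 1$. Because $F$ depends essentially on all $n$ inputs and every gate has fan-in $\leq k$, the ancestor set of the output must contain at least $n$ distinct primary inputs, which forces the depth $D\geq \log_k n \to \infty$. Iterating, $\TV(P_{W|X=x}, P_{W|X=x'}) \leq \phi^{(D)}(1)$. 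The map $\phi$ is continuous, non-decreasing on $[0,1]$, fixes $0$, and by definition $t^{*}_{k}$ is its largest fixed point in $[0,1]$; for any $t_0\in(t^{*}_{k},1]$ we have $\phi(t_0)<t_0$, and the orbit $\phi^{(m)}(t_0)$ is monotone decreasing, bounded below by $t^{*}_{k}$, hence converges to a fixed point which by maximality equals $t^{*}_{k}$. Therefore $\phi^{(D)}(1) = t^{*}_{k}+o_n(1)$, and substituting into Step~1 completes the proof.

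\textbf{Main obstacle.} The crux is the joint-TV fan-in inequality in Step~2. In full generality it fails (e.g.\ diagonal vs.\ anti-diagonal distribution on $\{0,1\}^2$), so its validity here depends crucially on exploiting independence of the gate noises $\{Z_j\}$ through the acyclic structure of the circuit. Making this precise for general (non-formula) DAGs, without losing the factor $k$ that pins the threshold at exactly $t^{*}_{k}$, is the delicate part; a direct coupling that uses the same noise on both branches fails because continuous outputs almost never coincide, so the argument must go through a formula-style decomposition (or a more refined $\mate_\gamma$-based iteration from \prettyref{prop:mate}) rather than a naive union bound.
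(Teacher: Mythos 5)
There are two genuine gaps here, both at the points you yourself flag as delicate.

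\textbf{The fan-in step.} You correctly observe that the joint-TV subadditivity $\TV(P_{S_i|x},P_{S_i|x'})\le\sum_{\ell}\TV(P_{S_{i,\ell}|x},P_{S_{i,\ell}|x'})$ fails for dependent coordinates, but your proposed fix does not work: unrolling a DAG into a formula duplicates not only gates but also their noise variables $Z_j$, so the unrolled circuit has a different output distribution from the original and a bound for it proves nothing about the original. The paper sidesteps the issue entirely by never adding total variations. It builds, gate by gate via Dobrushin's coupling lemma (\prettyref{pr:dobr}), a \emph{single} coupling $\pi$ of the two circuit executions and tracks the quantities $t_i=\pi[O_i\neq O_i']$. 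Under one measure the union bound $\pi[S_i\neq S_i']\le\sum_{j\in N_i}\pi[O_j\neq O_j']\le k\max_{j\in N_i}t_j$ is valid no matter how the $O_j$ depend on one another, and the moment constraint \eqref{eq:nc_cost} together with the coupling argument from the proof of \prettyref{th:coupling} converts this into a refined coupling with $\tilde\pi[O_i\neq O_i']\le \FTV\big(\pi[S_i\neq S_i']\big)$ while preserving the joint law of all earlier variables. Replacing ``TV of a $k$-tuple'' by ``probability that the $k$-tuple differs under a coupling'' is the missing idea.

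\textbf{The combinatorial step.} You pick an arbitrary essential coordinate $i^*$ and invoke only that the circuit depth is at least $\log_k n$. That is not enough: if $X_{i^*}$ feeds directly into the output gate, then $S_{\mathrm{last}}$ contains $X_{i^*}$ itself, the relevant input discrepancy at the last gate equals $1$, and your recursion yields only $\TV(P_{W|x},P_{W|x'})\le\FTV(1)$, not $t_k^*+o(1)$. What is needed---and what the paper imports from Evans--Schulman---is that there exists an essential input such that \emph{every} path from it to $W$ has length at least $\log n/\log k$; one chooses $x,x'$ differing in that coordinate and then follows the argmax predecessor backward from $W$, composing $F_k(t)=\FTV(kt\wedge 1)$ at least $\log n/\log k$ times along that path. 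Relatedly, your level-wise recursion $d^{(\ell+1)}\le\phi(d^{(\ell)})$ with $d^{(\ell)}$ a supremum over all gates of depth at most $\ell$ is not self-consistent (it would force $d^{(\ell)}\le\phi(d^{(\ell)})$, contradicting $\phi(t)\le t$); the iteration must be run along a single backward path, as in the paper. Your Step 1 reduction and the fixed-point analysis showing $\phi^{(m)}(1)\to t_k^*$ are correct.
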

For three-input gates, the lower bound \prettyref{eq:noisy} is evaluated in Fig.~\ref{fig:noisy} as a function of $P$. 

\apxonly{
\begin{remark}
Note that $t^*_k$ is the fixed point of the mapping $t\mapsto \FTV(kt\wedge 1)$ on the unit interval. Therefore $t_1^*=0$ and $t_k^*=\FTV(1)=1-2\sfQ(1/\sqrt{P})$ for sufficiently large $k$. 
	
Note that the bound of Proposition~\ref{prop:noisy} for $P>\left(Q^{-1}(1/2 - 1/(2k))\right)^2$ is equivalent
to just the lower bound on the error introduced by the last stage:
 $$ P_e \ge {1- \FTV(1, P)\over 2} =  \sfQ(1/\sqrt{P})$$
	\label{rmk:noisy}
\end{remark}
}

\begin{figure}[ht]
	\centering
	\ifpdf
	\includegraphics[width=.6\textwidth]{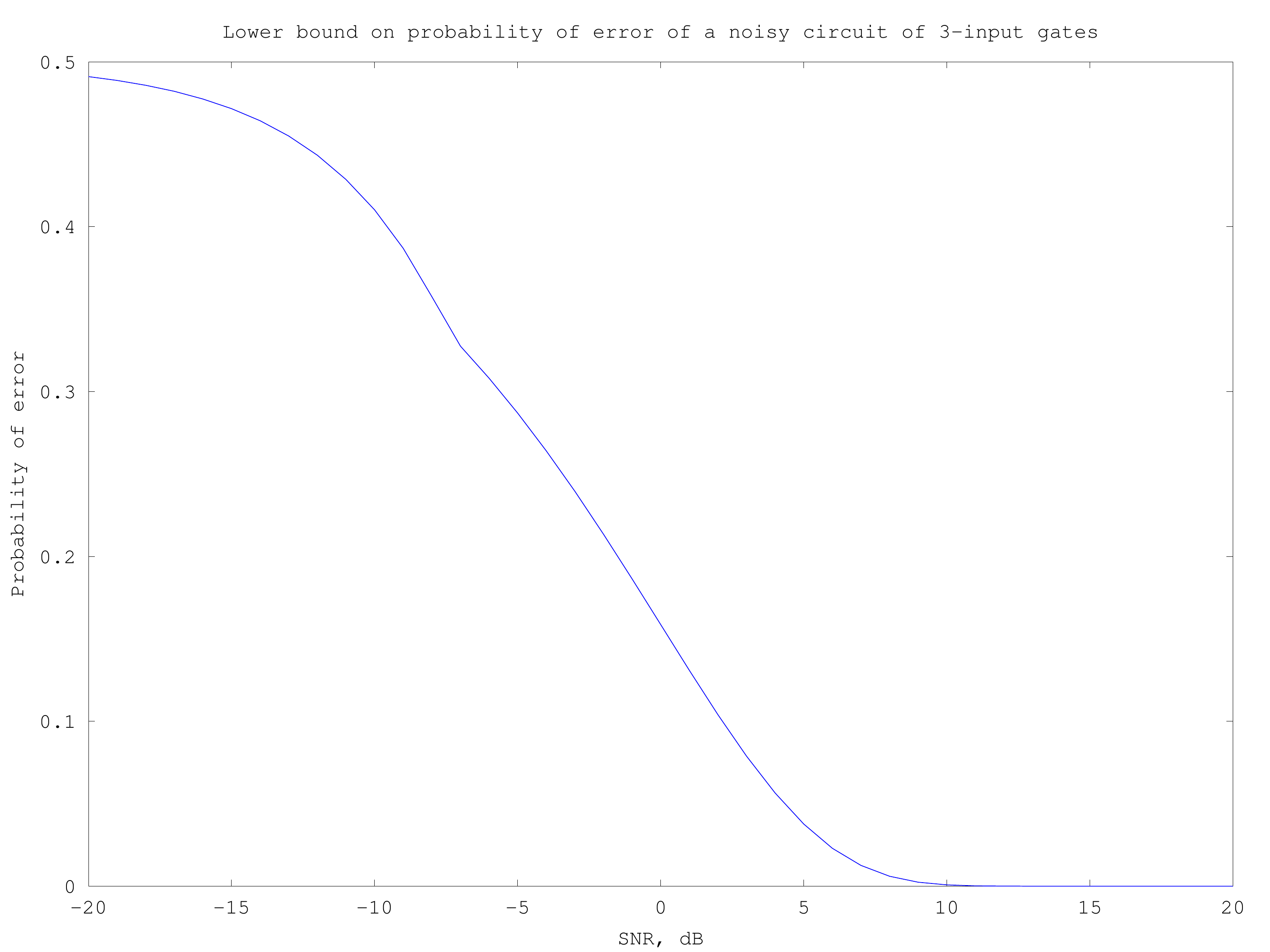}
	\else
	\includegraphics[width=.4\textwidth]{noisy_k3.eps}
	\fi
	\caption{Lower bound of probability of erroneous computation \prettyref{eq:noisy} versus signal-to-noise ratio for $k=3$.}
	\label{fig:noisy}
\end{figure}

\begin{proof} We recall a combinatorial fact shown in the proof of~\cite[Theorem 2]{evans1999signal}: 
	For every Boolean function $F$ essentially depending on $n$ inputs, and for every circuit that computes $F$ with probability of error strictly less than $\frac{1}{2}$,	there must exist at least one input, say $X_1$, such that \emph{every} path from $X_1$ to $W$ has length at least
	\begin{equation}\label{eq:pathlen}
			\ell \ge {\log n\over \log k}\,.  
	\end{equation}	
	Since $F$ essentially depends on $X_1$, we can assume, without loss of generality, that 
	$$ F(0, 0,\ldots, 0) \neq F(1, 0, \ldots, 0). $$
	
	Note that the random variables in the circuit consist of the inputs $X=(X_i)$, inputs $S=(S_i)$ and outputs $O=(O_i)$ of the gates, and the final output $W$, which is equal to some $O_i$.
	To simplify notation, let $O_0 = X_1$. Denote the neighbors of the gate $i$ by 
	$$N_i = \{j \geq 0: O_j \in S_i\},$$ whose outputs serve as inputs to gate $i$. Then $|N_i| \leq k$ by assumption.
	Without loss of generality, we assume that all gates are numbered so that $i\Th$ gate's inputs all come 
	from gates with indices strictly less than $i$. Then $N_i \subset \{0,\ldots,i-1\}$ by construction.
	
	Consider now probability distributions $P$ and $Q$ of all random variables in the circuit, such that
	under $P$ we have $X_1=0$ and under $Q$ we have $X_1=1$, while $X_2 = \cdots=X_n=0$ under both. 
	The idea is to progressively build
	coupling between $P$ and $Q$ to show that 
	\begin{equation}\label{eq:nctv1}
			\TV(P_W, Q_W) \le t^*_k + o(1)\,, 
\end{equation}	
	from which the desired lower bound \prettyref{eq:noisy} follows.

	To prove \prettyref{eq:nctv1}, suppose that there is a joint distribution  $\pi$ 
	such that 
	$$(X, O, S, W) \sim P, \quad (X', O', S', W') \sim Q,$$
	 i.e. $\pi$ is a coupling of $P$ to $Q$. 
	Consider an arbitrary gate $i$ with input $S_i$ and
	output $O_i$. In view of the noise model \prettyref{eq:awgn-circuit}, 
	the proof of Theorem~\ref{thm:gibbs} shows
	that the moment constraint~\eqref{eq:nc_cost} enables us to use \prettyref{pr:dobr}
	to build another coupling $\tilde\pi$, such that 
	a) $(X, X', O_{<i}, O'_{<i}, S_{\le i}, S_{\le i}')$ have identical joint
	distribution under either $\pi$ or $\tilde \pi$,
	and b) at the $i\Th$ gate we have 
	\begin{equation}\label{eq:nc_a1}
			\tilde\pi[O_i \neq O_i'] \le \FTV(\pi[S_i \neq S_i'])\,. 
\end{equation}
	Recall that $X_2=X_2'=\ldots=X_n=X'_n=0$ under $\pi$. Then 
	$S_i$ is determined by the outputs of the neighboring gates and possibly $O_0=X_1$, collectively denoted by $\{O_{j}: j \in N_i\}$. By the union bound, we have
	$$ \pi[S_i \neq S_i'] \le k \max_{j \in N_i} \pi[O_j \neq O'_j]\,.$$
	So if we introduce the function
	\begin{equation*}\label{eq:nc_a0}
%			F_k(t) \eqdef \min(1, \FTV(k t \wedge 1), t) ,
			F_k(t) \eqdef \FTV(k t \wedge 1),
\end{equation*}	
	then we can relax~\eqref{eq:nc_a1} to
	\begin{equation}\label{eq:nc_a2}
		\tilde\pi[O_i \neq O_i'] \le F_k\Big(\max_{j \in N_i} \pi[O_j \neq O'_j]\Big).
	\end{equation}	
	
	Now, let $\pi_0$ be the trivial (independent) coupling. Since $X_1=1$ and $X_1'=0$ under $\pi_0$, we have $\pi_0[O_0 \neq O_0'] = 1 \triangleq t_0$.
	Consider the first gate, whose inputs can be either $X_1$ or
	constants. Applying the previous construction yields a coupling $\pi_1$ such that 
	$$ t_1 \triangleq \pi_1[O_1 \neq O_1'] \le F_k(1)\,. $$
	Here $t_1$ measures the quality of coupling at the output of the first gate. 
	Next, suppose that all gates $<i$ are similarly coupled by $\pi_{i-1}$ with respective $t_1,\ldots,
	t_{i-1}$. We refine the coupling at gate $i$ to get $\pi_i$, so that 
	a) the joint distribution of $(O_{<i}, O'_{<i})$ and hence $t_1,\ldots,
	t_{i-1}$ are unchanged,
	and b)
	\begin{equation}\label{eq:nc_a3}
			t_i \triangleq \pi_i[O_i \neq O_i'] \le F_k\Big(\max_{j \in N_i} t_j\Big)\,,
	\end{equation}	
	which follows from \prettyref{eq:nc_a2}.
%	By~\eqref{eq:nc_a0} we have that $t_i \le \max t_j$. 
	Continuing similarly, we arrive at the last gate which outputs $W$. Now let us construct a path from $W$ back
	to $X_1=O_0$ as follows: starting from $W$ go back from gate $i$ to the neighboring gate $j<i$ that achieves $\max_{j \in N_i} t_j$. 
	Let $m$ be the length of this path and let the indices 
	(in increasing order) be 
		$$ i_0=0< i_1=1 < i_2 < \ldots <i_m .$$ 
		By~\eqref{eq:pathlen} we must have $m = \Omega(\log n)$. 
		By construction of the path, we have $t_0=1,t_1 \leq F_k(1)$, $t_{i_2}
		\le F_k(t_1)$, etc. So finally
		$$ \pi[W\neq W'] \le t_{i_m} \leq F_k\Big(\max_{j \in N_i} t_j\Big) = F_k(t_{i_{m-1}}) \leq \ldots \le \underbrace{F_k \circ F_k \cdots \circ F_k}_{m\text{~times}}(1).$$
	Hence as $n\diverge$ this repeated composition of $F_k$'s must converge to a fixed point $t_k^*$, thus
	proving~\eqref{eq:nctv1}. 
\end{proof}

\subsection{Broadcasting on trees}\label{sec:trees}

Consider the setting studied in~\cite{EKPS00}: the original bit $W=\pm1$ is to be broadcasted along the binary
tree of noisy channels:

$$ \def\vupdown{\vbox to 0pt{\vskip -14pt\hbox{$\upto$}\hbox{$\downto$}}}
	\begin{array}{ll}
	& Y_{2,2} \to X_{2,2} \vupdown \quad \cdots \\
				W \to X_{1,1} \vupdown &  \\
				& Y_{2,1} \to X_{2,1} \vupdown \quad \cdots
			\end{array}
	$$
where arrows $X\to Y$ represent independent noisy channels and $Y\to X$ are relays. The goal is to design the 
relay functions so that for some $\epsilon>0$ one can 
reconstruct $W$ with probability of error at most ${1\over2}-\epsilon$ based on the values at the $n\Th$ layer $\{X_{n,1},\ldots,X_{n,{2^{n-1}}}\}$ for all sufficiently large $n$; to wit, the total variation of the distributions conditioned on $W=1$ or $-1$ is strictly bounded away from zero.
One of the main results of~\cite{EKPS00} is that when all channels are BSC with flip probability $\delta$ such broadcasting is possible if
and only if $2(1-2\delta)^2 > 1$, thus establishing a certain ``phase transition'' in this problem.

In fact, the impossibility part of the BSC result follows from a result of Evans and Schulman~\cite{evans1999signal}:
for a binary tree of discrete channels the probability of error tends to $1\over2$ as the depth tends to infinity
whenever $2\etaKL < 1$. For Gaussian channels we know that $\etaKL=1$ which suggests that such transition \textit{does
not occur} for a tree of Gaussian channels. Indeed, in this section we demonstrate that  it is possible to broadcast
some information to arbitrarily deep layers regardless of how small the SNR is.

Specifically, consider channels
$$ Y_{k,j} = X_{k-1, j} + Z_{k, j}\,, \qquad Z_{k,j} \iiddistr \matn(0,1) $$
with cost constraint 
\begin{equation}\label{eq:bt0a}
		\EE[|X_{k,j}|^2] \le E\,. \qquad \forall k,j
\end{equation}
Choose the initial (randomized) encoder as follows:
$$ X_{1,1} = \mu B W, \qquad \PP[B=1] = 1-\PP[B=0] = 2p, B\dperp W, $$
with parameters $p, \mu$ to be specified later. Similar to the scheme in \prettyref{sec:ach}, choose relays as follows:
$$ X_{k,j} = \begin{cases} +\mu, & Y_{k,j} \ge t\mu\,,\\
	0, &|Y_{k,j}| < t\mu\,,\\
	-\mu, & Y_{k,j} \le -t\mu\,,
\end{cases}  $$
where $t \in (1/2, 1)$ can be set arbitrarily. Notice that if $\mu$ is selected so that
\begin{equation}\label{eq:bt0}
		p = {Q(t \mu) \over Q((1-t) \mu) + 2Q(t\mu) - Q((1+t)\mu) } 
\end{equation}
then a simple computation shows that for all $k,j$ we have
\begin{equation}
	\PP[X_{k,j} = +\mu] = \PP[X_{k, j}=-\mu] = p\,.
	\label{eq:xkj}
\end{equation}  
But from~\eqref{eq:bt0} and the fact that $t>{1/2}$ for large $\mu$ we get
$$ p = e^{-\mu^2 (t-1/2) +O(1) }\,, \qquad \mu \to \infty .$$
In particular, regardless of how small $E$ in~\eqref{eq:bt0a} is and for any $t$, there exists a sufficiently large $\mu$ 
such that the cost constraint is satisfied. Another important parameter turns out to be
$$ \theta = 1-Q((1-t) \mu) - Q((1+t) \mu).$$
Again, taking $\mu$ large we may ensure
\begin{equation}\label{eq:bt0b}
		2\theta^2 > 1. 
\end{equation}	
Thus we assume from now on that $p,\mu$ and $t$ are selected in such a way that both~\eqref{eq:bt0a} and~\eqref{eq:bt0b} are
satisfied.

Similarly to~\cite{EKPS00} we will employ the idea of T. Kamae, see~\cite[Remark on p. 342]{YH77}, and consider the behavior
of ``spin sums'':
$$ S_k = \sum_{j=1}^{2^{k-1}} \sigma_{k, j}\,, $$
where $\sigma_{k, j} \triangleq \sign(X_{k, j})$ with $\sign(0)=0$, or equivalently, $\sigma_{k, j}=X_{k, j}/\mu$. To show that it is possible to test $W=\pm 1$ based on the statistic $S_n$, we show that
\begin{equation}\label{eq:bt1}
	\liminf_{n\to \infty} \TV(P_{S_n|W=+1}\| P_{S_n|W=-1}) \ge 2p  \left(1 - {1\over 4\theta^2}\right)\,,
\end{equation}
which is strictly positive. According to~\cite[Lemma 4.2 (i) and (iii)]{EKPS00} we have:
$$ \TV(P_{S_n|W=+1}\| P_{S_n|W=-1}) \ge {(\EE[S_n|W=+1] - \EE[S_n|W=-1])^2\over 4 \EE[S_n^2]}. $$
So the estimate~\eqref{eq:bt1} follows from two results:
\begin{align} 
\EE[S_n | W= \pm 1]  &= \pm 2p(2\theta)^{n-1} \label{eq:Sn1} ,\\
\EE[S_n^2] &\le 2^{n} p + 2 p \frac{(2\theta)^{2n}}{(2\theta)^2 - 1} \label{eq:Sn2}.
%2^{n-1} p^2 + (2\theta)^{2n-2} {2\theta^2\over 2\theta^2 - 1} \label{eq:Sn2}.
\end{align}
Both of these are verified below: 
Consider two arbitrary nodes $(k,j)$ and
$(k,j')$ at the $k\Th$ level and let $(u,i)$ be their common ancestor in the tree. Denote the parent node of $(k,j)$ by $(k-1,j'')$.
Then
$$ \EE[\sigma_{k,j} | \sigma_{u,j'}] = \Expect[\EE[\sigma_{k,j} | \sigma_{k-1,j''}] | \sigma_{u,j'}] =  \theta \, \EE[\sigma_{k-1,j''} | \sigma_{u,j'}] = \ldots =
 \theta^{k-u} \sigma_{u,i}. $$
 Furthermore, $\sigma_{k,j}$ and $\sigma_{k,j'}$ are independent conditioned on $\sigma_{u,i}$.
Note that $\Expect[\sigma_{1,1}|W=\pm 1] = \pm \Prob[B=1] = \pm 2p$, which yields \prettyref{eq:Sn1}. Next, note that
$\Expect[S_n^2] = \sum_{j=1}^{2^{k-1}} \Expect[\sigma^2_{n, j}] + 2 \sum_{j'<j} \Expect[\sigma_{n, j} \sigma_{n, j'}]$, where the first term is $2^{n-1}\times 2p$ since $\sigma^2_{k,j} \sim \Bern(2p)$ in view of \prettyref{eq:xkj}. To estimate the cross term, denote the depth of the common ancestor of $(n,j)$ and $(n,j')$ by $u(j,j') \in \{1,\ldots, n-1\}$. Then
\begin{align}
\sum_{j'<j} \Expect[\sigma_{n, j} \sigma_{n, j'}]
= & ~ \sum_{u=1}^{n-1} \sum_{u(j',j)=u} \Expect[\sigma_{n, j} \sigma_{n, j'}]	 = \sum_{u=1}^{n-1} \sum_{u(j',j)=u} \theta^{2(n-u)} 2 p	\\
= & ~ 2 p \sum_{u=1}^{n-1} \theta^{2(n-u)} \binom{2^{n-u}}{2}  \leq p \frac{(2\theta)^{2n}}{(2\theta)^2-1},
\end{align}
which yields \prettyref{eq:Sn2}.

%Also in estimating $\Var[S_n]$ you rewrite summation as first over the common ancestor (which can be any node at levels
%$1,\ldots, n-1$) and then over $2\cdot(2^{(n-u-1)})^2$ choices of $j,j'$).

\section*{Acknowledgment}
It is a pleasure to thank Max Raginsky (UIUC) for many helpful discussions and Flavio du Pin Calmon (MIT) for 
Proposition~\ref{prop:ITV}.

\appendices

\section{Convergence rate analysis}
	\label{app:rate}
	
	Consider the following iteration
\[
%t_{n+1} = t_{n}(1 - \phi(t_n)), \quad t_1=1
t_{n+1} = t_{n} - h(t_n), \quad t_1=1
\]
%where $\phi:\reals_+ \to [0,1]$ is a strictly increasing concave function such that $\phi(0)=0$. 
where $h:[0,1] \to [0,1]$ satisfies $h(0)=0$ and $0 < h(t) \leq t$ for all $0<t\leq 1$. 
Then $\{t_n\}\subset[0,1]$ a monotonically decreasing sequence converging to the unique fixed point zero as $n\diverge$. 
Under the monotonicity assumption of the function $h$, the following result gives a non-asymptotic upper estimate of this sequence.
	\begin{lemma}
Define $G: [0,1] \to \reals_+$ by $G(t) = \int_t^1 \frac{1}{h(\tau)} \diff \tau$. If $h$ is increasing, then for any $n\in \naturals$,
\begin{equation}
	t_n \leq G^{-1}(n-1).
	\label{eq:tn-rate}
\end{equation}	
	\label{lmm:rate}
\end{lemma}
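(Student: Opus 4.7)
The plan is to compare the discrete recursion to the continuous-time ODE $x'(s)=-h(x(s))$ with $x(0)=1$, whose solution satisfies $G(x(s))=s$, i.e., $x(s)=G^{-1}(s)$. The goal is then to show the comparison $t_n \le G^{-1}(n-1)$ directly via a telescoping argument on $G(t_n)$, without ever solving the ODE explicitly.

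The first step is to note that $G$ is well-defined, continuous and strictly decreasing on $(0,1]$ with $G(1)=0$ and $G(0^+)=+\infty$ (the latter because $h(\tau)\le \tau$ forces $1/h(\tau)\ge 1/\tau$, whose integral to $0$ diverges), so $G^{-1}:[0,\infty)\to(0,1]$ exists and is itself strictly decreasing. Also, the sequence $\{t_n\}$ stays in $(0,1]$: by hypothesis $0<h(t_n)\le t_n$ gives $0\le t_{n+1}<t_n\le 1$, and a simple induction rules out $t_n=0$ (since $t_n>0$ implies $t_{n+1}=t_n-h(t_n)\ge 0$, and if it were $0$ then $h(t_n)=t_n$, which we can either exclude by the strict inequality $h(t)<t$ needed for the lemma to be non-vacuous, or handle as an edge case where $t_{n+1}=0$ trivially satisfies the bound).

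The key step, which is the heart of the argument, is the one-step estimate
\begin{equation*}
G(t_{n+1}) - G(t_n) \;=\; \int_{t_{n+1}}^{t_n} \frac{d\tau}{h(\tau)} \;\ge\; \int_{t_{n+1}}^{t_n} \frac{d\tau}{h(t_n)} \;=\; \frac{t_n - t_{n+1}}{h(t_n)} \;=\; 1,
\end{equation*}
where the inequality uses exactly the monotonicity hypothesis $h(\tau)\le h(t_n)$ for $\tau\in[t_{n+1},t_n]$, and the final equality uses the defining recursion $t_n-t_{n+1}=h(t_n)$.

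Telescoping this inequality from $1$ to $n-1$ and using $G(t_1)=G(1)=0$ yields $G(t_n)\ge n-1$. Applying the decreasing function $G^{-1}$ to both sides reverses the inequality and gives $t_n\le G^{-1}(n-1)$, which is the claimed bound \eqref{eq:tn-rate}. There is no real obstacle here: the only subtle point is ensuring that the monotonicity hypothesis is used in the correct direction (lower-bounding $1/h$ on the integration interval by its value at the right endpoint $t_n$), which is precisely where the assumption that $h$ is increasing enters.
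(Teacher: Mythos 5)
Your proof is correct and is essentially identical to the paper's: both establish the one-step bound $G(t_{n-1}) - G(t_n) \ge \frac{t_{n-1}-t_n}{h(t_{n-1})} = 1$ via the monotonicity of $h$ and then telescope from $G(t_1)=0$. The ODE-comparison framing and the extra care about well-definedness of $G^{-1}$ are fine but add nothing beyond the paper's argument.
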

\begin{proof}
By the positivity and monotonicity of $h$, $G$ is a strictly decreasing and concave function. Hence $G^{-1}: \reals_+ \to [0,1]$ is well-defined.
Put $b_n=G(t_n)$. Then % by the concavity of $G$,
	\begin{align}
	b_n - b_{n-1} = \int_{t_n}^{t_{n-1}} \frac{1}{h(\tau)} \diff \tau \geq \frac{t_{n-1} - t_n}{h(t_{n-1})} = 1.
\end{align}
Hence $b_n\geq n-1$ since $b_1 = G(1)=0$.
\end{proof}

\section{Contraction coefficient for mutual information: General case}
	\label{app:etaKL}
	We shall assume that $P_X$ is not a point mass, namely, there exists a measurable set $E$ such that $P_X(E) \in (0,1)$. 
	Define
\[
\etaKL(P_X) = \sup_{Q_X} \frac{D(Q_{Y} \| P_{Y})}{D(Q_X\|P_X)}
\]
where the supremum is over all $Q_X$ such that $0<D(Q_X\|P_X) < \infty$. It is clear that such $Q_X$ always exists (\eg, $Q_X = P_{X|X\in E}$ and $D(Q_X\|P_X) = \log \frac{1}{P_X(E)} \in (0,\infty)$). Let
\[
\eta_I(P_X) = 	\sup \frac{I(U; Y)}{I(U;X)}
\]
where the supremum is over all Markov chains $U\to X \to Y$ with fixed $P_{XY}$ such that $0<I(U;X)<\infty$. Such Markov chains always exist, \eg, $U=\indc{X\in E}$ and then $I(U;X)=h(P_X(E)) \in (0,\log 2)$.
The goal of this appendix is to prove~\eqref{eq:etaKL}, namely
$$ \etaKL(P_X) = \eta_I(P_X)\,.$$

The inequality $\eta_I(P_X) \leq \etaKL(P_X)$ follows trivially:
$$ I(U;Y) = D(P_{Y|U} \| P_Y |P_U) \le \etaKL(P_X) D(P_{X|U}\|P_X | P_U) = \etaKL(P_X) I(X;U)\,.$$

For the other direction, fix $Q_X$ such that $0<D(Q_X\|P_X) < \infty$. First, consider the case where $\fracd{Q_X}{P_X}$
is bounded, namely, $\fracd{Q_X}{P_X} \leq a$  for some $a>0$ $Q_X$-a.s. 
For any $\epsilon \leq \frac{1}{2a}$, let $U \sim \Bern(\epsilon)$ and define the probability measure $\tilde P_X = \frac{P_X-\epsilon Q_X}{1-\epsilon}$. Let $P_{X|U=0} = \tilde P_X$ and $P_{X|U=1} = Q_X$, which defines a Markov chain $U \to X \to Y$ such that $X,Y$ is distributed as the desired $P_{XY}$. Note that 
\[
\frac{I(U; Y)}{I(U;X)} = \frac{\bar \epsilon D(\tilde P_Y\|P_Y) + \epsilon D(Q_Y\|P_Y)}{\bar \epsilon D(\tilde P_X\|P_X) + \epsilon D(Q_X\|P_X)}
\]
where $\tilde P_Y  = P_{Y|X} \circ \tilde P_X$.
We claim that 
\begin{equation}
	D(\tilde P_X\|P_X) = o(\epsilon),
	\label{eq:oepsilon}
\end{equation} which, in view of the data processing inequality $D(\tilde P_X\|P_X) \leq D(\tilde P_Y\|P_Y)$, implies $\frac{I(U; Y)}{I(U;X)} \xrightarrow{\epsilon \downarrow 0} \frac{D(Q_Y\|P_Y)}{D(Q_X\|P_X)}$ as desired. 
To establish \prettyref{eq:oepsilon}, define the function
$$ f(x,\epsilon) \eqdef \begin{cases} {1-\epsilon x\over \epsilon(1-\epsilon)} \log{1-\epsilon x\over 1-\epsilon}\,, &
\epsilon > 0 \\
	(x-1) \log e, &\epsilon=0\,. \end{cases} $$
One easily notices that $f$ is continuous on $[0,a]\times[0,{1\over 2a}]$ and thus bounded. So we get, by
bounded convergence theorem,
$$ \frac{1}{\epsilon} D(\tilde P_X\|P_X) = \EE_{P_X}\left[f\left(\fracd{Q_X}{P_X}, \epsilon\right) \right] \to  \EE_{P_X}\left[\fracd{Q_X}{P_X}-1\right] \log e = 0\,. $$

To drop the boundedness assumption on $\fracd{Q_X}{P_X}$ we simply consider the conditional distribution $ Q_X' \eqdef Q_{X|X\in A}$
where $A=\{x: \fracd{Q_X}{P_X}(x) < a\}$ and $a>0$ is sufficiently large so that $Q_X(A) > 0$. Clearly, as $a\diverge$, we have $Q_X' \to Q_X$ and $Q'_Y \to Q_Y$
pointwise, where $Q'_Y \triangleq P_{Y|X} \circ Q'_X$. Hence the lower-semicontinuity of divergence yields
$$ \liminf_{a\to\infty} D(Q_Y'\|P_Y) \ge D(Q_Y\|P_Y)\,.$$
Furthermore, since $\fracd{Q'_X}{P_X} = \frac{1}{Q_X(A)} \fracd{Q_X}{P_X} \Indc_{A}$, we have
\begin{align}
D(Q'_X\|P_X)
= & ~ \log \frac{1}{Q_X(A)} + \frac{1}{Q_X(A)} \EE_Q\left[\log \fracd{Q_X}{P_X} \mathbf{1}\left\{\fracd{Q_X}{P_X} \leq a\right\}\right].
\end{align}
Since $Q_X(A)\to 1$, by dominated convergence (note: $\EE_Q[|\log \fracd{Q_X}{P_X}|] < \infty$) we have
$ D(Q'_X\|P_X) \to D(Q_X\|P_X) $.
Therefore,
$$ \liminf_{a\to\infty} {D(Q_Y'\|P_Y)\over D(Q_X'\|P_X)} \ge {D(Q_Y\|P_Y)\over D(Q_X\|P_X)}\,,$$
completing the proof.

%\bibliographystyle{alpha}
%\bibliography{IEEEabrv,strings,reports,wu_big}

\begin{thebibliography}{DMLM03}

\bibitem[AG76]{AG76}
R.~Ahlswede and P.~G{\'a}cs.
\newblock Spreading of sets in product spaces and hypercontraction of the
  {M}arkov operator.
\newblock {\em Ann. Probab.}, pages 925--939, 1976.

\bibitem[AGKN13]{AGKN13}
Venkat Anantharam, Amin Gohari, Sudeep Kamath, and Chandra Nair.
\newblock On maximal correlation, hypercontractivity, and the data processing
  inequality studied by {E}rkip and {C}over.
\newblock {\em arXiv preprint arXiv:1304.6133}, 2013.

\bibitem[Ahl07]{ahlswede.tao}
R.~Ahlswede.
\newblock {The final form of {Tao}'s inequality relating conditional
  expectation and conditional mutual information}.
\newblock {\em Advances in Mathematics of Communications (AMC)}, 1(2):239--242,
  2007.

\bibitem[BC05]{BC05}
A.D. Barbour and L.H.Y. Chen.
\newblock {\em An introduction to {S}tein's method}.
\newblock World Scientific, Singapore, 2005.

\bibitem[Ber71]{berger}
Toby Berger.
\newblock {\em Rate Distortion Theory: A Mathematical Basis for Data
  Compression}.
\newblock Prentice-Hall, Englewood Cliffs, NJ, 1971.

\bibitem[CIR{\etalchar{+}}93]{CIR93}
J.E. Cohen, Yoh Iwasa, Gh. Rautu, M.B. Ruskai, E.~Seneta, and Gh. Zbaganu.
\newblock Relative entropy under mappings by stochastic matrices.
\newblock {\em Linear algebra and its applications}, 179:211--235, 1993.

\bibitem[CKZ98]{CKZ98}
J.~E. Cohen, J.~H.~B. Kempermann, and Gh. Zb\u{a}ganu.
\newblock {\em Comparisons of Stochastic Matrices with Applications in
  Information Theory, Statistics, Economics and Population}.
\newblock Springer, 1998.

\bibitem[CPW15]{CPW15}
F.~Calmon, Y.~Polyanskiy, and Y.~Wu.
\newblock Strong data processing inequalities in power-constrained {G}aussian
  channels.
\newblock In {\em Proc. 2015 IEEE Int. Symp. Inf. Theory (ISIT)}, Hong Kong,
  CN, June 2015.

\bibitem[CRS94]{CRS94}
M.~Choi, M.B. Ruskai, and E.~Seneta.
\newblock Equivalence of certain entropy contraction coefficients.
\newblock {\em Linear algebra and its applications}, 208:29--36, 1994.

\bibitem[Csi67]{IC67}
I.~Csisz\'ar.
\newblock Information-type measures of difference of probability distributions
  and indirect observation.
\newblock {\em Studia Sci. Math. Hungar.}, 2:229--318, 1967.

\bibitem[Csi95]{IC95}
I.~Csisz{\'a}r.
\newblock Generalized cutoff rates and {R}enyi's information measures.
\newblock {\em {IEEE} Trans. Inf. Theory}, 41(1):26 --34, January 1995.

\bibitem[Csi96]{Csiszar96}
Imre Csisz{\'a}r.
\newblock Almost independence and secrecy capacity.
\newblock {\em Prob. Peredachi Inform.}, 32(1):48--57, 1996.

\bibitem[DMLM03]{MLM03}
P.~Del~Moral, M.~Ledoux, and L.~Miclo.
\newblock On contraction properties of {M}arkov kernels.
\newblock {\em Probab. Theory Relat. Fields}, 126:395--420, 2003.

\bibitem[Dob56]{RLD56}
R.~L. Dobrushin.
\newblock Central limit theorem for nonstationary {M}arkov chains. {I}.
\newblock {\em Theory Probab. Appl.}, 1(1):65--80, 1956.

\bibitem[Dob70]{RLD70}
R.~L. Dobrushin.
\newblock Definition of random variables by conditional distributions.
\newblock {\em Theor. Probability Appl.}, 15(3):469--497, 1970.

\bibitem[EC98]{EC98}
Elza Erkip and Thomas~M. Cover.
\newblock The efficiency of investment information.
\newblock {\em {IEEE} Trans. Inf. Theory}, 44(3):1026--1040, 1998.

\bibitem[EKPS00]{EKPS00}
William Evans, Claire Kenyon, Yuval Peres, and Leonard~J Schulman.
\newblock Broadcasting on trees and the {I}sing model.
\newblock {\em Ann. Appl. Probab.}, pages 410--433, 2000.

\bibitem[ES99]{evans1999signal}
William~S Evans and Leonard~J Schulman.
\newblock Signal propagation and noisy circuits.
\newblock {\em {IEEE} Trans. Inf. Theory}, 45(7):2367--2373, 1999.

\bibitem[ES03]{evans2003maximum}
William~S Evans and Leonard~J Schulman.
\newblock On the maximum tolerable noise of $k$-input gates for reliable
  computation by formulas.
\newblock {\em {IEEE} Trans. Inf. Theory}, 49(11):3094--3098, 2003.

\bibitem[Geo11]{HOG11}
Hans-Otto Georgii.
\newblock {\em Gibbs measures and phase transitions}, volume~9.
\newblock Walter de Gruyter, 2011.

\bibitem[Hig77]{YH77}
Yasunari Higuchi.
\newblock Remarks on the limiting {G}ibbs states on a $(d+1)$-tree.
\newblock {\em Publ. RIMS Kyoto Univ.}, 13(2):335--348, 1977.

\bibitem[HUL96]{HUL96}
Jean-Baptiste Hiriart-Urruty and Claude Lemar{\'e}chal.
\newblock {\em Convex Analysis and Minimization Algorithms I: Fundamentals}.
\newblock Springer, Berlin, Germany, 1996.

\bibitem[HW91]{HW91}
Bruce Hajek and Timothy Weller.
\newblock On the maximum tolerable noise for reliable computation by formulas.
\newblock {\em {IEEE} Trans. Inf. Theory}, 37(2):388--391, Mar. 1991.

\bibitem[Led99]{MLxx}
M.~Ledoux.
\newblock Concentration of measure and logarithmic {S}obolev inequalities.
\newblock {\em Seminaire de probabilites XXXIII}, pages 120--216, 1999.

\bibitem[LM11]{LM11}
Gabriel~M. Lipsa and Nuno~C. Martins.
\newblock Optimal memoryless control in {G}aussian noise: A simple
  counterexample.
\newblock {\em Automatica}, 47(3):552--558, 2011.

\bibitem[LR06]{LR06}
Erich~L. Lehmann and Joseph~P. Romano.
\newblock {\em Testing statistical hypotheses}.
\newblock Springer Science \& Business Media, 2006.

\bibitem[OV00]{OV00}
F.~Otto and C.~Villani.
\newblock Generalization of an inequality by {Talagrand} and links with the
  {Logarithmic Sobolev Inequality}.
\newblock {\em Journal of Functional Analysis}, 173(2):361--400, 2000.

\bibitem[Pin05]{Pinsker05}
Mark~S. Pinsker.
\newblock On estimation of information via variation.
\newblock {\em Prob. Peredachi Inform.}, 41(2):71--75, 2005.

\bibitem[Pip88]{pippenger1988reliable}
Nicholas Pippenger.
\newblock Reliable computation by formulas in the presence of noise.
\newblock {\em {IEEE} Trans. Inf. Theory}, 34(2):194--197, 1988.

\bibitem[Pro04]{singular.translation}
V.~Prokaj.
\newblock {A Characterization of Singular Measures}.
\newblock {\em Real Analysis Exchange}, 29:805--812, 2004.

\bibitem[PV10]{PV10-ari}
Y.~Polyanskiy and S.~Verd\'u.
\newblock Arimoto channel coding converse and {R}\'enyi divergence.
\newblock In {\em Proc. 2010 48th Allerton Conference}. Allerton Retreat
  Center, Monticello, IL, USA, September 2010.

\bibitem[PW15]{yuryITA}
Yury Polyanskiy and Yihong Wu.
\newblock Strong data-processing of mutual information: beyond {A}hlswede and
  {G}\'acs.
\newblock In {\em Proc. Information Theory and Applications Workshop}, San
  Diego, CA, February 2015.

\bibitem[Rag13]{Raginsky13}
Maxim Raginsky.
\newblock Logarithmic {S}obolev inequalities and strong data processing
  theorems for discrete channels.
\newblock In {\em 2013 IEEE International Symposium on Information Theory
  Proceedings (ISIT)}, pages 419--423, 2013.

\bibitem[Rag14]{Raginsky14}
Maxim Raginsky.
\newblock Strong data processing inequalities and $\phi$-sobolev inequalities
  for discrete channels.
\newblock {\em arXiv preprint arXiv:1411.3575}, November 2014.

\bibitem[Sar58]{OS58}
O.~V. Sarmanov.
\newblock A maximal correlation coefficient.
\newblock {\em Dokl. Akad. Nauk SSSR}, 121(1), 1958.

\bibitem[Sib69]{RS69}
R.~Sibson.
\newblock Information radius.
\newblock {\em Z. Wahrscheinlichkeitstheorie und Verw. Geb.}, 14:149--161,
  1969.

\bibitem[SM62]{SM62}
S.~Kh. Sirazhdinov and M.~Mamatov.
\newblock On convergence in the mean for densities.
\newblock {\em Theory of Probability \& Its Applications}, 7(4):424--428, 1962.

\bibitem[Str65]{strassen.marginal}
V.~Strassen.
\newblock {The existence of probability measures with given marginals}.
\newblock {\em Annals of Mathematical Statistics}, 36(2):423--439, 1965.

\bibitem[Sub12]{Subramanian12}
R.~Subramanian.
\newblock The relation between block length and reliability for a cascade of
  {AWGN} links.
\newblock In {\em Proc. 2012 Int. Zurich Seminar on Communications (IZS)},
  pages 71--74, Feb. 2012.

\bibitem[SVL13]{SVL13}
R.~Subramanian, B.N. Vellambi, and I.~Land.
\newblock An improved bound on information loss due to finite block length in a
  {G}aussian line network.
\newblock In {\em Proceedings of 2013 IEEE International Symposium on
  Information Theory}, Istanbul, Turkey, Jul. 2013.

\bibitem[Tao06]{tao.szemeredi}
T.~Tao.
\newblock {Szemer\'edi}'s regularity lemma revisited.
\newblock {\em Contributions to Discrete Mathematics}, 1(1):8--28, 2006.

\bibitem[Ung10]{unger2010better}
Falk Unger.
\newblock Better gates can make fault-tolerant computation impossible.
\newblock In {\em Elect. Colloq. Comp. Complexity (ECCC)}, volume~17, page 164,
  2010.

\bibitem[vN56]{von1956probabilistic}
John von Neumann.
\newblock Probabilistic logics and the synthesis of reliable organisms from
  unreliable components.
\newblock {\em Automata studies}, 34:43--98, 1956.

\bibitem[Wit75]{HW75}
H.S. Witsenhausen.
\newblock On sequences of pairs of dependent random variables.
\newblock {\em SIAM J. Appl. Math.}, 28:100--113, 1975.

\bibitem[Wu11]{hwi}
Yihong Wu.
\newblock A simple proof of the {G}aussian {HWI} inequality and extensions.
\newblock {\em Preprint}, Sep. 2011.

\bibitem[WV12]{mmse.functional.IT}
Yihong Wu and Sergio Verd\'u.
\newblock Functional properties of {MMSE} and mutual information.
\newblock {\em {IEEE} Trans. Inf. Theory}, 58(3):1289 -- 1301, Mar. 2012.

\end{thebibliography}
\newcommand{\etalchar}[1]{$^{#1}$}

\ifmapx
	\pagebreak
	\section{\Large\bf APXONLY STUFF}
	\subsection{How far can we send stuff?}

	Goal: Build a chain of relays to send 1 bit from our solar system to another one. 
	
	Model: Each
	receiver has noise level $N_0 \approx kT$ (with $T=4K$ for background radiation).
	Transmitters have powers $P_k$ and are spaced at distances $r_k$ (so that $SNR_j =
	\mathrm{const}\cdot{P_k\over r_k^2 N_0}$).

	We have shown previously that taking $P_k=r_k=1$ does not lead to reliable
	communication. I.e. $\TV(P_\infty, Q_\infty)=0$.

	Question: Is it possible that we have simultaneously:
	\begin{align}
		\TV(P_\infty, Q_\infty) &> 1/2\label{eq:gal1}\\
		\sum_{k=1}^\infty r_k &= \infty\label{eq:gal2}\\
		\sum_{k=1}^\infty P_k &< \infty 
	\end{align}	
	In words: is it possible to pass a bit infinitely far with finite
	total-energy-per-bit?

	Answer: Yes. Indeed, according to~\eqref{eq:Ftv} at each iteration the total variation contracts by at most
		$$ 1-2Q({\sqrt{P_k}\over r_k}) $$
		(I assume there is a scheme to achieve this).
	Set
		$$ P_k = {1\over k \ln k}, \quad r_k = {c\over \sqrt{k} \ln k}\,. $$
	Then we have for all $c < {1\over2}$ 
		$$ \prod_{k=1}^\infty (1-2Q({\sqrt{P_k}\over r_k}))  > 0 $$
	since
	$$ \sum_{k=1}^\infty Q({\sqrt{P_k}\over t r_k})) \sim 
	\sum_{k=1}^\infty e^{-{1\over 2c} \ln k} < \infty $$
	whenever ${1\over 2c}>1$. 

	Furthermore, evidently the $\sum_{k=1}^\infty P_k$ can be made arbitrarily small,
	implying a bit can be transferred to infinity with zero energy expended, provided
	there are relays along the path. \textbf{TODO:} Doesn't it contradict minimum
	energy-per-bit concept?

\subsection{Open questions}
	\begin{enumerate}
		\item Strong conjecture: there is a positive function
	$f:(0,1)\to(0,1)$ such that for all $(X,Y)$
		$$ I(X;Y) \ge \epsilon_0 \implies \sup_{E} I(1_E(X); Y) \ge
		f(\epsilon_0)>0 $$
	%
	%	Note that figuring out ``the most informative bit'' is a Courtade-Weissman
	%	challenge. But here we only need the fact that there is always a
	%	non-trivial lower bound on ``informativeness''.
	%
	%Note: this will imply the needed result since then
	%	$$ I(X_0; Y_n) \not \to 0 \implies I(1_{E_n}(X_0); Y_n) \not \to 0
	%	\implies T(1_{E_n}; Y_n) \not \to 0 \implies \mbox{fucked}$$
	%	because $T(1_{E}(X_0); Y_n) \to 0$ uniformly in $E$.
	This conjecture is \textbf{FALSE}. Consider, $X=Y$ and $P_X=[1-\delta,
	{\delta\over k}, \ldots, {\delta\over k}]$ with $k=\delta 2^{1\over \delta}$. Then
		$$ I(X;Y) = H(X) = \log 2 - (1-\delta) \log(1-\delta) \ge
		1\mbox{~bit}\,.$$
		On the other hand, it is easy to show that
		$$ \sup_{B\to X\to Y, B\in\{0,1\}} I(B; Y) = \sup_{q:\matx\to\{0,1\}} I(q(X); Y) $$
		On the other hand, for the current example
		$$ I(q(X); Y) = H(q(X)) \le h(\delta)\,.$$

	\item Find
		$$ F_I(t) = \sup_{U-X-Y: I(U;X)\le t} I(U;Y) $$
		For binary $U$ the question boils down to contraction of some funny $f$-divergence and $F_I(t)<t$
		follows from integral representation via $\mate_\gamma$'s. For general, not clear what to do.

		\item Another problem: If we are interconnected by AWGNs of infinite dimension (and a bound on total
		energy) then we \textit{do not know} whether $I(W; Y_n)\to0$ or not. This is because our upper
		bound~\eqref{eq:main3} is of the form $d^2 \cdot E$ as opposed to $d\cdot E$. So our $\TV\to I$
		technique should be changed for infinite dimension. Is this related to Rakhlin's fetish with
		Kolmogorov-packings of infinite dimensional spaces shit?

		Later: Also note that $d^2 \cdot E$ corresponds to dimension of $X_1$, not subsequent $X_j$'s.

		\item Our result on circuits suggests there is no phase transition for AWGNs (like for BSC: $\delta>1/6$
		implies $P_e \to 1/2$ as $n\to\infty$). 
		So we need to make von~Neumann's construction for AWGN noise and show that $P_e <1/2$ no matter
		how low SNR is.
	\end{enumerate}

\subsection{Dobrushin and hyper-contractivity}
	Here is a funny argument I noticed. Let $P_Z$ be noise on some abelian group and suppose that 
	$$ \|P_Z - U_Z\|_{TV} < 1/2 $$
	where $U_Z$ is the Haar measure on the group. Then the channel
		$$ X \to X+Z$$
	is TV-contractive since 
		$$ \etaTV = \max_v \|P_{Z+v} - P_Z\|_{TV} \le 2\|P_Z - U_Z\|_{TV} < 1 $$
	Thus by~\cite{CKZ98} we also have
		$ \etaKL < 1 $ and by Ahslwede-G\'acs the operator
		$$ Tf(x) \eqdef f*P_Z $$
	is hypercontractive $L_p(P_Y) \to L_q(P_X)$ for EVERY $P_X$! This may be quite tough to show in general,
	while condition $\|P_Z-U_Z\|_{TV}<1/2$ is really nice. 

	\textbf{Question:} Is there some hypercontractivity implication we may extract from Dobrushin curve?

\fi
	
\end{document}